\documentclass{llncs}
\usepackage{listings}
\lstset{numbers=left,breaklines=true}

\usepackage{amssymb,amsfonts,amsmath}    
\usepackage{graphicx}    
\usepackage{epsfig}
\usepackage{lhelp}
\usepackage[show]{ed}
\usepackage{multirow}
\usepackage{verbatim}
\usepackage{latexsym}  
\usepackage{url}      
\usepackage{mathrsfs}
\usepackage{enumerate}
\usepackage[ruled,vlined,linesnumbered]{algorithm2e}
\usepackage{pbox}


\def\Q {\ensuremath{\mathbb{Q}}}

\newtheorem{Notation}{Notation}
\newtheorem{Theorem}{Theorem}
\newtheorem{Proposition}{Proposition}
\newtheorem{Definition}{Definition}
\newtheorem{Lemma}{Lemma}
\newtheorem{Corollary}{Corollary}

\newtheorem{Remark}{Remark}
\newtheorem{Example}{Example}









\newcommand{\MI}[1]{\mbox{{\rm MRI}$(#1)$}}
\newcommand{\pii}{{{\tt PlainInvInterp}}}
\newcommand{\mii}{{{\tt ModpInvInterp}}}


\newif\ifcomment\commentfalse

\newif\ifdraft
\draftfalse

\newif\ifbibtex
\bibtextrue

\newif\ifheight
\heightfalse

\newif\ifCoRR
\CoRRtrue

\newcommand{\wm}{{weakly multiplicatively}}



\renewenvironment{proof}
{{\bf {\sc Proof}:} }
   {$\triangleup$\\}
\renewenvironment{proof}
{{\bf {\sc Proof}:} }
  {$\square$\\}

\begin{document}

\ifCoRR

\title{Generating Program Invariants via Interpolation}

\else

\title{Generating Loop Invariants via Polynomial Interpolation}

\fi

\author{Marc Moreno Maza\inst{1} \and Rong Xiao\inst{2}} 
\institute{
University of Western Ontario, Canada  ~\email{moreno@csd.uwo.ca}
\and 
University of Western Ontario, Canada ~\email{rong@csd.uwo.ca}}



\maketitle


\begin{abstract}
This article focuses on automatically generating
polynomial equations that are 
inductive loop invariants of computer programs.
We propose a new algorithm for this task, 
which is based on polynomial interpolation.
Though the proposed algorithm is not complete, 
it is efficient and can be applied to a broader range 
of problems compared to existing methods targeting similar problems.
The efficiency of our approach
is testified by experiments on a large
collection of programs.
The current implementation of our method is based on dense interpolation,
for which a total degree bound is needed. On the theoretical front, 
we study the degree and dimension of the invariant ideal
of loops which have no branches and where   
the assignments define a $P$-solvable recurrence.
In addition, we obtain sufficient conditions 
for non-trivial polynomial equation invariants to exist (resp. not to exist).
\end{abstract}

\section{Introduction}

Many researchers have been using computer algebra 
to compute polynomial loop invariants, 
see for instance~\cite{SSM04,TRSS01,RK04a,RK04b,MS04a,MS04b,Kap05,CXYZ07,CJ07,RK07a,RK07b,Kov08}.
In this article, we propose an alternative method, based on 
interpolating polynomials at finitely many points on 
the reachable set of the loop under study.
This interpolation process\footnote{
Note that polynomial interpolation is different from
the interpolation in~\cite{KV09b}, which is called
\emph{Craig interpolation} in first order logic.}
 yields ``candidate loop invariants'' 
which are checked  by a new criterion based on 
polynomial ideal membership testing. 

Our paper proposes the following original results.
On the theoretical front, 
for  $P$-solvable loops with no branches, we supply a sharp degree 
bound (Theorem~\ref{thm: shape-degree})
for the invariant ideal, as well as dimension analysis (Theorem~\ref{thm: dimension}), 
of the invariant ideal.
We establish a new criterion (Corollary~\ref{coro:criterion})
based on polynomial system
solving for checking whether or not a given conjunction of polynomial equations
is indeed a loop invariant. 
Meanwhile, Corollary~\ref{cor: not-exist} states a 
sufficient condition for the invariant ideal of a loop
to be trivial.

On the algorithmic front, we propose a modular method 
(Algorithm~\ref{alg: mii})
for generating polynomial
       loop invariants. Thanks to polynomial interpolation, most of our
       calculations reduce to linear algebra.
As a consequence, the proposed method works in time $n^{d^{\mathcal{O}(1)}}$,
       where $n$ is the number of loop variables and $d$ is the total degree of 
       polynomials to interpolate.

 Our method is probabilistic and may not compute the whole 
       invariant ideal. However, the implementation (in Maple) 
of our method   computes all the invariants given 
         in each example proposed by Enric Rodr\'iguez Carbonell 
         on his page\footnote{\url{ http://www.lsi.upc.edu/~erodri/webpage/polynomial_invariants/list.html}}.
  Moreover, the degree and dimension estimates can help certifying that 
         whole invariant ideal has been obtained. For instance 
         in co-dimension one, the invariant ideal is necessarily principal.

Our method needs not to 
        solve the recurrence relations associated with the loops
        and thus does not need to manipulate the algebraic numbers
        arisen as eigenvalues of these recurrence relations.
        Therefore, all polynomials and matrices involved
        in our method have their coefficients in the base field.
Our method applies to all loops which can be modeled as
algebraic transition systems~\cite{SSM04} and can be
generalized to handle loops which can be modeled as
       semi-algebraic transition systems~\cite{CXYZ07}. 
It can be applied to compute all 
       kinds of invariants 
\ifCoRR
(see the notions of different loop invariants presented 
in Section~\ref{sec:preliminaries}), 
\else
(see the notions of different loop invariants presented 
in Section~\ref{sec:preliminaries} of 
our technical report~\cite{DBLP:journals/corr/abs-1201-5086}),
\fi
which is not the case for the methods
       based on "recurrence solving" ~\cite{RK07a,Kov08}. 
In particular,  the methods in~\cite{RK07a} and~\cite{Kov08} 
       apply only to compute absolute inductive invariants,  
       that is, the loop guard and branch conditions are
ignored (thus the loop goes to the branches randomly).
\ifCoRR
       This means that those methods can not find loop 
       invariants which are not absolute inductive invariants.
\fi

Our implementation is tested against 
the implementation of $3$ 
other methods~\cite{RK07a,RK07b,Kov08}
which were kindly made available to us by their authors.
The experimental results are shown in Section~\ref{sec:construction}.
While the performance of our method is comparable to that of \cite{RK07b}
on the tested (somehow simple) examples,
our method has less restrictive specifications than
the methods of~\cite{RK07a,Kov08} which target
only on absolute polynomial loop invariants for $P$-solvable loops/solvable mappings.
In addition, the method in~\cite{RK07a} applies only
when all assignments are invertible and the
eigenvalues of coefficient matrix for the 
linear part are positive 
\ifCoRR
rationals, and it is
claimed to be complete, that is, to compute all absolute polynomial loop invariants,
 when it is applicable. 
\else
rationals.
\fi
The method in \cite{Kov08}  can be applied all $P$-solvable 
loops in theory and is complete for loops without branches.
However, in the implementation,  the assignment
in the loop are required to be not ``coupled'' together.

Let us conclude this introduction
with a brief review of other works on loop invariant computation.
In~\cite{Karr76}, linear equations as invariants of a linear program at 
each location is considered, by tracking the reachable states
with a  method based on linear algebra. 
In \cite{MS04b}, the method in~\cite{Karr76}
is improved and generalized to generate polynomial 
equations as invariants; it is also
shown there that checking whether or not a linear equation is invariant 
is undecidable in general.
 In \cite{MS04a}, for polynomial programs, 
the Authors discuss 
methods  based on abstract 
interpretation, on checking 
whether or not a given polynomial equation is invariant,
as well as generating all polynomial invariants to a given total degree.
In \cite{RK04b,RK07b}, a different abstract interpretation
 technique is developed, which uses polynomial ideal operations 
(e.g. intersection, quotient)
as widen operators. In \cite{Kap05} and \cite{CXYZ07}, 
quantifier elimination techniques are used to infer invariants 
from a given template;
these methods requires expertise on supplying meaningful templates, while 
the complexity of quantifier elimination also restrict their practical efficiency.
In~\cite{SSM04}, Gr\"obner basis together with linear constraint solving is used to 
infer polynomial equations as invariants.

\section{Preliminaries}
\label{sec:preliminaries}
Let $\mathbb{Q}$ denote the rational numbers and
$\overline{\mathbb{Q}}$ the algebraic closure of $\mathbb{Q}$.
Let $\mathbb{Q}^*$ (resp. $\overline{\mathbb{Q}}^*$) denote the non zero elements in $\mathbb{Q}$ (resp. $\overline{\mathbb{Q}}$) .

\ifCoRR
\subsection{Notions on loop and loop invariants}

We will use the following simple loop (in {\sc Maple}-like syntax) to
introduce some notions related to loop and loop invariant
that we are going to use.

\begin{center}
    \fbox{
     \parbox[b]{0.2\textwidth} {
    {\bf 
\begin{tabbing}
$x := a$;\\
$y := b$;\\
whil\=e $ x<10$ do \\
 \> $x := x+y^5$;\\
 \> $y := y+1$;\\
end do;
\end{tabbing}
}}
}
\end{center}

A \emph{loop variable} of a loop is a 
variable that is either
updated in the loop; or used to initialize/update the values of  other loop variables, e.g. $x,y,a,b$ are loop variables. Without loss of generality, 
we assume that all variables take only rational number values, i.e. from $\mathbb{Q}$.
By \emph{initial values} of a loop, we mean all possible tuples of 
the loop variables before executing the loop; the set of 
the initial values of the above loop is 
$$\{(x,y,a,b)\mid x=a, y=b, (a, b) \in \mathbb{Q}^2\}.$$ 
Given an initial value $\vec{v}$, 
the \emph{trajectory} of the loop starting at   $\vec{v}$, 
is the sequence of all tuples of 
of loop variable values 
\emph{at each entry} of the loop during the execution, with 
the loop variable being initialized by  $\vec{v}$; 
the trajectory of the above loop starting at $(x,y,a,b)=(1,0,1,0)$ is 
$$(1,0,1,0), (1,1,1,0),(2,2,1,0),(34,3,1,0).$$
 The collection of value tuples of all trajectories 
is called \emph{reachable set} of the loop. 
Note that, in general, it is  hard to describe 
a reachable set of a loop precisely. 
A \emph{loop invariant} (or \emph{plain loop invariant})
of a loop is a condition on the loop
variables satisfied by all the values  in the reachable set of the loop.

By  \emph{inductive reachable set} of a loop, we mean the 
reachable set of the loop while ignoring the guard condition, 
while by \emph{absolute reachable set} of a loop, we mean the 
reachable set of the loop while ignoring the guard conditions, the branch
conditions and viewing branches to be selected randomly. Then, by 
an \emph{inductive (loop) invariant} (resp. \emph{absolute (loop) invariant}) 
of a loop is a condition on the loop
variables satisfied by all the tuple values  
in the inductive (resp. absolute) reachable set of the loop.

It is easy to deduce that 
an absolute  invariant is always an inductive invariant, and 
an inductive invariant is always a loop invariant. 
In principle, absolute inductive invariants are easier 
to study and compute than the inductive invariants and plain invariants.
However, the absolute invariants 
can be trivial, which is not of practical interest to 
program analysis. See the following example~\cite{SSM04} for 
the case of a trivial absolute invariant while inductive loop invariants
in not trivial.
\begin{center}
    \fbox{
     \parbox[b]{0.2\textwidth} {
    {\bf 
\begin{tabbing}
$y_1 := 0$; \\
$y_2 := 0$; \\
$y_3 := x_1$; \\
whil\=e $y_2 \neq 0$ do \\
    \> if $y_2+1 = x_2$ \\
     \> then\= \\
    	 	\> \>   $y_1 := y_1 + 1$; \\
		\> \>   $y_2 := 0$; \\
		\> \>   $y_3 := y_3 - 1$; \\
       \> else \\ 
		\> \>   $y_2 := y_2 + 1$; \\
		\> \>   $y_3 := y_3 - 1$; \\
       \> end if\\
end do
\end{tabbing}
}}
}
\end{center}
 Indeed, the condition $ y_1 x_2 + y_2 + y_3 = x_1 $ is an inductive invariant of the above loop.
Note there are also loop invariants which are not inductive invariants, e.g. $x-1 =0$
is an invariant but not an inductive of the following loop.
\begin{center}
    \fbox{
     \parbox[b]{0.2\textwidth} {
    {\bf 
\begin{tabbing}
  $x := 1$; \\
whil\=e $x \neq 1$ do \\
   \> $ x  := x + 1$; \\
end do
\end{tabbing}
}}
}
\end{center}

On the other hand, the inductive invariants are less likely to be trivial 
and easier to handle than the loop invariants.
\fi
In this article, we are interested in the inductive invariants
that are given by polynomial equations
and that we call \emph{polynomial equation invariants}, 
or simply polynomial invariants when there is no possible confusion.
It is not hard to deduce that all polynomials that 
are inductive invariants (or loop invariants, or absolute invariants) 
of a loop form an ideal (which is indeed the ideal of the points in
the inductive reachable set), one can also refer~\cite{RK04}
for an alternative proof. We call the ideal of  polynomials 
which are inductive invariants of a given loop
the \emph{invariant ideal} of the loop.

In this paper, 
we consider loops 
of the following shape.

\begin{center}
    \fbox{
     \parbox[b]{0.2\textwidth} {
    {\bf 
\begin{tabbing}
whil\=e $C_0$ do \\
    \> if $C_1$ \= \\
     \> then \\
     \> \>   $X := A_1(X)$; \\
    \> elif $C_2$ \\
    \> then \\
    \> \>    $X := A_2(X)$; \\
      \>   $\cdots$ \\
    \> elif $C_m$ \\
       \> then \\ 
       \> \> $X := A_m(X)$; \\
       \> end if\\
end do
\end{tabbing}
}}
}
\end{center}
where 
\begin{enumerate}
\item $X=x_1,x_2,\ldots,x_s$ is a list of $s$ scalar loop variables,
taking values from $\mathbb{Q}$; 
\item the initial values of the loop are constrained by polynomial equations and polynomial inequations;
\item the $C_i$'s are pairwise exclusive
algebraic conditions (polynomial equations and polynomial inequations)
on $X$; 
\item the $A_i$'s are polynomial functions of $X$
with coefficients from $\mathbb{Q}$.
\end{enumerate}

In our loop model, when the loop 
body contains assignments only (thus no branches),
the assignment indeed induces a recurrence relation among the loop variables,
which are viewed as recurrence variables. In this case,
we shall simply refer to the loop as this recurrence
relation and the initial values of the loop. 
Here we will show briefly how
the loop invariant can be computed by explicitly solving 
the recurrence relation. Later, in our theoretical analysis, presented 
in Sections~\ref{sec:algebraicRelations},
we will focus on the study degree and dimension of the invariant ideal 
of such kind of loops, where the induced recurrence relation is so-called
$P$-solvable recurrence.

\ifCoRR
\begin{Example}
\label{ex: intro2}
Consider the loop computing the sequence of the Fibonacci numbers:
   \begin{center}
    \fbox{
\parbox{0.1\textwidth}{
    {\bf
\begin{tabbing}
    $y := 1$; \\ 
    $ x:=0$;\\
    while \=  true  do \\  
     \> $(x,y) := (y, x+y)$; \\
    end while \\
\end{tabbing}
}}}
\end{center}
Viewing $(x,y)$ as two recurrences variables, the loop is actually
computing the two recurrence sequences of values of $x$ and $y$
defined by the following recurrence relation and initial condition:

$$x(n+1)=y(n), y(n+1) = x(n) + y(n), \mbox{ with } x(0)=0,y(0)=1.$$
 We can write down the closed form for $x(n)$ and $y(n)$ as follows:
\begin{equation*}
\begin{array}{rcl}
x(n) &=& \frac{(\frac{\sqrt{5}+1}{2})^n}{\sqrt{5}} -\frac{(\frac{-\sqrt{5}+1}{2})^n}{\sqrt{5}}, \\
y(n) &=& \frac{\sqrt{5}+1}{2} \, \frac{(\frac{\sqrt{5}+1}{2})^n}{\sqrt{5}} -\frac{-\sqrt{5}+1}{2}\, \frac{(\frac{-\sqrt{5}+1}{2})^n}{\sqrt{5}}.
\end{array}
\end{equation*}
Let $a, u, v$ be $3$ variables.
Replace $(\frac{\sqrt{5}+1}{2})^n$ 
(resp.  $(\frac{-\sqrt{5}+1}{2})^n$)  
by $u$ (resp. by $v$); 
replace $\sqrt{5}$ by $a$. 
Taking the dependencies
$u^2 \, v^2 =1, a^2=5$ on the new variables into account,  
 the invariant ideal of the loop is 
$$\langle x-\frac{a u}{5}+\frac{a v}{5}, y-a \frac{a+1}{2} \frac{u}{5}+a \frac{-a+1}{2} \frac{v}{5}, a^2-5, u^2 v^2-1 \rangle  \; \cap \; \mathbb{Q}[x,y],$$
which turns out to be $\langle 1-y^4+2 x y^3+x^2 y^2-2 x^3 y-x^4 \rangle$.
%
%
\end{Example}
\fi

\ifCoRR
\subsection{Poly-geometric summation}

As we discussed in the previous subsection, 
the study of loops without branches can be reduced to 
the study of recurrence sequences. 
In this subsection, we recall several well-known 
notions together with related 
results adapted to our needs.

Those notions and results can usually be stated in a 
more general context, e.g. the notion
of multiplicative relation can be defined
among elements of an arbitrary Abelian group,
whereas we define it for a multiplicative group of algebraic numbers.
\else 
We recall now several well-known 
notions on recurrence relations together with related 
results adapted to our needs.
\fi

\begin{Definition}
    \label{def: poly-geo}
Let $\alpha_1, \ldots, \alpha_k$  
be $k$ elements of $\overline{\mathbb{Q}}^* \setminus \{1\}$.
Let $n$ be a variable taking non-negative integer values. 
We regard $n, \alpha_1^n, \ldots,\alpha_k^n$ 
as independent variables and we call $\alpha_1^n, \ldots,\alpha_k^n$
$n$-exponential variables.
Any polynomial of 
$\overline{\mathbb{Q}}[n, \alpha_1^n, \ldots,\alpha_k^n]$
is called 
a {\em poly-geometrical expression in} $n$ {\em over} $\overline{\mathbb{Q}}$
w.r.t. 
$\alpha_1, \ldots, \alpha_k$. 

Let $f,g$ be two poly-geometrical expressions  $n$ over $\overline{\mathbb{Q}}$
w.r.t.  $\alpha_1, \ldots, \alpha_k$.
Given a non-negative integer number $i$, 
we denote by  $f|_{n=i}$ the {\em evaluation} of $f$ at $i$,
which is obtained by 
substituting all occurrences of $n$ by $i$ in $f$.
We say that $f$ and $g$ are {\em equal} whenever $f|_{n=i} = g|_{n=i}$ 
holds for all  non-negative integer $i$.

We say that $f(n)$ is in {\em canonical form}
if there exist
\begin{enumerate}[$(i)$] 
      \item finitely many numbers 
             $c_1, \ldots, c_m \in \overline{\mathbb{Q}}^*$, and
      \item finitely many pairwise different couples
            $(\beta_1, e_{1}),  \ldots, (\beta_m, e_{m})$
            all in 
        $(\overline{\mathbb{Q}}^* \setminus \{1\}) \times \mathbb{Z}_{\geq 0}$, and
     \item a polynomial $c_0(n) \in {\Q}[n]$,
\end{enumerate}
such that each $\beta_1, \ldots, \beta_m$ is a product of some 
of the  $\alpha_1, \ldots, \alpha_k$  and such that 
the poly-geometrical expressions $f(n)$ 
and $\sum_{i=1}^m\,c_i \, \beta_i^n\,n^{e_{i}}\; +\; c_0(n)$ are equal.
When this holds, the polynomial $c_0(n)$ is called the {\em exponential-free part}
of $f(n)$.
\end{Definition}

\begin{Remark}
Note that sometime when referring to poly-geometrical expressions, 
for simplicity, we allow $n$-exponential terms with base $0$ or $1$, 
that is, terms with $0^n$ or $1^n$ as factors.
Such terms will always be evaluated to $0$ or $1$ respectively.
\end{Remark}

Proving the following result is routine.

\begin{Lemma}
With the notations of Definition~\ref{def: poly-geo}.
Let $f$ a poly-geometrical expression in  $n$ over $\overline{\mathbb{Q}}$
w.r.t.  $\alpha_1, \ldots, \alpha_k$.
There exists a unique poly-geometrical expression $c$ in  $n$ over $\overline{\mathbb{Q}}$
w.r.t.  $\alpha_1, \ldots, \alpha_k$ such that 
$c$ is in canonical form and such that $f$ and $c$ are equal.
We call $c$ the {\em canonical form} of $f$.
\end{Lemma}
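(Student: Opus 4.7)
The plan is to prove existence by a direct rewriting of $f$ into the prescribed shape, and uniqueness by appealing to the $\overline{\mathbb{Q}}[n]$-linear independence of the functions $n \mapsto \beta^n$ for distinct $\beta \in \overline{\mathbb{Q}}^*$.

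For existence, I would start from the definition of $f$ as an element of $\overline{\mathbb{Q}}[n, \alpha_1^n, \ldots, \alpha_k^n]$, expand it as a sum of monomials, and observe that each monomial has the shape $c \, n^{e}\, \alpha_1^{j_1 n} \cdots \alpha_k^{j_k n} = c \, n^{e}\, \beta^{n}$, where $\beta := \alpha_1^{j_1}\cdots\alpha_k^{j_k}$ is a product of some of the $\alpha_\ell$'s (with $j_\ell \geq 0$). Grouping monomials that share the same pair $(\beta,e)$ and summing their coefficients, then dropping pairs whose coefficient collapses to zero, puts $f$ in the form $\sum_i c_i\,\beta_i^n\,n^{e_i} \,+\, \sum_{e}\, d_e\, n^e$, where in the first sum the pairs $(\beta_i,e_i)$ are pairwise distinct with $c_i \neq 0$ and $\beta_i \neq 1$, and the second sum gathers exactly the contributions corresponding to $\beta = 1$. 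Setting $c_0(n) := \sum_e d_e n^e$ yields a representation satisfying (i)--(iii), so a canonical form exists.

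For uniqueness, it suffices to show that if two canonical expressions are equal as functions of $n$, they coincide symbolically; equivalently, the zero function has only the empty canonical representation. So suppose $\sum_{i=1}^{m} c_i\, \beta_i^n\, n^{e_i} + c_0(n) \;=\; 0$ for every non-negative integer $n$, with $c_i \in \overline{\mathbb{Q}}^*$, distinct pairs $(\beta_i,e_i)$ and $\beta_i \neq 1$. Regrouping terms sharing the same base $\beta$ gives an identity of the form $\sum_{\beta} P_\beta(n)\,\beta^n = 0$ for all $n \in \mathbb{Z}_{\geq 0}$, where the sum runs over finitely many pairwise distinct $\beta \in \overline{\mathbb{Q}}^*$, each $P_\beta \in \overline{\mathbb{Q}}[n]$ is nonzero, and the polynomial $c_0(n)$ is absorbed into $P_1$.

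The core lemma is then the $\overline{\mathbb{Q}}[n]$-linear independence of the functions $\{n \mapsto \beta^n : \beta \in \overline{\mathbb{Q}}^*\}$ on $\mathbb{Z}_{\geq 0}$. I would prove this by induction on the number $r$ of distinct bases appearing: the case $r=1$ reduces to the fact that a nonzero polynomial in $\overline{\mathbb{Q}}[n]$ has only finitely many integer roots; for $r>1$, apply the shift $n \mapsto n+1$ to the identity $\sum_{j=1}^r P_{\beta_j}(n)\beta_j^n = 0$ and subtract $\beta_r$ times the original, eliminating the $\beta_r^n$ term and producing a new identity $\sum_{j=1}^{r-1} Q_{\beta_j}(n)\beta_j^n = 0$ with $Q_{\beta_j}(n) = P_{\beta_j}(n+1)\beta_j - P_{\beta_j}(n)\beta_r$, which is still nonzero because $\beta_j \neq \beta_r$ (comparing leading coefficients in $n$). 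The induction hypothesis then forces $Q_{\beta_j} = 0$, contradicting $P_{\beta_j} \neq 0$. This rules out nonzero $P_\beta$'s, hence $m = 0$ and $c_0 = 0$, giving uniqueness.

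The only subtle step is the inductive elimination, where I must check that $Q_{\beta_j}$ does not accidentally vanish; this is where the distinctness $\beta_j \neq \beta_r$ and a careful look at leading coefficients are essential. Everything else is bookkeeping on the monomial expansion of $f$.
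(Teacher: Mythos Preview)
The paper itself offers no proof of this lemma; it merely remarks that the argument is routine. Your existence argument is correct. In the uniqueness argument, however, the elimination step has a gap: applying the shift $n\mapsto n+1$ and subtracting $\beta_r$ times the original does \emph{not} in general kill the $\beta_r^n$ term. The new coefficient of $\beta_r^n$ is
\[
Q_{\beta_r}(n)\;=\;\beta_r\,P_{\beta_r}(n+1)-\beta_r\,P_{\beta_r}(n)\;=\;\beta_r\,(\Delta P_{\beta_r})(n),
\]
which vanishes only when $P_{\beta_r}$ is constant; otherwise it is a nonzero polynomial of degree $\deg P_{\beta_r}-1$. So after one step you still have an $r$-term identity, not an $(r{-}1)$-term one, and the induction on $r$ does not apply as stated.

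The repair is minor. Your own leading-coefficient computation already shows that for $j\neq r$ the operator $T-\beta_r$ (where $(Tf)(n)=f(n+1)$) preserves both the degree and the nonvanishing of the coefficient polynomial, while for $j=r$ it strictly lowers the degree. Hence applying $(T-\beta_r)^{d+1}$ with $d=\deg P_{\beta_r}$ annihilates the $\beta_r^n$ contribution while leaving all the other coefficient polynomials nonzero; then your induction on $r$ goes through. Equivalently, you can induct on $r+\sum_{j}\deg P_{\beta_j}$ and use a single application of $T-\beta_r$ per step.
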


\ifCoRR
\begin{Example}
The closed form $f := \frac{(n+1)^2\,n^2}{4} $ of $\sum_{i=0} ^n i^3$
is a poly-geometrical expression in $n$ over $\overline{\mathbb{Q}}$ without
$n$-exponential variables.  
The expression $g := n^2 \, 2^{(n+1)}-n \, 2^n \, 3^{\frac{n}{2}}$
is a poly-geometrical in $n$ over $\overline{\mathbb{Q}}$ w.r.t. $2,3$.
Some evaluations are: $f|_{(n=0)}=0, f|_{n=1}=1, g|_{n=0}=0, 
g|_{n=2}= 8 $.  
\end{Example}

\begin{Notation}
Let $x$ be an arithmetic expression and let $k \in \mathbb{N}$.
Following \cite{GG99},
we call \emph{$k$-th falling factorial of} $x$
and denote by $x^{\underline{k}}$ the 
product $$x \, (x-1) \, \cdots (x-k+1).$$
We define $x^ {\underline{0}} := 0$.
For $i=1,\ldots, k$, we denote by ${k \brace i}$ the number of ways 
to partition
$k$ into $i$ non-zero summands, that is, the \emph{Stirling number of the second kind}
also denoted by $S(n,k)$.
We define ${k \brace 0} := 0$. 
Finally, we shall make use of the convention $0^0=1.$
\end{Notation}

\begin{Example}
The expression $n^2 \, 2^{(n+1)}-n \, 2^n \, 3^{(n/2)}$
is clearly poly-geometrical in $n$ over $\overline{\mathbb{Q}}$.
Consider now a fixed non-negative integer $k$.
          The sum $\sum_{i=1}^{n-1} i^k$ has $n-1$ terms
           while its closed form~\cite{GG99} below 
        $$\sum_{i=1}^k \; {k \brace i} \, \frac{ n^{\underline{i+1}}}{ i+1}$$
  has a fixed number of terms and thus 
        is poly-geometrical in $n$ over $\overline{\mathbb{Q}}$.
\end{Example}

The following result is proved in~\cite{GG99}.
\begin{Lemma}
    \label{lem: ff}
    Let $x$ be an algebraic expression and let $k \in \mathbb{N}$.
    Then we have 
    $$x^k = \sum_{i=1}^k \, {k \brace i } \, x^{\underline{i}}.$$
\end{Lemma}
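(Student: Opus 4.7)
The plan is to prove this classical identity in two stages: first for positive integer values of $x$ via a bijective counting argument, then for arbitrary algebraic expressions by a degree comparison of polynomials.

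For the combinatorial step, I would fix $k \geq 1$ and temporarily assume that $x$ is a positive integer. Then $x^k$ counts the number of functions from the $k$-element set $[k] := \{1, \ldots, k\}$ to the $x$-element set $[x]$. I would classify these functions according to the cardinality $i$ of their image, where $1 \leq i \leq k$. For each such $i$, a function with image of size exactly $i$ is specified uniquely by a pair consisting of $(a)$ a set partition of the domain $[k]$ into $i$ non-empty blocks, namely the fibers of the function, and $(b)$ an injective map from these $i$ blocks to $[x]$ selecting their images. The number of such set partitions is ${k \brace i}$ by the very definition of the Stirling numbers of the second kind, and the number of injective maps from an $i$-set to $[x]$ is $x(x-1)\cdots(x-i+1) = x^{\underline{i}}$. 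Summing the contributions over $i = 1, \ldots, k$ then yields $x^k = \sum_{i=1}^k {k \brace i}\, x^{\underline{i}}$ whenever $x$ is a positive integer.

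To extend the conclusion to an arbitrary algebraic expression $x$, I would observe that both sides of the identity, viewed formally in a single indeterminate $x$, are polynomials of degree $k$. Since these two polynomials agree on infinitely many arguments (indeed on all positive integers), they must coincide as elements of $\mathbb{Q}[x]$, and therefore the identity holds after substituting any algebraic expression for $x$. The argument has no real obstacle, as the result is entirely classical; should one prefer to avoid the specialization to integers, an equivalent purely algebraic route is induction on $k$, combining the Stirling recurrence ${k+1 \brace i} = i \, {k \brace i} + {k \brace i-1}$ with the elementary factorization $x \cdot x^{\underline{i}} = x^{\underline{i+1}} + i\, x^{\underline{i}}$ to pass from case $k$ to case $k+1$.
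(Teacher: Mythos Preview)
Your argument is correct and entirely standard: the combinatorial count of functions $[k]\to[x]$ stratified by image size gives the identity for positive integers $x$, and the polynomial identity principle carries it to any algebraic expression. The alternative inductive route you sketch is also valid.

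There is nothing to compare against, however, because the paper does not supply its own proof of this lemma. It simply states the result and attributes it to the textbook~\cite{GG99} (von zur Gathen and Gerhard, \emph{Modern Computer Algebra}). So your proposal is strictly more than what the paper offers here; either of your two arguments would serve as a self-contained justification in place of the citation.
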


\begin{Notation}
Let $r \in  \overline{\mathbb{Q}}$ and let $k$ be a non-negative integer. 
We denote by $H(r,k,n)$ the following symbolic summation 
$$H(r,k,n) :=  \sum_{i=0}^{n-1}\;  r^i \; i^{\underline{k}}.$$
\end{Notation}

One can easily check that $H(r,0,n) = \frac{r^n-1}{r-1}$ holds
for $r \neq 1$.
Moreover, we have the following result.

\begin{Lemma}
    \label{lem: poly-geo-sum}
	Assume $r \neq 0$. Then, we have
\begin{equation}
\label{eq: poly-geo-sum}
( r -1 )\,H(r,n,k) \; =  \;  (n-1)^{\underline{k}} \, r^{n} - r \, k \, \, 
H(r, k-1,n-1).
\end{equation}
Moreover, we have
\begin{enumerate}
    \item[$(i)$]
if $r=1$, then $H(r,n,k)$ equals to  $\frac{n^{\underline{k+1}}}{k+1}$, 
which is a polynomial in $n$ over $\overline{\mathbb{Q}}$ 
of degree $k+1$.

    \item[$(ii)$] if $r \neq 1$, then $H(r,n,k)$ has a closed form
        like $r^n \, f(n) + c$, where $f(n)$ is a polynomial 
        in $n$ over $\overline{\mathbb{Q}}$ 
of degree $k$ and $c$ is a constant in $\overline{\mathbb{Q}}$.
\end{enumerate}
\end{Lemma}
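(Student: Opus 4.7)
The plan is to first establish the recurrence~(\ref{eq: poly-geo-sum}) by discrete summation-by-parts, and then to deduce parts $(i)$ and $(ii)$ from it; part $(i)$ requires only a standard telescoping identity, while part $(ii)$ follows by induction on $k$ using the recurrence itself.

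To prove~(\ref{eq: poly-geo-sum}), I would multiply $H(r,k,n)$ through by $r-1$ and split:
\[
(r-1)\,H(r,k,n) \;=\; \sum_{i=0}^{n-1} r^{i+1}\, i^{\underline{k}} \;-\; \sum_{i=0}^{n-1} r^i\, i^{\underline{k}}.
\]
Shifting the index in the first sum by $j=i+1$ aligns the powers of $r$. The boundary contributions are $r^n (n-1)^{\underline{k}}$ and $-r^0 \cdot 0^{\underline{k}}$, and the latter vanishes for $k \ge 1$. The remaining interior sum involves the discrete difference $(i-1)^{\underline{k}} - i^{\underline{k}}$. A direct computation from the definition of the falling factorial yields the discrete power rule $i^{\underline{k}} - (i-1)^{\underline{k}} = k\,(i-1)^{\underline{k-1}}$, since the two factorials share the common factor $(i-1)(i-2)\cdots(i-k+1)$. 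Substituting and re-indexing turns the interior sum into $-rk\,H(r,k-1,n-1)$, which is precisely~(\ref{eq: poly-geo-sum}).

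For part $(i)$, when $r=1$ the recurrence degenerates; instead I would use the antidifference identity $\Delta \bigl(\tfrac{i^{\underline{k+1}}}{k+1}\bigr) = i^{\underline{k}}$ (the discrete analogue of integrating a monomial, again derived from the definition of the falling factorial). Telescoping then yields $H(1,k,n) = n^{\underline{k+1}}/(k+1)$, which is a polynomial in $n$ of degree $k+1$. For part $(ii)$, I would induct on $k$. The base case $k=0$ is immediate: $H(r,0,n) = (r^n-1)/(r-1) = r^n \cdot \tfrac{1}{r-1} - \tfrac{1}{r-1}$, already of the claimed shape with $\deg f = 0$. For the inductive step, write $H(r,k-1,n-1) = r^{n-1} g(n-1) + c'$ with $\deg g = k-1$, and solve~(\ref{eq: poly-geo-sum}) for $H(r,k,n)$ to obtain
\[
H(r,k,n) \;=\; r^n \cdot \frac{(n-1)^{\underline{k}} - k\,g(n-1)}{r-1} \;-\; \frac{rk c'}{r-1}.
\]
Since $\deg (n-1)^{\underline{k}} = k$ strictly exceeds $\deg (k\,g(n-1)) = k-1$, the numerator of the coefficient of $r^n$ has degree exactly $k$, and the expression has the required form.

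The only delicate point is the bookkeeping of the index shift in the summation-by-parts step and the verification that the leading term of $(n-1)^{\underline{k}}$ is not cancelled by $k\,g(n-1)$ in the inductive step. Once the discrete power rule is in hand, everything else is routine computation.
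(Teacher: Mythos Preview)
Your proof is correct and follows essentially the same approach as the paper: summation-by-parts (which the paper merely calls ``expanding'') for the recurrence, and induction on $k$ with base case $k=0$ for part~$(ii)$. The one small difference is in part~$(i)$: you bypass the recurrence via the telescoping identity $\Delta\bigl(i^{\underline{k+1}}/(k+1)\bigr)=i^{\underline{k}}$, whereas the paper observes that at $r=1$ the recurrence itself reads $0=(n-1)^{\underline{k}}-k\,H(1,k-1,n-1)$ and extracts the closed form from that---but the two arguments are equivalent, resting on the same discrete power rule.
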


\begin{proof} 
We can verify Relation~(\ref{eq: poly-geo-sum})
by expanding $H(r,n,k)$ and $H(r, k-1,n-1)$.
Now let us show the rest of the conclusion.
First, assume $r = 1$. 
With Relation~(\ref{eq: poly-geo-sum}), we have 
$$k \, H(r, k-1,n-1) = (n-1)^{\underline{k}}.$$
Therefore, we deduce 
$$ H(r,n,k) = \frac{n^{\underline{k+1}}}{k+1}.$$
One can easily check that $\frac{n^{\underline{k+1}}}{k+1}$ 
is a polynomial in $n$ over $\overline{\mathbb{Q}}$ 
and $\deg(s, n) = k+1$.

From now on assume $r \neq 1$.
We proceed by induction on $k$.
When $k=0$, we have $H(r,0,n) = \frac{r^n-1}{r-1}$.
We rewrite $\frac{r^n-1}{r-1}$ as $$r^n\, \frac{1}{r-1} -\frac{1}{r-1},$$
which is such a closed form.
Assume there exists a closed form $r^n\, f_{k-1}(n) + c_{k-1}$ for $H(r,k-1,n)$,
where $f_{k-1}(n)$ is a polynomial in $n$ over $\overline{\mathbb{Q}}$ 
of degree $k-1$.
Define  
$$s := 
\frac{(n-1)^{\underline{k}} \, r^{n} - r \, k \, ( r^{n-1}\, f_{k-1}(n-1) + c_{k-1} )}{r-1}
.$$
It is easy to verify that $s$ is a closed form of $H(r,n,k)$.
We rewrite $s$ as 
$$ r^{n} \, \frac{(n-1)^{\underline{k}} - k\, f_{k-1}(n-1) }{r-1} - \frac{ r\,k\,c_{k-1}} {r-1},$$
and one can check the later form satisfies the requirements of $(ii)$ in the conclusion.
This completes the proof.
\end{proof}

\begin{Lemma}
    \label{lem: poly-geo}
Let $k \in \mathbb{N}$ and let $\lambda$ be a non zero algebraic number 
over $\mathbb{Q}$. 
Consider the symbolic summation  $$S := \sum_{i=1}^n\;  i^k \, \lambda^i.$$
\begin{enumerate}
    \item if $\lambda = 1$, then there exists a closed form $s(n)$
    for $S$, where $s$ is a polynomial in $n$ 
		  over $\overline{\mathbb{Q}}$ of degree $k+1$.
    \item if $\lambda \neq 1$, then there exists a closed form $\lambda^n\,s(n) + c$
    for $S$, where $s$ is a polynomial in $n$ 
		  over $\overline{\mathbb{Q}}$ of degree $k$ and $c \in \overline{\mathbb{Q}}$ is a constant.
  \end{enumerate}
\end{Lemma}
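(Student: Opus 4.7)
The plan is to reduce $S = \sum_{i=1}^n i^k \lambda^i$ to a linear combination of the symbolic summations $H(\lambda, j, \cdot)$ already handled by Lemma~\ref{lem: poly-geo-sum}, by re-expressing the monomial $i^k$ in the falling-factorial basis via Lemma~\ref{lem: ff}. The case $k = 0$ is trivial and treated separately: $S = \lambda(\lambda^n-1)/(\lambda-1)$ when $\lambda \neq 1$ and $S = n$ when $\lambda = 1$, both of which match the claimed shape with the advertised degree.

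For $k \geq 1$, first I would apply Lemma~\ref{lem: ff} to get $i^k = \sum_{j=1}^k {k \brace j}\, i^{\underline{j}}$ and then exchange the order of summation. Since $i^{\underline{j}} = 0$ whenever $0 \leq i < j$, the inner sum can be extended down to $i = 0$, so
$$S \;=\; \sum_{j=1}^k {k \brace j} \sum_{i=0}^{n} \lambda^i\, i^{\underline{j}} \;=\; \sum_{j=1}^k {k \brace j}\, H(\lambda, j, n+1).$$
Now I invoke Lemma~\ref{lem: poly-geo-sum} termwise. When $\lambda = 1$, each $H(1, j, n+1) = (n+1)^{\underline{j+1}}/(j+1)$ is a polynomial in $n$ of degree $j+1$ over $\overline{\mathbb{Q}}$, so $S$ is again such a polynomial, and defining $s(n) := S$ finishes this case. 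When $\lambda \neq 1$, each $H(\lambda, j, n+1) = \lambda^{n+1} f_j(n+1) + c_j$ with $\deg_n f_j = j$; collecting terms yields
$$S \;=\; \lambda^n \Bigl( \lambda \sum_{j=1}^k {k \brace j}\, f_j(n+1) \Bigr) \;+\; \sum_{j=1}^k {k \brace j}\, c_j,$$
so taking $s(n) := \lambda \sum_{j=1}^k {k \brace j}\, f_j(n+1)$ and $c := \sum_{j=1}^k {k \brace j}\, c_j$ produces the required shape.

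The only non-routine step is verifying that $\deg_n s$ is exactly $k$, rather than something smaller through accidental cancellation. I would argue this by isolating the contribution of $j = k$: the Stirling coefficient ${k \brace k}$ equals $1$, and, tracing the induction in the proof of Lemma~\ref{lem: poly-geo-sum}, the polynomial $f_k$ arises from the recursion with leading term $(n-1)^{\underline{k}}/(\lambda-1)$, whose degree-$k$ monomial in $n$ survives because $k\, f_{k-1}(n-1)$ has degree $k-1$ by the induction hypothesis. Hence $f_k(n+1)$ contains a nonzero $n^k$ coefficient equal to $1/(\lambda-1)$, which no other term with $j < k$ can cancel, giving $\deg_n s = k$ as required. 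I expect this leading-coefficient bookkeeping to be the main obstacle; everything else is a straightforward rearrangement built on Lemmas~\ref{lem: ff} and~\ref{lem: poly-geo-sum}.
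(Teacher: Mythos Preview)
Your approach is essentially identical to the paper's: expand $i^k$ in the falling-factorial basis via Lemma~\ref{lem: ff}, interchange the order of summation, and invoke Lemma~\ref{lem: poly-geo-sum} termwise. Your write-up is in fact more careful than the paper's terse argument, since you treat $k=0$ separately (where Lemma~\ref{lem: ff} as stated does not apply), correctly shift the inner sum to $H(\lambda,j,n+1)$, and explicitly verify that the top-degree coefficient survives so that $\deg_n s$ is exactly $k+1$ (resp.\ $k$).
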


\begin{proof}
By Lemma~\ref{lem: ff},
 we deduce 
 \begin{equation*}
 \begin{array} {cl}
 \sum_{i=1}^n\;  i^k \, \lambda^i  \! & =  \;  \sum_{i=1}^n \; \left( 
  \sum_{j=1}^k \, {k \brace j } \, i^{\underline{j}} \right)
 \, \lambda^i\\
 \! & = \; \sum_{j=1}^k \, \left({k \brace j } \, 
\sum_{i=1}^n \; 
   i^{\underline{j}} \, \lambda^i\right) \\
 \! & = \; \sum_{j=1}^k \, \left({k \brace j } \, H(\lambda,j, n)  \right)
 \end{array}
 \end{equation*}
Then, the conclusion follows from Lemma~\ref{lem: poly-geo-sum}.
\end{proof}

The following definition of multiplicative relation
specializes the general definition of multiplicative relation
to non-zero algebraic numbers.

\begin{Definition}[Multiplicative relation]
\label{defi:multiplicationRelation}
Let $k$ be a positive integer.
Let $A := (\alpha_1, \ldots, \alpha_k)$ be a sequence of $k$ 
non-zero algebraic numbers over $\mathbb{Q}$
and $\mathbf{e} := (e_1, \ldots, e_k)$ be a sequence of $k$
integers.
We say that $\mathbf{e}$ is a \emph{multiplicative relation on} $A$
if $\prod_{i=1}^k \, \alpha_i^{ e_i } = 1$ holds.
Such a multiplicative relation is said \emph{non-trivial} if
there exists $i \in \{ 1, \ldots, n \}$ such that $e_i \neq 0$ holds.
If there exists a non-trivial multiplicative relation on $A$, 
then we say that $A$ is \emph{multiplicatively dependent};
otherwise,  we say that $A$ is \emph{multiplicatively independent}.
\end{Definition}

All multiplicative relations of $A$ form a lattice,
called the \emph{multiplicative relation lattice} on $A$,
which can effectively be computed, for instance with the
algorithm proposed by G. Ge in his PhD thesis~\cite{Ge93}.
\fi

\ifCoRR
For simplicity, we need the following generalized notion of multiplicative relation ideal,
which is defined for a sequence of algebraic numbers that may contain $0$ and repeat elements. 
\fi

\begin{Definition}
\label{defn: multiplication-ideal}
Let $A := (\alpha_1, \ldots, \alpha_k)$ be a sequence of $k$ algebraic 
numbers over $\mathbb{Q}$.
Assume w.l.o.g. that there exists an index ${\ell}$, 
with $1 \leq {\ell} \leq k$,  such that
$\alpha_1, \ldots, \alpha_{\ell}$ are non-zero
and $\alpha_{{\ell}+1}, \ldots, \alpha_k$
are all zero.
We associate each  $\alpha_i$ with a variable $y_i$,
where $y_1,\ldots, y_k$ are different from each other.
We call the \emph{multiplicative relation ideal of} $A$ 
{\em associated with variables}
$y_1,\ldots, y_k$, the  binomial ideal of $\mathbb{Q}[y_1,y_2,\ldots,y_k]$
generated by 
	$$\{\prod_{j \in \{ 1, \ldots, {\ell}\}, \, v_j >0} y_j^{v_j} - \prod_{i \in \{ 1, \ldots, {\ell}\}, \, v_i <0} y_i^{-v_i} \ \mid \ (v_1, \ldots, v_{\ell}) \in Z \}$$ 
and $\{y_{{\ell}+1}, \ldots, y_k\}$,
 denoted by \MI{A;y_1, \ldots,y_k},
where $Z$ is the multiplicative relation 
\ifCoRR
lattice 
\else
lattice (see for a study of this notion cite~\cite{Ge93})
\fi
on $(\alpha_1, \ldots, \alpha_{\ell}).$ 
When no confusion is possible, we shall omit 
writ-ting down the associated variables $y_1,\ldots, y_k$.
\end{Definition}


\begin{Lemma}
\label{lem: transcent}
Let $\alpha_1, \ldots, \alpha_k$  
be $k$ multiplicatively independent elements of $\overline{\mathbb{Q}}$
and let $n$ be a non-negative integer variable.
Let $f(n)$ be a poly-geometrical expression in $n$ 
w.r.t.  $\alpha_1, \ldots, \alpha_k$.
Assume that $f|_{(n=i)}=0$ holds for all $i \in \mathbb{N}$.
Then, $f$ is the zero polynomial of 
$\overline{\mathbb{Q}}[n, \alpha_1^n, \ldots,\alpha_k^n]$.
\end{Lemma}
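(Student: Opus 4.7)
The plan is to reduce the statement to the classical fact that polynomial-exponential sequences with pairwise distinct bases are linearly independent over $\overline{\mathbb{Q}}$ as functions on $\mathbb{N}$. First I would expand $f$ as a polynomial in its $k+1$ formal variables and collect powers of $n$ into the coefficients, obtaining
\[
f \; = \; \sum_{\mathbf{e} \in \mathcal{E}} a_{\mathbf{e}}(n)\, T_1^{e_1} \cdots T_k^{e_k},
\]
where $T_i$ is the formal symbol for $\alpha_i^n$, $\mathcal{E} \subset \mathbb{Z}_{\geq 0}^k$ is a finite index set, and each $a_{\mathbf{e}}(n)$ lies in $\overline{\mathbb{Q}}[n]$. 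Setting $\gamma_{\mathbf{e}} := \prod_{j=1}^k \alpha_j^{e_j} \in \overline{\mathbb{Q}}^*$, evaluation at a non-negative integer $i$ gives
\[
f|_{n=i} \; = \; \sum_{\mathbf{e} \in \mathcal{E}} a_{\mathbf{e}}(i)\, \gamma_{\mathbf{e}}^i,
\]
so the hypothesis says that the sequence on the right vanishes for every $i \in \mathbb{N}$, and what must be shown is that each polynomial $a_{\mathbf{e}}(n)$ is identically zero.

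The multiplicative independence of $\alpha_1, \ldots, \alpha_k$ is precisely what makes the numbers $\gamma_{\mathbf{e}}$ pairwise distinct: if $\gamma_{\mathbf{e}} = \gamma_{\mathbf{e}'}$ with $\mathbf{e} \neq \mathbf{e}'$ both in $\mathcal{E}$, then $\prod_j \alpha_j^{e_j - e_j'} = 1$ would be a nontrivial multiplicative relation on $(\alpha_1,\ldots,\alpha_k)$, contradicting Definition~\ref{defi:multiplicationRelation}. Consequently, the statement reduces to the following standard lemma: for pairwise distinct nonzero $\gamma_1, \ldots, \gamma_m \in \overline{\mathbb{Q}}$ and polynomials $p_1, \ldots, p_m \in \overline{\mathbb{Q}}[n]$, if $\sum_{j=1}^m p_j(i)\, \gamma_j^i = 0$ holds for every $i \in \mathbb{N}$, then $p_1 = \cdots = p_m = 0$.

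For this lemma, which is the only nontrivial step, I would invoke the shift-operator argument: letting $\sigma$ denote the shift $g(n) \mapsto g(n+1)$ on sequences over $\overline{\mathbb{Q}}$, each term $p_j(n)\gamma_j^n$ lies in the kernel of $(\sigma - \gamma_j)^{\deg p_j + 1}$, and for pairwise distinct $\gamma_j$ these operators are pairwise coprime in $\overline{\mathbb{Q}}[\sigma]$. Hence the kernel of their product decomposes as the direct sum of the individual kernels, forcing $p_j(n)\gamma_j^n \equiv 0$ and therefore $p_j \equiv 0$ for each $j$; equivalently, one may exhibit a generalized Vandermonde determinant on a sufficiently large block of indices $i = 0, 1, \ldots, N-1$ and verify that it is nonzero. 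Applying this to our decomposition yields $a_{\mathbf{e}} \equiv 0$ for every $\mathbf{e} \in \mathcal{E}$, so $f$ is the zero polynomial of $\overline{\mathbb{Q}}[n, \alpha_1^n, \ldots, \alpha_k^n]$. The main obstacle is precisely this classical linear independence step; the reduction to it is otherwise a routine bookkeeping exercise using the canonical form and Definition~\ref{defi:multiplicationRelation}.
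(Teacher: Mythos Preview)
Your argument is correct. The paper actually states this lemma without proof, treating it as a known fact; your reduction to the classical linear independence of polynomial-exponential sequences (via multiplicative independence forcing the monomial products $\gamma_{\mathbf{e}}$ to be pairwise distinct, followed by the shift-operator or generalized Vandermonde argument) is the standard way to establish it and fills the gap cleanly.
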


\ifCoRR
The following definition will be convenient in later statements.
\fi

\begin{Definition}[Weakly multiplicative independence] 
Let $A := (\alpha_1, \ldots, \alpha_k)$ be a sequence of $k$ 
non-zero algebraic numbers over $\mathbb{Q}$ and 
let $\beta \in \overline{\mathbb{Q}}$.
We say $\beta$ is \emph{{\wm} independent} w.r.t. $A$, if 
there exist no non-negative integers $e_1, e_2, \ldots, e_k$
such that 
 $\beta = \prod_{i=1}^k \, \alpha_1^{ e_i }$ holds.
Furthermore, we say that 
$A$ is \emph{{\wm} independent} if
\begin{enumerate}
\item[$(i)$] $\alpha_1 \neq 1$ holds, and
\item[$(ii)$] $ \alpha_i$ is {\wm} independent w.r.t.\\
$\{\alpha_1,\ldots,\alpha_{i-1}, 1\} $, for all $i=2,\ldots,s$. 
\end{enumerate}
\end{Definition}

It is not hard to prove the following lemma on the shape of 
closed form solutions of single-variable linear recurrences
involving poly-geometrical expressions. 
\ifCoRR
For the proof, we need the following lemma,  which is easy to check, see for instance~\cite{Osb00}.
\begin{Lemma}
\label{lem: lde-uni}
	Let $n$ a variable holding non-negative integer values.
    Let $a$ and $b$ be two sequences in $\mathbb{Q}$ indexed by $n$.
    Consider the following recurrence equation of variable $x$: 
    $$x(n)=a(n-1)\, x(n-1) + b(n-1).$$
    Then we have
    $$ x(n) = \prod_{i=0}^{n-1}\,a(i) \;  \left(x(0) + \sum_{j=0}^{n-1} \, \frac{b(j)}{\prod_{s=0}^{j} \,a(s) } \right).$$
\end{Lemma}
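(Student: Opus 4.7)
The plan is to proceed by straightforward induction on $n$, since the lemma is a purely formal identity about a first-order linear recurrence with non-constant coefficients. For the base case $n=0$, the right-hand side collapses under the standard conventions: the empty product $\prod_{i=0}^{-1} a(i)$ equals $1$ and the empty sum $\sum_{j=0}^{-1}(\cdots)$ equals $0$, so the formula evaluates to $1 \cdot (x(0)+0) = x(0)$, matching the definition of the initial value.

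For the inductive step, assume the closed form holds at stage $n-1$, and substitute it into the recurrence $x(n) = a(n-1) \, x(n-1) + b(n-1)$. The factor $a(n-1)$ absorbs into the product, promoting $\prod_{i=0}^{n-2} a(i)$ to $\prod_{i=0}^{n-1} a(i)$. The leftover additive term $b(n-1)$ must then be rewritten as the missing summand $\prod_{i=0}^{n-1} a(i) \cdot b(n-1)/\prod_{s=0}^{n-1} a(s)$, which after cancellation is indeed $b(n-1)$. This extends the sum $\sum_{j=0}^{n-2}$ to $\sum_{j=0}^{n-1}$, yielding the claimed formula at stage $n$.

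An alternative derivation, which also explains where the formula comes from, is the integrating-factor trick: set $u(n) := x(n)/\prod_{i=0}^{n-1} a(i)$ and observe that the recurrence becomes the pure telescoping equation
\[
u(n) \;=\; u(n-1) \;+\; \frac{b(n-1)}{\prod_{i=0}^{n-1} a(i)},
\]
from which summation from $0$ to $n-1$ and multiplication back by $\prod_{i=0}^{n-1} a(i)$ immediately recover the closed form. This second route is cleaner conceptually but implicitly requires all $a(i)$ to be non-zero, whereas the inductive argument above works without any such assumption.

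Either way, there is no genuine obstacle: the entire proof reduces to bookkeeping with finite products and sums, in agreement with the paper's own characterisation of the lemma as easy to check. I would present the induction proof in the paper, as it is self-contained and requires no side conditions on the sequence $a$.
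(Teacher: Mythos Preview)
Your induction argument is correct and is exactly the routine verification the paper has in mind: the paper does not give a proof at all, merely remarking that the lemma ``is easy to check'' and pointing to~\cite{Osb00}. Your base case and inductive step are clean, and the integrating-factor derivation you sketch is indeed the standard source of the formula.

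One small caveat: your claim that the inductive proof ``works without any such assumption'' on the $a(i)$ is not quite right as stated. The formula in the lemma contains the quotients $b(j)/\prod_{s=0}^{j} a(s)$, so it is not even well-defined if some $a(i)=0$; the cancellation you perform in the inductive step, rewriting $b(n-1)$ as $\bigl(\prod_{i=0}^{n-1} a(i)\bigr)\cdot b(n-1)/\prod_{s=0}^{n-1} a(s)$, likewise presupposes a non-zero denominator. What is true is that the \emph{expanded} identity $x(n)=\prod_{i=0}^{n-1} a(i)\,x(0)+\sum_{j=0}^{n-1}\bigl(\prod_{i=j+1}^{n-1} a(i)\bigr) b(j)$ holds unconditionally, and your induction proves that version without any side condition. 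This is a remark about the formulation of the lemma rather than a gap in your argument; in the paper's applications (Lemma~\ref{lem: solvable-base}) the coefficient is a non-zero constant $\lambda$, so the issue does not arise.
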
 
\fi

\begin{Lemma}
\label{lem: solvable-base}
Let $\alpha_1, \ldots, \alpha_k$ be 
$k$ elements in  $\overline{\mathbb{Q}}^* \setminus \{1\}$.
Let $\lambda \in \overline{\mathbb{Q}}^*$ .
    Let  $h(n)$ be a poly-geometrical expression in $n$ over $\overline{\mathbb{Q}}$
    w.r.t. $\alpha_1, \ldots, \alpha_k$.
Consider the following single-variable recurrence relation $R$:
$$x(n+1)=\lambda x(n) + h(n).$$ 
Then, there exists a poly-geometrical expression $s(n)$ in $n$ 
over  $\overline{\mathbb{Q}}$   w.r.t. $\alpha_1, \ldots, \alpha_k$ 
such that we have 
      $$
\deg(s(n),\alpha_i^n) \leq \deg(h(n),\alpha_i^n) \ \ {\rm and} \ \ 
\deg(s(n),n)  \leq \deg(h(n),n)+1,
  $$
and such that
\begin{itemize}
\item if $\lambda = 1$ holds,  then $s(n)$ solves $R$,
\item if $\lambda \neq 1$ holds, then there exists
 a constant $c$ depending on $x(0)$ 
  (that is, the initial value of $x$) such that
  $c\, \lambda^n + s(n)$ solves $R$.
\end{itemize}
 Moreover, in both cases, if the exponential-free part of the 
canonical form of $(\frac{1}{\lambda})^n\,h(n)$ 
  is $0$, then 
  we can further require that $\deg(s(n),n)  \leq \deg(h(n),n)$ holds.
\end{Lemma}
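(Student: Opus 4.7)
The plan is to solve the recurrence explicitly via the variation-of-parameters formula from Lemma~\ref{lem: lde-uni} and then evaluate the resulting symbolic sum term-by-term with Lemma~\ref{lem: poly-geo}. Applying Lemma~\ref{lem: lde-uni} with $a(n) = \lambda$ and $b(n) = h(n)$ gives
$$x(n) \;=\; \lambda^{n} \, x(0) \;+\; \lambda^{n-1} \sum_{j=0}^{n-1} \lambda^{-j}\, h(j).$$
Writing $h$ in its canonical form $h(n) = c_0(n) + \sum_{i=1}^{m} c_i \, \beta_i^{n}\, n^{e_i}$, each inner summand becomes either a polynomial times $(\beta_i / \lambda)^j$ or a pure polynomial times $(1/\lambda)^j$. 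I would then invoke Lemma~\ref{lem: poly-geo} separately on each $\sum_{j} r^{j}\, j^{e}$ (for $r \in \{\beta_i/\lambda, 1/\lambda, 1\}$) and then multiply back by $\lambda^n$.

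The computation splits naturally into two book-keeping cases. \textbf{Case A: $\lambda = 1$.} Here the recurrence reads $x(n+1) = x(n) + h(n)$, every $\beta_i$ is automatically different from $\lambda$ (since $\beta_i \neq 1$ by Definition~\ref{def: poly-geo}), so each exponential sum yields $\beta_i^{n} q_i(n) + c_i'$ with $\deg(q_i, n) = e_i$, while the polynomial part contributes a polynomial in $n$ of degree $\deg(c_0) + 1$. The sum of all these pieces (plus $x(0)$) is the desired $s(n)$. \textbf{Case B: $\lambda \neq 1$.} Now one must further distinguish (B1) $\beta_i \neq \lambda$, which after multiplying by $\lambda^n$ produces a $\beta_i^n q_i(n)$ term of degree $e_i$ in $n$ plus a pure constant multiple of $\lambda^n$, and (B2) $\beta_i = \lambda$, where the inner sum collapses to a polynomial of degree $e_i+1$ in $n$, giving a $\lambda^n \tilde p_i(n)$ contribution of degree $e_i+1$. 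The polynomial part $c_0(j)$ is handled analogously by applying Lemma~\ref{lem: poly-geo} with $r = 1/\lambda$, contributing a polynomial in $n$ of degree $\deg(c_0)$ plus a scalar multiple of $\lambda^n$. All terms proportional to $\lambda^n$ with constant coefficient, together with $x(0)\lambda^n$, are grouped into the promised $c\, \lambda^n$, and the rest is $s(n)$.

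Once this decomposition is written out, the degree bounds fall out by inspection: each $\beta_i^n$ factor appearing in $s(n)$ already appears in $h$, so its multidegree in $\alpha_\ell^n$ is bounded by $\deg(h, \alpha_\ell^n)$, while the degree in $n$ is bounded by $\max_i(e_i + 1, \deg(c_0) + 1) \leq \deg(h, n) + 1$. For the final refinement, I observe that the only source of the ``$+1$'' in the $n$-degree is a Case~A contribution (when $\lambda = 1$) or a Case~B2 contribution (when $\lambda \neq 1$); in both situations, the canonical form of $(1/\lambda)^n h(n)$ has a nonzero exponential-free part (equal to $c_0(n)$ if $\lambda = 1$, and to $\sum_{\beta_i = \lambda} c_i\, n^{e_i}$ if $\lambda \neq 1$, since for $\lambda \neq 1$ the term $(1/\lambda)^n c_0(n)$ stays inside an exponential). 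Hence when this exponential-free part vanishes, no such ``$+1$'' contribution appears in $s(n)$. The main obstacle is purely notational: one must be careful that $\lambda$, though an arbitrary element of $\overline{\mathbb{Q}}^*$, still admits its powers $\lambda^n$ as legitimate poly-geometrical variables --- either $\lambda$ coincides with some product of the $\alpha_i$'s (so $\lambda^n$ is already expressible in them) or it is independent, in which case the Case~B2 situation simply does not occur and $\lambda^n$ is absorbed into the leading $c\, \lambda^n$ term.
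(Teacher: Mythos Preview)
Your proposal is correct and follows essentially the same route as the paper: apply Lemma~\ref{lem: lde-uni} to obtain the variation-of-parameters formula, expand $h$ into its canonical terms, and evaluate each resulting sum $\sum_j (\beta_i/\lambda)^j j^{e_i}$ via Lemma~\ref{lem: poly-geo}, separating the pure $\lambda^n$ contributions into the constant $c$. Your explicit split into Cases A/B and B1/B2, and your handling of the exponential-free part for the ``Moreover'' clause, mirror exactly the paper's case distinction on whether $\beta_t/\lambda = 1$, so there is no substantive difference in approach.
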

\ifCoRR
\begin{proof}
By Lemma~\ref{lem: lde-uni},
we have 
\begin{equation}
\label{eq: 2.0}
x(n) = \lambda^n \left(x(0) +  \sum_{j=0}^{n-1} \, \frac{h(j)}{\lambda^{j+1}} \right).
\end{equation}

Denote by ${\tt terms}(h)$ all the terms of the canonical form of $h(n)$.
Assume each $t \in {\tt terms}(h)$ is of form $$c_t \, n^{q_t}\,\beta_t^n,$$
 where 
	$c_t$ is a constant in $\overline{\mathbb{Q}}$, $q_t$ is a non-negative integer
	and  $\beta_t$ is a product of finitely many elements 
     (with possible repetitions) from $\{ \alpha_1, \ldots, \alpha_k\}$.
Define $g(n) := \frac{h(n)}{\lambda^{n+1}}$. 
Then $g(n)$ is a poly-geometrical expression in $n$
w.r.t. $\{\beta_t\}_{t \in {\tt terms}(h)}, \frac{1}{\lambda}$.
Clearly we have 
$$g(n) = \sum_{t \in  {\tt terms}(h(n))} \, \frac{c_t}{\lambda}\, n^{q_t} \,(\frac{\beta_t}{\lambda})^n.$$

Therefore, we have 
\begin{equation}
\sum_{j=0}^{n-1} \, \frac{h(j)}{\lambda^{j+1}} =
\sum_{t \in {\tt terms}(h)} \sum_{j=0}^{n-1} \,   \frac{c_t}{\lambda}\, j^{q_t} \,(\frac{\beta_t}{\lambda})^j.
\end{equation}

According to  Lemma~\ref{lem: poly-geo},
for each $t \in {\tt terms}(h)$, 
we can find  a poly-geometrical expression $$s_t :=  (\frac{\beta_t} {\lambda})^n f_t(n) + a_t$$
in $n$  over $\overline{\mathbb{Q}}$ w.r.t. $\frac{\beta_t}{\lambda}$ satisfying
\begin{enumerate}
	\item $s_t = \sum_{j=0}^{n-1} \,  \frac{c_t}{\lambda}\, j^{q_t} \,(\frac{\beta_t}{\lambda})^j$;
	\item $f_t$ is a polynomial in $n$ over $\overline{\mathbb{Q}}$ of degree
$q_t$ ( if $\beta_t \neq  \lambda$) or $q_t +1$ (if $\beta_t = \lambda$), 
and $a_t$ is a constant in $\overline{\mathbb{Q}}$; note in the later case,
$c_t \, n^{q_t} \,(\frac{\beta_t}{\lambda})^n$
is a summand of the constant term of the canonical form of $(\frac{1}{\lambda})^n\,h(n)$
  is $0$ when viewed as a polynomial of 
  the $n$-exponential variables.
\end{enumerate}
Therefore,  using $s_t$ ($ \forall t \in {\tt terms}(h)$), we can simplify the right hand side of Equation (\ref{eq: 2.0}) to
\begin{equation}
\label{eq: 2.2}
\left(x(0)+\sum_{ t \in {\tt terms}(h)}\, a_t\right) \, \lambda^n + \sum_{ t \in {\tt terms}(h)}\,f_t(n) \, \beta_t^n.
\end{equation}

Assume for each $t \in {\tt terms}(h)$, we have $\beta_t = \alpha_1^{e_{t,1}} \, \alpha_1^{e_{t,2}}  \,\cdots\, \alpha_1^{e_{t,k}}$. 

Define $$\beta_t(n) := (\alpha_1^n)^{e_{t,1}} \, (\alpha_1^n)^{e_{t,2}}  \,\cdots\, (\alpha_1^n)^{e_{t,k}},$$
 $$c := x(0)+\sum_{ t \in {\tt terms}(h)}\, a_t
\  \mbox{ and } \ s(n) := \sum_{ t \in {\tt terms}(h)}\,f_t(n) \, \beta_t(n).$$

It is easy to deduce
$\deg(s(n), \alpha_i^n) = \max_{t \in  {\tt terms}(h)}(\deg(\beta_t(n),\alpha_i^n) \leq \deg(h(n), \alpha_i^n)$.
Finally, one can easily verify that $c$ and $s(n)$ satisfy the requirements in the conclusion.
\end{proof}
\fi

\ifCoRR
\begin{Remark}
In Lemma~\ref{lem: solvable-base}, 
if $\lambda$ is {\wm} independent w.r.t. $\alpha_1, \ldots, \alpha_k$, then we 
know that the exponential-free part of the canonical form of $(\frac{1}{\lambda})^n\,h(n)$
  is $0$, without computing
  the canonical form explicitly.
\end{Remark}
\fi

\ifheight
\begin{Notation}
Let $r := \frac{q}{p} \in \mathbb{Q}$ 
where $p, q \in \mathbb{Z}$ are coprime.
the height of $r$, denoted by $\hbar(r)$,
is defined as 
$\max(\log_2{|p|}, \log_2{|q|})$.

Let $\alpha$ be an algebraic number over $\mathbb{Q}$
with minimal polynomial in $\mathbb{Q}[x]$ as 
$$\sum_{i=0}^n a_i \; x^i .$$
Then the \emph{height} of $\alpha$, denoted by $\hbar(\alpha)$, is defined to 
be $n + \sum_{i=0}^n \, \hbar(a_i)$.

Let $\alpha_1, \alpha_2, \ldots, \alpha_k$ be $k$ algebraic numbers over $\mathbb{Q}$. Consider a polynomial $p \in Q(\alpha_1, \alpha_2, \ldots, \alpha_k)[x_1, x_2, \ldots, x_s]$.
Let $c_1,c_2, \ldots,c_{\ell}$ be the coefficients of $p$.
Then the \emph{height} of $p$, denoted by $\hbar(p)$, is defined to 
be $\max(\hbar(c_i))$.
\end{Notation}

The following results is estimated by the size of determinants, 
see~\cite{LMW06}.
\begin{Lemma}
	Let $M_{n \times n}$ be a matrix in $\mathbb{Q}^{ n \times n}$,
	whose coefficients have heights upper bounded by $\hbar$.
	Then the coefficient size of the characteristic polynomial of 
	$M$ is upper bounded by $n (\hbar+\log_2(n))$.
\end{Lemma}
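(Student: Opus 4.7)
The plan is to expand each coefficient of the characteristic polynomial as a signed sum of principal minors of $M$ and then bound those minors by the trivial Leibniz estimate. Concretely, I would proceed in three steps.

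First, I would invoke the classical identity
\[
\chi_M(x) \;=\; \det(xI - M) \;=\; \sum_{k=0}^{n} (-1)^k\, e_k(M)\, x^{n-k},
\]
where $e_k(M)$ denotes the sum over all $k$-element subsets $S\subseteq\{1,\dots,n\}$ of the determinant of the principal $k\times k$ submatrix $M_S$ of $M$ indexed by $S$. This reduces the lemma to a uniform height bound on $e_k(M)$ for $0\le k\le n$.

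Second, I would bound a single principal minor via the Leibniz formula. Since $\hbar(m_{ij})\le\hbar$ implies $|m_{ij}|\le 2^{\hbar}$, expanding $\det(M_S)$ as a signed sum of $k!$ products of $k$ entries immediately gives $|\det(M_S)|\le k!\cdot 2^{k\hbar}$. Then, summing over the $\binom{n}{k}$ principal submatrices of size $k$ yields
\[
|e_k(M)| \;\le\; \binom{n}{k}\,k!\,2^{k\hbar} \;=\; \frac{n!}{(n-k)!}\,2^{k\hbar} \;\le\; n^{k}\,2^{k\hbar}.
\]
Taking $\log_2$ gives $\hbar(e_k(M))\le k(\hbar+\log_2 n)\le n(\hbar+\log_2 n)$ uniformly in $k$, which is exactly the claimed bound.

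The step I expect to require the most care, and which is the main obstacle, is the rational (as opposed to integer) setting: the height $\max(\log_2|p|,\log_2|q|)$ of a fraction $q/p$ must also be controlled on the denominator side, whereas the Leibniz estimate above only controls absolute values. The clean way to handle this is to apply the integer bound to an appropriately scaled matrix $DM$ with $D$ a common denominator of the entries and then transfer back using the relation $\chi_M(x)=D^{-n}\chi_{DM}(Dx)$; the scaling factor is then absorbed into the same height bookkeeping, in the spirit of the size estimates of rational determinants used in \cite{LMW06}.
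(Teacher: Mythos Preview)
The paper does not actually prove this lemma; it merely states that the bound ``is estimated by the size of determinants, see~\cite{LMW06}'' and then records the statement without argument. Your Leibniz-expansion approach is precisely the kind of determinant-size estimate the paper is pointing to, and for the integer case your computation is clean and gives exactly the stated bound.

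One word of caution on the rational case, which you rightly flag as the delicate step. The global-scaling identity $\chi_M(x)=D^{-n}\chi_{DM}(Dx)$ is correct, but the bound it produces for the $k$-th coefficient is of order $k(\hbar+\log_2 D+\log_2 n)$, and a single common denominator $D$ for all $n^2$ entries can have $\log_2 D$ as large as $n^2\hbar$, which overshoots the target by a factor of~$n$. To recover $n(\hbar+\log_2 n)$ you need a finer accounting: for the coefficient $e_k(M)$ one only needs a common denominator for the $k$ entries appearing in each Leibniz term (so the denominator of $e_k(M)$, before reduction, is controlled by a product of at most $k$ of the $b_{ij}$ per term, not $n^2$ of them), or alternatively one scales row-by-row rather than globally. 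Either refinement brings the denominator height down to $O(k\hbar)$ and matches the numerator bound you already have. This bookkeeping is presumably what the cited reference carries out.
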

\fi

\ifCoRR

\subsection{Degree preliminaries}
In this subsection, we review some notions and results on the 
degree of algebraic varieties. 
Up to our knowledge, 
Proposition~\ref{prop:  bi-rational-degree} 
is a new result which provides a degree estimate for an
ideal of a special shape and which can be applied to degree estimate
of loop invariant ideals.
Throughout this subsection, let $\mathbb{K}$ be an algebraically closed field. 
Let $F$ be set of polynomials of $\mathbb{K}[x_1,x_2,\ldots,x_s]$.
We denote by $V_{\mathbb{K}^s}(F)$  (or simply by $V(F)$ when
no confusion is possible) the zero set of the ideal generated by 
$F \subset \mathbb{K}[x_1,x_2,\ldots,x_s]$ in $\mathbb{K}^s$.

%
%
%

\begin{Definition}
Let $V \subset \mathbb{K}^s$ be an $r$-dimensional 
equidimensional algebraic variety.
The number of points of intersection of $V$
with an $(n-r)$-dimensional generic linear subspace $L \subset \mathbb{K}^s$
is called the \emph{degree} of $V$~\cite{CLO97}, denoted by $\deg(V)$. 
The degree of a non-equidimensional variety is defined to be 
the sum of the degrees of its equidimensional components.
The degree of an ideal $I \subseteq \mathbb{K}[x_1,x_2,\ldots,x_s]$
is defined to be the degree of the variety of $I$ in $\mathbb{K}^s$.
\end{Definition}

We first  review a few well-known lemmas. 
Note that, for a zero-dimensional algebraic variety, 
the degree is just the number of points in that variety. 
\begin{Lemma}
\label{lem: degree-max}
Let $V \subset \mathbb{K}^s$ be an $r$-dimensional 
equidimensional algebraic variety  of degree $\delta$.
Let $L$ be an $(n-r)$-dimensional linear subspace.
Then, the intersection of $L$ and $V$ is either of positive 
dimensional or consists of no more than $\delta$ points.
\end{Lemma}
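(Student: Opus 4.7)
The plan is to combine the affine dimension theorem with a version of B\'ezout's inequality adapted to equidimensional varieties. I take the $(n-r)$ in the statement to mean $(s-r)$, so that $L$ has complementary dimension to $V$ in $\mathbb{K}^s$.

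First I would establish the dichotomy (finite vs. positive-dimensional) using the affine dimension theorem. Since $V$ is equidimensional of dimension $r$ and $L$ is linear of dimension $s-r$, every non-empty irreducible component of $V \cap L$ has dimension at least $r + (s-r) - s = 0$. Hence $V \cap L$ is either empty, zero-dimensional (finite), or strictly positive-dimensional. It suffices to treat the finite case and to show the bound $\#(V \cap L) \leq \delta$.

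Second, under the assumption that $V \cap L$ is finite, I would express $L$ as an intersection $L = H_1 \cap \cdots \cap H_r$ of $r$ affine hyperplanes $H_i = \{\ell_i = 0\}$, each of degree one. Applying a suitable form of the B\'ezout inequality for equidimensional varieties (for instance Heintz's inequality, or the refined B\'ezout theorem given in \cite{CLO97}), one obtains
$$\#(V \cap L) \;\leq\; \deg(V \cap H_1 \cap \cdots \cap H_r) \;\leq\; \deg(V)\cdot \prod_{i=1}^{r} \deg(H_i) \;=\; \delta,$$
where the first inequality is because the scheme-theoretic count weighs each geometric point by its multiplicity while the set-theoretic count does not, and the second inequality relies on equidimensionality of $V$ so that the degrees of its components add in the intersection. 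This yields the claimed bound.

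A cleaner alternative is a degeneration argument: deform $L$ inside the Grassmannian $G(s-r,s)$ through a one-parameter family $\{L_t\}$ with $L_0 = L$ and $L_t$ generic for $t \neq 0$. On the Zariski-open locus where the intersection remains zero-dimensional, the scheme $V \cap L_t$ sits in a flat family, and by upper semicontinuity of the length of finite fibers the number of geometric points satisfies $\#(V \cap L_0) \leq \#(V \cap L_{\mathrm{gen}}) = \delta$. The main obstacle in either route is to invoke the correct general form of B\'ezout's inequality: the standard statement assumes either transversality or irreducibility, and one needs the equidimensional, non-transverse but proper-dimensional version. Once the appropriate reference is in place, the proof is essentially one line.
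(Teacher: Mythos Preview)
The paper does not actually prove this lemma. It is listed among ``a few well-known lemmas'' (alongside Lemma~\ref{lem: degree-lm}, the Heintz result, and Lemma~\ref{lem: degree-bezout}) that are simply reviewed without proof in the degree-preliminaries subsection. So there is no paper proof to compare your proposal against.

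Your argument is correct and is one of the standard ways to justify this fact. The B\'ezout route you outline is in fact exactly what the paper's own Lemma~\ref{lem: degree-bezout} would give when specialized to $W=V$ and $V_i=H_i$ hyperplanes: $\deg(V\cap L)\le \deg(V)\cdot 1^r=\delta$. Your degeneration/semicontinuity alternative is also fine and is arguably closer in spirit to the definition of degree via a generic linear section that the paper adopts. Either way, once one has the refined B\'ezout inequality (which the paper takes for granted), the lemma is immediate, and you have identified this correctly.
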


\begin{Lemma}
\label{lem: degree-lm}
Let $V \subset \mathbb{K}^s$ be a algebraic variety.
Let $L$ be a linear map from $\mathbb{K}^s$ to 
$\mathbb{K}^k$. Then we have $\deg(L(V)) \leq \deg(V)$.
\end{Lemma}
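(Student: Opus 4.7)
The plan is to reduce to the case where $V$ is equidimensional (indeed irreducible) and then bound the degree of $W := \overline{L(V)}$ by counting points of $V$ in a carefully chosen codimension-$r$ linear subspace. For the reduction, decompose $V = V_1 \cup \cdots \cup V_t$ into its irreducible (hence equidimensional) components. Since $\overline{L(V)} = \bigcup_i \overline{L(V_i)}$ and the degree of a variety is by definition additive over its equidimensional components, it is enough to treat the case where $V$ is irreducible of dimension $r$ and degree $\delta$. Let $r' := \dim W$; note $r' \leq r$ since $L$ is Zariski-continuous and $W$ is the closure of the image of an $r$-dimensional variety.

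Next, I would pick a generic linear subspace $M \subset \mathbb{K}^k$ of codimension $r'$, so that by definition of degree $|M \cap W| = \deg(W)$; denote these points by $w_1, \ldots, w_{\deg(W)}$. The preimage $L^{-1}(M)$ is an affine subspace of $\mathbb{K}^s$ of codimension $r'$. Applying the fiber-dimension theorem to the dominant morphism $L|_V : V \to W$, the generic fiber has dimension $r - r'$; since $M$ is generic, each $w_i$ is a generic point of $W$, so the fiber $(L|_V)^{-1}(w_i)$ has dimension exactly $r - r'$. I would then choose $r - r'$ further generic affine hyperplanes $H_1, \ldots, H_{r-r'}$ of $\mathbb{K}^s$ whose common intersection cuts each such fiber to a non-empty zero-dimensional set. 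Setting $N := L^{-1}(M) \cap H_1 \cap \cdots \cap H_{r-r'}$, one obtains a linear subspace of $\mathbb{K}^s$ of codimension $r$, and the intersection $Z := V \cap N$ is zero-dimensional by the generic choice. By Lemma~\ref{lem: degree-max} this gives $|Z| \leq \delta$, while by construction $L(Z) \supseteq \{w_1, \ldots, w_{\deg(W)}\}$ gives $|Z| \geq \deg(W)$; combining the two yields the desired bound $\deg(W) \leq \delta$.

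The main obstacle is justifying the simultaneous genericity of the two choices, namely that a single configuration of $M$ and of the $H_j$'s achieves all of the conditions: $|M \cap W| = \deg(W)$, each fiber over a $w_i$ has the expected dimension $r - r'$, the $H_j$'s cut these fibers in at least one point each, and $V \cap N$ is zero-dimensional so that Lemma~\ref{lem: degree-max} is applicable. This is a standard dimension-counting argument of Bertini type: the parameter space of admissible configurations is irreducible and each of the four conditions above is a non-empty Zariski-open constraint in it, hence a generic choice satisfies them simultaneously. Carrying out this transversality argument cleanly, and in particular ensuring the lift of each $w_i$ to $Z$ remains non-empty after the extra hyperplane cuts, is the technical heart of the proof.
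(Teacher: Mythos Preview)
The paper does not give a proof of this lemma: it is listed among ``a few well-known lemmas'' in the degree preliminaries and is simply quoted without argument. So there is nothing to compare against on the paper's side.

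Your proposal is a correct and standard way to establish the statement. A couple of minor remarks. First, in the reduction step you should note explicitly that $\deg\bigl(\bigcup_i \overline{L(V_i)}\bigr) \leq \sum_i \deg\bigl(\overline{L(V_i)}\bigr)$, since some of the $\overline{L(V_i)}$ may coincide or be contained in one another; equality need not hold, but the inequality is all you need. Second, when you assert that $L^{-1}(M)$ has codimension $r'$ in $\mathbb{K}^s$, this uses that $M$ is generic (so that $M$ meets the linear subspace $\mathrm{im}(L)$ transversally); this is worth a sentence, since for non-generic $M$ the codimension of $L^{-1}(M)$ could be smaller. With these two clarifications, the argument goes through exactly as you outline, and the Bertini-type simultaneous genericity you flag at the end is indeed routine over an algebraically closed field.
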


\begin{Lemma}[\cite{Heintz83}]
Let $I \subset \mathbb{Q}[x_1,x_2,\ldots,x_s]$ be a radical ideal of degree $\delta$.
Then there exist finitely many polynomials in $\mathbb{Q}[x_1,x_2,\ldots,x_s]$
generating $I$ and such that 
each of this polynomial has total degree less than or equal to $\delta$.
\end{Lemma}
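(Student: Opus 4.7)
The plan is to reduce to prime ideals via primary decomposition, then for each minimal prime use a generic Noether normalization to produce polynomials of bounded degree that cut out the variety, and finally pass from set-theoretic to ideal-theoretic generation using the radicality hypothesis. Concretely, since $I$ is radical, I would write $I = P_1 \cap \cdots \cap P_m$ where the $P_i$ are its minimal primes. Degree is additive over equidimensional components, so $\delta = \sum_i \deg(P_i)$ and in particular $\deg(P_i) \le \delta$ for every $i$. It thus suffices to produce, for each $P_i$, a finite family in $P_i$ of polynomials of total degree at most $\deg(P_i)$ whose common zero set is exactly $V(P_i)$; the union of these families lies in $I$, still has degrees bounded by $\delta$, and cuts out $V(I)$.

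For a fixed prime $P$ of dimension $r$ and degree $d$, the key geometric fact I would establish is: for any point $q \notin V(P)$ there exists $f \in P$ with $\deg f \le d$ and $f(q) \ne 0$. A generic linear projection $\pi: \mathbb{K}^s \to \mathbb{K}^{s-r}$ restricts to a finite surjective morphism from $V(P)$ onto an irreducible hypersurface $V(h) \subset \mathbb{K}^{s-r}$, and by the very definition of degree the polynomial $h$ has degree exactly $d$. Then $f := h \circ \pi$ lies in $P$ and has degree $d$; for a sufficiently generic $\pi$ one also has $\pi(q) \notin V(h)$, so $f(q) \ne 0$. Ranging $\pi$ over a Zariski-dense family and collecting the resulting $f$'s yields finitely many polynomials of degree at most $d$ in $P$ that cut out $V(P)$ set-theoretically. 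This geometric step is essentially a quantitative form of Noether normalization, and it is where the hypothesis on the degree gets used honestly.

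The main obstacle is upgrading this set-theoretic statement to ideal-theoretic generation. After the above steps I have $f_1, \ldots, f_N \in I$ of degree $\le \delta$ with $V(f_1, \ldots, f_N) = V(I)$, and the Nullstellensatz together with radicality of $I$ yields $\sqrt{(f_1, \ldots, f_N)} = I$; yet $(f_1, \ldots, f_N)$ need not equal $I$ on the nose. To close this gap I would invoke a regularity-type bound specific to radical ideals, showing that the finite-dimensional vector space $I \cap \mathbb{K}[x_1, \ldots, x_s]_{\le \delta}$ already generates $I$ as an ideal. The additional degree-$\le \delta$ generators needed to realize this again come from the Noether-projection construction, now applied to generic hyperplane sections of $V(I)$ of successively higher codimension, so that for each prime component we control not only the vanishing locus but also the local behavior at the generic point. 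This last step is the delicate part of the argument and is precisely where I would follow Heintz's original proof closely rather than reinvent it.
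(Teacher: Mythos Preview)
The paper does not prove this lemma at all; it is stated as a citation of Heintz's 1983 result and used as a black box. So there is no ``paper's own proof'' to compare your attempt against.

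That said, your sketch contains a genuine error in the very first reduction. You write $I = P_1 \cap \cdots \cap P_m$ and then assert that ``the union of these families lies in $I$, still has degrees bounded by $\delta$, and cuts out $V(I)$.'' Both parts of this are false. An element of $P_i$ need not lie in $I = \bigcap_j P_j$; it would have to belong to every other $P_j$ as well. And even ignoring that, the ideal generated by the union of generating sets of the $P_i$ is $P_1 + \cdots + P_m$, whose zero locus is $\bigcap_i V(P_i)$, which is the \emph{intersection} of the components, not $V(I) = \bigcup_i V(P_i)$. So the union of your families neither sits inside $I$ nor cuts out the right variety.

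The standard fix is to take products rather than unions: if for each $i$ you have $f_i \in P_i$ with $\deg f_i \le \deg(P_i)$ and $f_i(q) \neq 0$ for a given $q \notin V(I)$, then $\prod_i f_i$ lies in every $P_j$ (hence in $I$), has degree at most $\sum_i \deg(P_i) = \delta$, and does not vanish at $q$. With this repair, the Noether-projection idea you outline for the prime case is indeed the heart of Heintz's argument, and you are right that the passage from set-theoretic to ideal-theoretic generation is the subtle step; but as written, your reduction to primes collapses before you ever reach it.
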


\begin{Lemma}
\label{lem: degree-bezout}
Let $V := W \; \cap_{i=1}^e \, V_i$ with $\dim(W)=r$. Then we have 
$$ \deg(V) \leq \deg(W)\, \max (\{\deg(V_i)\mid i=1 \cdots e\})^r.$$
\end{Lemma}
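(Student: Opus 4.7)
\textbf{Proof plan for Lemma~\ref{lem: degree-bezout}.}
The plan is to induct on $r=\dim(W)$, with the key mechanism being: each time we intersect an irreducible $r$-dimensional variety with some $V_i$ that does not contain it, the dimension drops by at least one and the degree multiplies by at most $d:=\max_i\deg(V_i)$, thanks to the classical Bezout inequality $\deg(A\cap B)\leq\deg(A)\deg(B)$ (a consequence of Heintz~\cite{Heintz83} already used elsewhere in the paper). Because $W$ has dimension $r$, at most $r$ such drops can occur along any branch, which is exactly where the factor $d^r$ comes from.

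For the base case $r=0$, $W$ is a finite set of points, $V\subseteq W$, and the bound $\deg(V)\leq \deg(W)=\deg(W)\cdot d^0$ is trivial. For the inductive step, I would first decompose $W$ into its irreducible components $W=W_1\cup\cdots\cup W_k$ and use that $\deg(W)=\sum_j\deg(W_j)$ (a direct consequence of the definition given in the excerpt applied to each equidimensional stratum). Since $V=\bigcup_j(W_j\cap V_1\cap\cdots\cap V_e)$, it suffices to bound $\deg(W_j\cap V_1\cap\cdots\cap V_e)$ for each irreducible $W_j$ by $\deg(W_j)\cdot d^r$ and then sum.

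For a fixed irreducible component $W_j$, I would split into cases. If $\dim(W_j)<r$, the inductive hypothesis (applied with the smaller dimension) gives $\deg(W_j\cap V_1\cap\cdots\cap V_e)\leq \deg(W_j)\cdot d^{\dim W_j}\leq \deg(W_j)\cdot d^r$. If $\dim(W_j)=r$ and $W_j\subseteq V_i$ for every $i$, then $W_j\cap V_1\cap\cdots\cap V_e=W_j$, so the bound holds trivially. Otherwise, pick an index $i$ such that $W_j\not\subseteq V_i$; since $W_j$ is irreducible, $W_j\cap V_i$ is a proper closed subvariety, hence $\dim(W_j\cap V_i)\leq r-1$, and Bezout gives $\deg(W_j\cap V_i)\leq \deg(W_j)\cdot d$. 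Applying the inductive hypothesis to $W_j\cap V_i$ together with the remaining $V_\ell$ ($\ell\neq i$) yields
$$\deg\bigl((W_j\cap V_i)\cap\!\!\bigcap_{\ell\neq i}\!V_\ell\bigr)\leq \deg(W_j\cap V_i)\cdot d^{r-1}\leq \deg(W_j)\cdot d^r.$$
Summing over $j$ gives $\deg(V)\leq \deg(W)\cdot d^r$.

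The only subtle point, and the one I would be most careful with, is justifying that $\deg(W)$ equals the sum of the degrees of \emph{all} its irreducible components, not just the top-dimensional ones, so that the inductive contributions from lower-dimensional components of $W$ are properly absorbed in $\deg(W)\cdot d^r$. This follows from the additivity of degree under equidimensional decomposition (combined with the definition quoted in the excerpt), and is the only place where one must avoid the naive confusion between ``dimension of $W$'' and ``dimensions of its components''. Once this bookkeeping is settled the induction runs mechanically; no further algebraic-geometric machinery beyond Bezout's inequality is needed.
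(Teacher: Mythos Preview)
The paper does not actually prove Lemma~\ref{lem: degree-bezout}; it is listed among the ``well-known lemmas'' in the degree preliminaries subsection and stated without argument (it is essentially the refined B\'ezout inequality of Heintz~\cite{Heintz83}). So there is no paper proof to compare against.

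Your inductive argument is the standard route to this inequality and is correct. Two small remarks. First, in the third case you may as well keep $V_i$ in the intersection rather than writing $\bigcap_{\ell\neq i}V_\ell$: since $W_j\cap V_i\subseteq V_i$ the extra factor is harmless, and this avoids the reader wondering whether dropping $V_i$ changes anything. Second, the step $\deg(W_j\cap V_i)\leq\deg(W_j)\deg(V_i)$ is itself the two-term B\'ezout inequality (in Heintz's sense, with degree summed over all irreducible components); this is exactly what \cite{Heintz83} provides, so your citation is apt, but you should be explicit that you are invoking the general form and not merely the classical hypersurface case. With those clarifications the proof is complete; your closing caveat about additivity of degree over all irreducible components is precisely the definition adopted in the paper, so no extra work is needed there.
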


\begin{Proposition}
\label{prop: bi-rational-degree}
Let $X = x_1,x_2,\ldots,x_s$ and $Y = y_1,y_2,\ldots,y_t$
be pairwise different $s+t$ variables.
Let $M$ be an ideal in $\mathbb{Q}[Y]$
of degree $d_M$ and dimension $r$. 
Let $f_1, f_2, \cdots, f_s$ be $s$ polynomials in $\mathbb{Q}[Y]$,
with maximum total degree $d_f$. 
Denote by $I$ the ideal $\langle x_1-f_1, x_2-f_2, \ldots, x_s-f_s\rangle$.
 Then the ideal $J :=  I+ M$ 
has degree upper bounded by $d_M \, {d_f}^r$.
\end{Proposition}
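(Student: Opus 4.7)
The plan is to identify $V(J) \subset \overline{\mathbb{Q}}^{s+t}$ with the graph of the polynomial map $f = (f_1,\ldots,f_s)\colon V(M) \to \overline{\mathbb{Q}}^s$. Indeed, the projection $\pi\colon V(J) \to V(M)$ onto the $Y$-coordinates is bijective with regular inverse $Y \mapsto (f_1(Y),\ldots,f_s(Y),Y)$, so $V(J)$ is isomorphic (as a variety) to $V(M)$ and in particular $\dim V(J) = r$. This reduction is the conceptual core of the proof: it turns a degree question about $J$ in $s+t$ variables into a degree question about $M$ in $t$ variables after pullback.

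The next step is to bound $\deg V(J)$ directly from its definition. I would pick a generic affine linear subspace $L \subset \overline{\mathbb{Q}}^{s+t}$ of codimension $r$, written as $L = V(\ell_1,\ldots,\ell_r)$ for generic affine $\mathbb{Q}$-linear forms $\ell_i = \ell_i(X,Y)$. Since $\dim V(J) = r$ and $\mathrm{codim}\,L = r$, genericity makes $L \cap V(J)$ zero-dimensional and of cardinality exactly $\deg V(J)$. Transporting this intersection through the isomorphism $\pi$, points of $L \cap V(J)$ correspond bijectively to the points of
$$V(M) \,\cap\, V(\tilde\ell_1) \,\cap\, \cdots \,\cap\, V(\tilde\ell_r) \;\subset\; \overline{\mathbb{Q}}^t,$$
where $\tilde\ell_i(Y) := \ell_i(f_1(Y),\ldots,f_s(Y),Y)$ is a polynomial in $Y$ of total degree at most $\max(d_f,1) \leq d_f$ (the linear-in-$X$ part contributes degree $\leq d_f$ after substituting the $f_j$, and the linear-in-$Y$ part contributes degree $1$).

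Finally, I would apply Lemma~\ref{lem: degree-bezout} with $W := V(M)$ (of dimension $r$ and degree $d_M$) and the $r$ hypersurfaces $V(\tilde\ell_1),\ldots,V(\tilde\ell_r)$ of degrees at most $d_f$. The generic choice of $\ell_1,\ldots,\ell_r$ makes $V(M) \cap V(\tilde\ell_1) \cap \cdots \cap V(\tilde\ell_r)$ zero-dimensional, so its cardinality coincides with its degree, which is bounded by $d_M \cdot d_f^{\,r}$ by that lemma. Combining the two equalities $\deg V(J) = |L \cap V(J)| = |V(M)\cap V(\tilde\ell_1) \cap \cdots \cap V(\tilde\ell_r)|$ gives $\deg J \leq d_M \cdot d_f^{\,r}$, as desired. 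The main technical step to verify carefully is the genericity argument guaranteeing that both intersections are zero-dimensional simultaneously; this is standard from the degree definition recalled at the start of the subsection, applied once in $\overline{\mathbb{Q}}^{s+t}$ and once in $\overline{\mathbb{Q}}^t$, using that the correspondence between $L$'s and the induced $(\tilde\ell_1,\ldots,\tilde\ell_r)$'s is surjective onto a Zariski-dense family of $r$-tuples of affine forms on $V(M)$.
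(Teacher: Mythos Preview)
Your argument is essentially the same as the paper's: both slice $V(J)$ with $r$ generic affine linear forms $\ell_i(X,Y)$, push them down to $\tilde\ell_i(Y)=\ell_i(f(Y),Y)$ of degree at most $d_f$ via the graph isomorphism $V(J)\cong V(M)$, and then invoke Lemma~\ref{lem: degree-bezout} on $V(M)\cap\bigcap_i V(\tilde\ell_i)$. The one point the paper treats that you do not is the non-equidimensional case: your claim that a generic codimension-$r$ slice meets $V(J)$ in exactly $\deg V(J)$ points uses the definition of degree for \emph{equidimensional} varieties, whereas for mixed-dimensional $V(M)$ the generic slice misses the lower-dimensional components entirely; the paper closes this by first assuming $M$ equidimensional, then summing over an equidimensional decomposition $V(M)=\bigcup V_i$ with $\dim V_i=r_i\le r$ to get $\deg(J)\le\sum_i\deg(P_i)\,d_f^{r_i}\le d_M\,d_f^r$.
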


\begin{proof}
We assume first that $M$ is equidimensional.
Let $L := l_1,l_2,\ldots,l_r$ be $r$ linear forms in $X,Y$ such
that the intersection of the corresponding $r$  hyperplanes 
and $V(J)$ consists of finitely many points,
i.e. $H_{L} := J + \langle  L \rangle$ is zero dimensional.
By virtues of Lemma~\ref{lem: degree-max}, 
the degree of $J$ equals the maximal degree of $H_{L}$ 
among all possible choices of linear forms $l_1,l_2,\ldots,l_r$
satisfying the above conditions.

Let $L^*:=l_1^*,l_2^*,\ldots,l_r^*$,
where each $l_j^*$  ($j=1 \cdots r$)
is the polynomial obtained by substituting 
$x_i$ with $f_i$, for $i=1 \cdots s$, in the polynomials $l_j$.
Consider the ideal $L^*+M$ in $\mathbb{Q}[Y]$. 
It is easy to show that the canonical projection map 
$\Pi_Y$ onto the space of $Y$ coordinates  
is a  one-one-map between 
$V_{\mathbb{C}^t} ( M+ L^* )$ and
$\Pi_Y ( V_{\mathbb{C}^{t+s}}(H_{L}))$.
Therefore, $V_{\mathbb{C}^t}( M+ L^* )$ is zero dimensional
and $\deg(M+ L^*) = \deg(H_{L})$.
Hence, viewing 
$V_{\mathbb{C}^t}( M+ L^* )$ as 
$$V_{\mathbb{C}^t}(M)\bigcap_{j=1}^{r}\, V_{\mathbb{C}^t}(l_j^*)$$ 
and thanks to Lemma~\ref{lem: degree-bezout}, 
we have $\deg(V_{\mathbb{C}^t}( M+ L^* )) \leq d_M\, d_f^r$.
Therefore, we deduce that $\deg(J) = \max_{L} \deg(M+ L^*)  \leq d_M \, d_f^r$ holds, 
by Lemma~\ref{lem: degree-max}.

Assume now that $V_{\mathbb{C}^t}(M)$ is not necessarily equidimensional.
Let $V_1, V_2, \cdots, V_k$ be an irredundant equidimensional
decomposition of $V_{\mathbb{C}^t}(M)$, 
with corresponding radical ideals $P_1, P_2, \ldots, P_k$.
Then, applying the result proved in the first part of the proof to
each $I + P_i$ ($i=1\cdots k$),
we deduce 
$$
\begin{array}{r l}
\deg(J) = & \sum_{i=1}^k \; \deg(I+P_i)\\
        \leq & \sum_{i=1}^k\; \deg(P_i)\, d_f^{r_i}\\
	\leq & \sum_{i=1}^k\; \deg(P_i)\, d_f^r\\
         = &  d_M\, d_f^r,
\end{array}
$$
where $r_i$ is the dimension of $P_i$ in $\mathbb{Q}[Y]$.
This completes the proof.
\end{proof}

\begin{Remark}
For $J$ in Proposition~\ref{prop: bi-rational-degree}, a less tight degree bound
$$
 d_M\, d_f^{r+s}
$$
can easily be deduced from a generalized form of Bezout's bound, since $\deg(V_{\mathbb{C}^{t+s}}( M))$ has degree $d_M$ and if of dimension $r+s$ in
$\mathbb{C}^{t+s}$.
\end{Remark}



\begin{Example}
Consider $M := \langle n^2-m^3 \rangle$, $g_1 := x-n^2-n-m, g_2 := y-n^3-3 n+1$, and the ideal $J := M + \langle g_1, g_2 \rangle$. The ideal $M$ has degree $3$, and is of dimension $1$ in $\mathbb{Q}[n,m]$. The degree of $J$ is $9$, which can be obtained by computing the dimension of
$$\mathbb{Q}(a,b,c,d,e)[x,y,m,n]/(J + \langle a\, x +b\, y +c \,  n +d\, m +e \rangle),$$ where $a,b,c,d,e$ are indeterminates.
The degree bound estimated by Proposition~\ref{prop: bi-rational-degree}
is  $3 \times 3$, which agrees with the true degree. 
\end{Example} 
%

\else

In this paper, we make use of a few lemmas related
to the degrees of algebraic varieties.
These results can be found in the second version 
of~\cite{DBLP:journals/corr/abs-1201-5086},
in the {\em Computing Research Repository}.
\fi


\section{Invariant ideal of $P$-solvable recurrences}
\label{sec:algebraicRelations}
In this section, we focus on loops 
with no branches, 
where the study of loop invariants of such loops reduces to 
%
the study of algebraic relations
among the recurrence variables. 
In particular, we are interested in those
whose assignments induce  a called $P$-solvable recurrence.
We will first formalize the notion of $P$-solvable recurrence.
Then in the rest of this section, we will investigate
the shape of the closed form solutions of 
a $P$-solvable recurrence equation, for studying 
the degree and the dimension of invariant ideal.
We will provide degree estimates for the 
the invariant  ideal, which is useful for all invariant
generation methods which need a degree bound, like 
the proposed polynomial interpolation based method
and those in ~\cite{MS04a,MS04b,RK07b}.
Last but not least, we will investigate 
the dimension of the invariant ideal.
So that we can get a sufficient for non-trivial polynomial invariants
of a given $P$-solvable recurrence to exist.
Note that in our invariant generation method, we do not need (thus  never
compute) the closed form solutions explicitly.

A ``solvable'' recurrence relation is, literally,  a recurrence relation which can
be solved by a closed formula depending only on the index number.
The $P$-solvable
recurrence relations have poly-geometrical expressions (Definition~\ref{def: poly-geo})
as closed form solutions, which is equivalent to 
the notion of {\em solvable mapping} 
in~\cite{RK04} or {\em solvable loop} in~\cite{Kov08} in
the respective contexts.

\ifCoRR{
\begin{Definition}[Univariate  $P$-solvable recurrence]
\label{defn: p-solvable-base}
Given a recurrence $R: \; x(n+1)=\lambda \, x(n) + f(n)$ 
in $\mathbb{K}$, if $f(n)$ is a poly-geometrical expression
in $n$ over $\mathbb{K}$, then $R$ is called \emph{univariate $P$-solvable recurrence}.
\end{Definition}

A multivariate recurrence is called  $P$-solvable recurrence,
if the recurrence variables can  essentially  (may need a linear coordinate change) be solved out one by one from $P$-solvable univariate recurrences 
We can define multivariate  $P$-solvable recurrence as follows.
}
\fi

\begin{Definition}[$P$-solvable recurrence]
\label{defn: p-solvable}
Let $n_1, \ldots, n_k$ be positive integers and define $s := n_1 + \cdots + n_k$.
Let $M$ be  a square matrix
over  $\mathbb{Q}$ of order $s$.
We assume that $M$ is block-diagonal with the following shape:
$$
M := 
\left(
\begin{array}{cccc}
\mathbf{M}_{n_1 \times n_1} & \mathbf{0}_{n_1 \times n_2} & \ddots & \mathbf{0}_{n_1\times n_k}\\
\mathbf{0}_{n_2 \times n_1}  & \mathbf{M}_{n_2 \times n_2} & \ddots & \mathbf{0}_{n_2\times n_k}\\
\ddots &  \ddots & \ddots & \ddots\\
\mathbf{0}_{n_k \times n_1}  & \mathbf{0}_{n_k \times n_2} & \ddots & \mathbf{M}_{n_k\times n_k}\\
\end{array}
\right)
.$$
Consider an $s$-variable recurrence relation $R$ 
in the variables $x_1, x_2, \ldots, x_s$ and with the following form:
$$
\begin{array}{c}
\left(
\begin{array}{c}
x_1(n+1)\\
x_2(n+1)\\
x_3(n+1)\\
\vdots\\
x_s(n+1)
\end{array}
\right)
 = 
M

\times 

\left(
\begin{array}{c}
x_1(n)\\
x_2(n)\\
x_3(n)\\
\vdots\\
x_s(n)\\
\end{array}
\right)

+


\left(
\begin{array}{c}
{\mathbf{f}_1}_{n_1 \times 1}\\
{\mathbf{f}_2}_{n_2 \times 1}\\
{\mathbf{f}_3}_{ n_3 \times 1}\\
\vdots\\
{\mathbf{f}_k}_{n_k \times 1}\\
\end{array}
\right)
\end{array}
,$$
where 
$\mathbf{f}_1$ is a vector of length $n_1$ with coordinates in $\mathbb{Q}$ 
and where 
$\mathbf{f}_i$ is a tuple of length $n_i$ with coordinates
in the polynomial ring $\mathbb{Q}[x_1, \ldots, x_{n_1 + \cdots + n_{i-1}}]$,
for $i=2,\ldots,k$. 
Then, the recurrence relation $R$ is called \emph{$P$-solvable} over $\mathbb{Q}$ 
and the matrix $M$
is called the \emph{coefficient matrix} of $R$.
\end{Definition}






%
%
%


It is known that the solutions to $P$-solvable recurrences
are poly-geometrical expressions in $n$ w.r.t. the eigenvalues of 
the matrix $M$, see for example~\cite{RK04}. However, we need to  
estimate the ``shape'', e.g. the degree 
of those poly-geometrical expression
solutions, with the final goal of estimating the
``shape'' (e.g. degree, height, dimension)
the invariant ideal.
In this paper, we focus on degree and dimension estimates.

We first generalize 
the result of Lemma~\ref{lem: solvable-base} to the multi-variable case.

\ifCoRR
\begin{Proposition}
\label{prop: multi-Jordan-0}
Let $\alpha_1, \ldots, \alpha_m \in \overline{\mathbb{Q}}^* \setminus \{1\}$. 
Let $\lambda \in \overline{\mathbb{Q}}$ and
$M \in  \overline{\mathbb{Q}}^{s \times s}$ be a matrix in the following Jordan form 
$$ \left(
\begin{array}{c c c c c c}
\lambda		& 0 				& 0 		& \cdots 			& 0 & 0\\
1 	& \lambda 		& 0 		& \cdots			& 0 & 0\\
0				& 1 	& \lambda & 0 				& 0 & 0\\
0				& \ddots			& \ddots 	& \ddots 			&  \ddots & 0\\
0& 0 				& 0 				& \cdots 	& \lambda 	& 0  \\ 
0& 0 				& 0 				& \cdots 	& 1 	& \lambda \\
\end{array}
\right).$$
Consider an $s$-variable recurrence $R$ defined as follows:
$$X(n+1)_{s \times 1} = M_{s \times s}  \, X(n)_{s \times 1}  + F(n)_{s \times 1},\mbox{ where }$$ 
\begin{enumerate}[$(a)$]
	\item $X := x_1, x_2, \ldots, x_s$ are the recurrence variables;
	\item $F := (f_1, f_2, \ldots, f_s)$ is a list 
	of poly-geometrical expression in $n$ w.r.t. $\alpha_1, \ldots, \alpha_m$, 
	with maximal total degree $d$.
\end{enumerate}
Then we have:
\begin{enumerate}
\item if $\lambda=0$, then
$(f_1, f_1+f_2, \ldots, f_1+f_2+\cdots+f_{s})$
solves $R$. 
\item if $\lambda = 1$, then there exist $s$ 
poly-geometric expressions $(g_1,g_2, \ldots,g_s)$
in $\alpha_1, \ldots, \alpha_m$ such that 
for each $i \in 1 \cdots s$, 
$g_i$ is  a poly-geometrical expression
 		in $n$ w.r.t.
		$\alpha_1, \ldots, \alpha_m$
 		with total degree less or equal than $d+i$. 

\item if $\lambda \not \in \{0,1\}$, 
then there exists a solution
 of $R$, say $(y_1,y_2, \ldots,y_s)$,
 such that for each $i=1,\ldots,s$ we have
\begin{equation}
\label{eq:multi-Jordan-solution}
 y_i := c_i \lambda_i^n + g_i, \mbox{ where }
\end{equation}
for each $i \in 1\cdots s$:
  $(a)$ $c_i$ is a constant depending only on 
         the initial value of the recurrence; and
 	$(b)$ $g_i$  is like in the case of $\lambda=1$.
Moreover, assume further more that the following conditions hold:
\begin{enumerate}[$(i)$]
\item $\lambda$ is {\wm} independent w.r.t. $\alpha_1, \ldots, \alpha_m$;
\item $\deg(f_j,n)=0$ holds for all $j \in \{ 1,2,\ldots,s\}$.
\end{enumerate}
Then, for all $i=1,\ldots,s$, we can further choose $g_i$ such that 
$\deg(g_i, n) = 0$ holds and
the total degree of $g_i$ is less or equal than $\max(d,1)$. 
\end{enumerate}
\end{Proposition}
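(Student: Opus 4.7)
The plan is to induct on the Jordan-block row index $i$, reducing each row of $R$ to a univariate linear recurrence to which Lemma~\ref{lem: solvable-base} applies. The first row already has the form $x_1(n+1) = \lambda x_1(n) + f_1(n)$, so the base step is immediate. For $i \geq 2$, I would substitute the inductively obtained closed form of $x_{i-1}(n)$ into the $i$-th row of $R$, producing $x_i(n+1) = \lambda x_i(n) + h_i(n)$ with $h_i(n) := x_{i-1}(n) + f_i(n)$ a poly-geometric expression in $n$ w.r.t. $\alpha_1, \ldots, \alpha_m$, and then invoke Lemma~\ref{lem: solvable-base} again to extract $x_i(n)$.

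The three cases on $\lambda$ differ only in degree bookkeeping. For $\lambda = 0$ no summation actually occurs; the $i$-th equation degenerates to $x_i(n+1) = x_{i-1}(n) + f_i(n)$ and the claimed tuple is checked by a routine telescoping argument, where the implicit choice of initial values makes the partial sums start from the correct index. For $\lambda = 1$, each application of Lemma~\ref{lem: solvable-base} inflates the degree in $n$ by at most $1$, so after $i$ rounds the total degree of $g_i$ is at most $d + i$. For $\lambda \not\in \{0,1\}$, the Lemma contributes a homogeneous summand $c_i \lambda^n$ (with $c_i$ depending only on initial values) together with a particular poly-geometric solution whose degree is controlled exactly as in the $\lambda=1$ case.

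The delicate point, where I expect the main technical work to lie, is the refinement under conditions $(i)$ and $(ii)$. Here the stronger conclusion of Lemma~\ref{lem: solvable-base} is needed, which hinges on the exponential-free part of $(1/\lambda)^n h_i(n)$ being zero. I would establish this by strengthening the induction hypothesis to a shape-invariant: the poly-geometric expression $h_i(n)$ carries no $\lambda^n$-monomial apart from the one inherited from the previous step's $c_{i-1}\lambda^n$, and no product of the $\alpha_j^n$'s appearing in $h_i$ coincides with $\lambda^n$. The weak multiplicative independence condition $(i)$ is precisely what rules out any such coincidence, so that no \emph{new} $\lambda^n$ contribution ever arises from $g_{i-1}$ or $f_i$; condition $(ii)$ further prevents the coefficient of $\lambda^n$ from becoming a non-constant polynomial in $n$.

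The last step, and the crux of the obstacle, is to show that the $c_{i-1}\lambda^n$ summand in $h_i$ can always be absorbed into the next homogeneous term $c_i\lambda^n$ by a suitable adjustment of the initial value of $x_i$, rather than forcing the appearance of an $n\,\lambda^n$ factor in the particular solution. Assuming this absorption goes through cleanly, the total degree of $g_i$ remains bounded by $\max(d,1)$ with $\deg(g_i, n) = 0$, yielding the claimed refinement.
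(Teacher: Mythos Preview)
Your approach is essentially the same as the paper's: the paper's proof is a two-sentence sketch stating that the variables are solved one after another, that the case $\lambda=0$ is a direct verification, and that the case $\lambda\neq 0$ follows ``by induction on $s$ with Lemma~\ref{lem: solvable-base}.'' Your row-by-row induction, feeding the closed form of $x_{i-1}$ into the $i$-th equation and invoking Lemma~\ref{lem: solvable-base}, is exactly this scheme spelled out, and your degree bookkeeping for the three cases on $\lambda$ is the intended content of that induction; the paper gives no further detail on the refinement step you single out as delicate.
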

\begin{proof}
We observe that the recurrence variables of $R$ can be solved one after 
the other, from $x_1$ to $x_s$.
When $\lambda=0$, the conclusion is easy to verify. The case $\lambda \neq 0$
is easy to prove by
induction on $s$ with Lemma~\ref{lem:  solvable-base}.
\end{proof}
\fi

\begin{Proposition}
\label{prop: multi-Jordan}
Let $\lambda_1, \ldots, \lambda_s, \alpha_1, \ldots, \alpha_m \in \overline{\mathbb{Q}}^* \setminus \{1\}$. 
Let
$M \in  \overline{\mathbb{Q}}^{s \times s}$ be a matrix in the following Jordan form 
$$ \left(
\begin{array}{c c c c c c}
\lambda_1 		& 0 				& 0 		& \cdots 			& 0 & 0\\
\epsilon_{2,1} 	& \lambda_2 		& 0 		& \cdots			& 0 & 0\\
0				& \epsilon_{3,2} 	& \lambda_3 & 0 				& 0 & 0\\
0				& \ddots			& \ddots 	& \ddots 			&  \ddots & 0\\
0& 0 				& 0 				& \cdots 	& \lambda_{s-1} 	& 0  \\ 
0& 0 				& 0 				& \cdots 	&\epsilon_{s,s-1} 	& \lambda_s \\
\end{array}
\right),$$
 where for $i=2,\ldots, s$, $\epsilon_{i,i-1}$ is either $0$ or $1$.
Consider an $s$-variable recurrence $R$ defined as follows:
$$X(n+1)_{s \times 1} = M_{s \times s}  \, X(n)_{s \times 1}  + F(n)_{s \times 1},$$ 
where 
\begin{enumerate}
	\item $X := x_1, x_2, \ldots, x_s$ are the recurrence variables;
	\item $F := (f_1, f_2, \ldots, f_s)$ is a list 
	of poly-geometrical expression in $n$ w.r.t. $\alpha_1, \ldots, \alpha_m$, 
	with maximal total degree $d$.
\end{enumerate}
Then there exists a solution
 of $R$, say $(y_1,y_2, \ldots,y_s)$,
 such that for each $i=1,\ldots,s$ we have 
\begin{equation}
\label{eq:multi-Jordan-solution}
 y_i := c_i \lambda_i^n + g_i,
\end{equation}
where
  \begin{enumerate}
 	\item[$(a)$] $c_i$ is a constant depending only on 
         the initial value of the recurrence and
 	\item[$(b)$] $g_i$ is  a poly-geometrical expression
 		in $n$ w.r.t. \\
		$\lambda_1, \ldots, \lambda_{i-1}, \alpha_1, \ldots, \alpha_m$
 		with total degree less or equal than $d+i$. 
  \end{enumerate}
Assume further more that the following conditions hold:
\begin{enumerate}
\item[$(i)$] $\lambda_1, \lambda_2, \ldots, \lambda_s$ is {\wm} independent; 
\item[$(ii)$] 
$\deg(f_j,n)=0$ holds for all $j \in \{ 1,2,\ldots,s\}$.
\end{enumerate}
Then, for all $i=1,\ldots,s$, we can further choose $y_i$ such that 
$\deg(g_i, n) = 0$ holds and
the total degree of $g_i$ is less or equal than $\max(d,1)$.
\end{Proposition}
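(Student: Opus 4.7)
The plan is to prove the claim by induction on $s$, peeling off the recurrence variables in order $x_1, \ldots, x_s$ and reducing each step to the univariate Lemma~\ref{lem: solvable-base}. For $s=1$ the system is exactly $x_1(n+1) = \lambda_1\,x_1(n) + f_1(n)$, so that lemma directly yields $y_1 = c_1\lambda_1^n + g_1$, where $g_1$ is poly-geometrical in $n$ w.r.t. $\alpha_1,\ldots,\alpha_m$ with $\deg(g_1,n) \le \deg(f_1,n)+1$ and $\deg(g_1,\alpha_j^n) \le \deg(f_1,\alpha_j^n)$, giving total degree at most $d+1 = d+i$. For the inductive step from $s-1$ to $s$, once $y_1, \ldots, y_{s-1}$ in the claimed form are in hand, substitute $y_{s-1} = c_{s-1}\lambda_{s-1}^n + g_{s-1}$ into the last equation of the system. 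This yields a univariate recurrence
\[
x_s(n+1) = \lambda_s\,x_s(n) + h_s(n),
\]
with $h_s(n) := \epsilon_{s,s-1}\bigl(c_{s-1}\lambda_{s-1}^n + g_{s-1}\bigr) + f_s(n)$, which is poly-geometrical in $n$ w.r.t. $\lambda_1,\ldots,\lambda_{s-1},\alpha_1,\ldots,\alpha_m$ and has total degree at most $\max(1,d+s-1,d) = d+s-1$.

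Applying Lemma~\ref{lem: solvable-base} once more to this recurrence produces $y_s = c_s\lambda_s^n + g_s$, whose bases are contained in those of $h_s$, matching the form asserted in~(\ref{eq:multi-Jordan-solution}); since the lemma increases the $n$-degree by at most one, the total degree of $g_s$ is at most $d+s$, as required. For the \emph{moreover} part, assume hypotheses $(i)$ and $(ii)$. Condition $(ii)$ propagates through the induction: if $\deg(g_{i-1},n) = 0$ and every $\deg(f_j,n) = 0$, then $\deg(h_i,n) = 0$. To conclude that also $\deg(g_i,n) = 0$, I would invoke the sharper conclusion of Lemma~\ref{lem: solvable-base}, which applies exactly when the exponential-free part of the canonical form of $(1/\lambda_i)^n h_i(n)$ vanishes, i.e., when no term of $h_i$ has base equal to $\lambda_i$. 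The bases appearing in $h_i$ are non-negative integer products of elements of $\{\lambda_1,\ldots,\lambda_{i-1}\}\cup\{\alpha_1,\ldots,\alpha_m\}$, and the weak multiplicative independence of $\lambda_1,\ldots,\lambda_s$ from $(i)$ rules out $\lambda_i$ matching any such $\lambda$-generated product; combined with the disjointness of the $\alpha_j$'s from the $\lambda$-chain, this yields the required vanishing, and the total degree bound $\max(d,1)$ then follows from the per-step estimate.

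The main obstacle lies precisely in the moreover part: at every induction step one must verify that the exponential-free hypothesis of Lemma~\ref{lem: solvable-base} is met, which amounts to a careful bookkeeping of the set of monomial bases propagated through the cascade $g_1 \to g_2 \to \cdots \to g_s$ and to invoking hypothesis $(i)$ to exclude accidental coincidences between $\lambda_i$ and products of previously generated bases. By contrast, the degree bookkeeping itself is routine once the shape of each $g_i$ has been tracked, since it is a direct accumulation of the inequalities supplied by the univariate lemma at each step.
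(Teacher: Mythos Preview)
Your proposal is correct and follows essentially the same route as the paper: induction on $s$, peeling off variables in order and reducing each step to the univariate Lemma~\ref{lem: solvable-base}, with the degree bounds accumulated exactly as you describe. You are in fact more explicit than the paper about why the sharper clause of Lemma~\ref{lem: solvable-base} applies in the ``moreover'' part (the paper simply asserts $\deg(g_s,n)=\deg(\tilde f,n)=0$ without spelling out the exponential-free verification), so your bookkeeping of the propagated bases and your invocation of hypothesis~$(i)$ match and slightly clarify the paper's own argument.
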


\begin{proof}
We observe that the recurrence variables of $R$ can be solved one after 
the other, from $x_1$ to $x_s$. 
We proceed by induction on $s$.
The  case $s=1$ follows directly from Lemma~\ref{lem:  solvable-base}.
Assume from now on that $s>1$ holds and that we 
have found solutions $(y_1,y_2, \ldots, y_{s-1})$
for the first $s-1$ variables satisfying the requirements,
that is, Relation~(\ref{eq:multi-Jordan-solution})
with $(a)$ and $(b)$.
We define
\begin{equation}
\tilde{f}(n) = f_s(n) - \epsilon_{s,s-1} \, y_{s-1}(n+1).
\end{equation}
Note that $\tilde{f}(n)$
is a poly-geometrical expression in $n$ w.r.t. 
$\lambda_1, \ldots, \lambda_{s-1}, \alpha_1, \ldots, \alpha_m$ with 
total degree less than or equal to $d+s-1$.
Moreover, 
for $v \in \{ n, \lambda_1^n, \ldots, \lambda_{s-1}^n,$ $\alpha_1^n, \ldots, \alpha_m^n \}$ 
we have 
\begin{equation}
\label{eq: 3.1}
\deg(\tilde{f}(n), v) \leq \max\left(\deg(f_s(n),v), \deg(y_{s-1}(n),v)\right).
\end{equation}
It remains to solve $x_s$ from
	\begin{equation}
		\label{eq: recur-x_i}
		x_{s}(n+1) = \lambda_s \, x_s(n) + \tilde{f}(n)
		\end{equation}
	in order to solve all the variables $x_1, \ldots, x_s$.
	Again, by Lemma~\ref{lem: solvable-base}, there exists a poly-geometrical expression $$y_s := c_s\,\lambda_s^n + g_s(n),$$ 
	where $g_s(n)$ is poly-geometrical expression
	in $n$ w.r.t. \\
	$\lambda_1, \ldots, \lambda_{s-1}, \alpha_1, \ldots, \alpha_m$, of total degree upper bounded by $d+s$.
This completes the proof of the properties $(a)$ and $(b)$ for $y_s$.

Now we assume that $(i)$, $(ii)$ hold 
and we prove the second half of the conclusion.
Observe that we have $\deg(g_s(n),n) = \deg(\tilde{f}(n), n)$, which is $0$,
according to Relation~(\ref{eq: 3.1}) and the fact that
we can choose $y_{s-1}$ such that  $\deg(y_{s-1}(n),n)=0$ holds.
Next, we observe that 
for each $$v \in \{n, \lambda_1^n, \ldots, \lambda_{s-1}^n, \alpha_1^n, \ldots, \alpha_m^n\},$$
we have $\deg(g_s(n),v)=\deg(\tilde{f}(n),v)$, which is less or equal to 
$\deg(y_{s-1}(n),v)$ by Relation~(\ref{eq: 3.1}). 
Therefore, the total degree of $g_s$ is less or equal 
than the total degree of $y_{s-1}$,
which is less or equal than $\max(d,1)$ by our induction hypothesis.
This completes the proof.
\end{proof}

\begin{Theorem}
\label{thm: shape-degree}
Let $R$ be a $P$-solvable recurrence relation. Using the 
same notations $M, k, s, F, n_1, n_2, \ldots, n_k$ as 
in  Definition~\ref{defn: p-solvable}.
Assume $M$ is in a Jordan form.
Assume
the eigenvalues $\lambda_1, \ldots, \lambda_{s}$ 
of $M$ (counted with multiplicities)  are different from $0,1$,
with $\lambda_i$ being the $i$-th diagonal element of $M$.
Assume for each block $j$
%
the total degree of any polynomial in 
$\mathbf{f}_j$ (for $i=2 \cdots k$) is upper bounded by $d_j$.
For each $i$, we denote by $b(i)$ the block number of the index $i$,
that is,  
\begin{equation}
\sum_{j=1}^{b(i)-1} \, n_j < i \leq \sum_{j=1}^{b(i)} \, n_j.
\end{equation}
Let $D_1 := n_1$ and for $all j \in \{ 2, \ldots,  k\}$
let $D_{j} : = d_{j} \, D_{j-1}+n_j.$ 
Then, there exists a solution
$(y_1,y_2, \ldots,y_s)$ for $R$
of the following form:
\begin{equation}
y_i := c_i \lambda_i^n + g_i,
\end{equation}
for all $i \in 1 \cdots s$,  where
  \begin{enumerate}
 	\item[$(a)$] $c_i$ is a constant depending only on the initial value of the recurrence;
 	\item[$(b)$] $g_i$ is  a poly-geometrical expression
 		in $n$ w.r.t. $\lambda_1, \ldots, \lambda_{i-1}$,
		and with total degree less or equal than $D_{b(i)}$. 
  \end{enumerate}
Moreover, 
if $\{\lambda_1,\ldots,\lambda_{s}\}$
is {\wm} independent,
then, for all $i=1,\ldots,k$, we can further choose $y_i$ 
such that $\deg(g_i, n) = 0$ holds and
the total degree of $g_i$ is less or equal 
than $\prod_{2 \leq  t \leq b(i)} \max(d_t,1)$.
\end{Theorem}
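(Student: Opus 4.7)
The plan is to induct on the block index $j = 1, \ldots, k$, solving block by block. The crucial structural fact is that, by the $P$-solvable hypothesis, the forcing vector $\mathbf{f}_j$ of the $j$-th block is a polynomial only in the variables $x_1, \ldots, x_{N_{j-1}}$ (where $N_{j-1} := n_1 + \cdots + n_{j-1}$) already appearing in previous blocks. Since the blocks are decoupled by the block-diagonal form of $M$, once I have closed-form solutions for blocks $1, \ldots, j-1$, I can substitute them into $\mathbf{f}_j$ to obtain a purely poly-geometric forcing term, and then apply Proposition~\ref{prop: multi-Jordan} to the $j$-th block, viewed as a stand-alone system whose coefficient matrix $\mathbf{M}_{n_j \times n_j}$ is itself a union of Jordan blocks with eigenvalues picked from $\{\lambda_i : b(i) = j\}$.

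For the base case $j = 1$, the forcing $\mathbf{f}_1$ has coordinates in $\mathbb{Q}$, so its total degree is $0$ and it involves no $n$-exponential variables. Proposition~\ref{prop: multi-Jordan} applied to the first block (with $d = 0$) yields a solution $y_i = c_i\lambda_i^n + g_i$ for $1 \leq i \leq n_1$, where $g_i$ is poly-geometric in the eigenvalues $\lambda_1, \ldots, \lambda_{i-1}$ with total degree $\leq 0 + i \leq n_1 = D_1$. This establishes both claim $(a)$ and claim $(b)$ for $i$ in block $1$.

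For the inductive step, assume that we have found solutions $y_1, \ldots, y_{N_{j-1}}$ satisfying the theorem's conclusions, with each $g_i$ of total degree at most $D_{j-1}$. Substituting these into $\mathbf{f}_j$ produces a vector $\tilde{\mathbf{f}}_j$ whose coordinates are poly-geometric in $n$ with respect to $\lambda_1, \ldots, \lambda_{N_{j-1}}$. Since $\mathbf{f}_j$ has total degree at most $d_j$ in its polynomial arguments and each $y_i$ substituted has total degree at most $D_{j-1}$ (including the $c_i\lambda_i^n$ term, of degree $1$), the resulting total degree of $\tilde{\mathbf{f}}_j$ is at most $d_j \cdot D_{j-1}$. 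Applying Proposition~\ref{prop: multi-Jordan} to block $j$ with forcing of total degree $\leq d_j D_{j-1}$, I obtain solutions $y_i = c_i\lambda_i^n + g_i$ for $i$ in block $j$, with $g_i$ poly-geometric in $\lambda_1, \ldots, \lambda_{i-1}$ and of total degree at most $d_j D_{j-1} + n_j = D_j$. This closes the induction for the main claim.

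For the ``moreover'' part, the same induction works under the extra assumptions, but I must track the quantity $\deg(g_i, n)$ as well. In the base case, the forcing $\mathbf{f}_1$ has $\deg(\mathbf{f}_1, n) = 0$ and $\{\lambda_1, \ldots, \lambda_{n_1}\}$ is {\wm} independent (as a subset of the full list, which is assumed so), so the strengthened part of Proposition~\ref{prop: multi-Jordan} applies and gives $\deg(g_i, n) = 0$ and total degree $\leq \max(0, 1) = 1 = \prod_{2 \leq t \leq 1}\max(d_t, 1)$ (empty product). For the inductive step, if $\deg(g_i, n) = 0$ for all $i \leq N_{j-1}$, then after substitution we have $\deg(\tilde{\mathbf{f}}_j, n) = 0$ as well; and the total degree of $\tilde{\mathbf{f}}_j$ is bounded by $d_j \cdot \prod_{2 \leq t \leq j-1}\max(d_t, 1)$. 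Invoking the strengthened conclusion of Proposition~\ref{prop: multi-Jordan} again yields $\deg(g_i, n) = 0$ and total degree $\leq \max(d_j \cdot \prod_{2 \leq t \leq j-1}\max(d_t, 1), 1) = \prod_{2 \leq t \leq j}\max(d_t, 1)$ for $i$ in block $j$, as required.

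The main delicate point is the bookkeeping in the inductive degree bound for the substituted forcing: one must be careful that the total degree of the composition of a degree-$d_j$ polynomial with poly-geometric expressions of total degree $\leq D_{j-1}$ (respectively $\leq \prod_{2 \leq t \leq j-1}\max(d_t, 1)$ in the ``moreover'' case) is indeed bounded by the product, rather than a larger expression. This works because total degree is multiplicative under composition with poly-geometric expressions where $n$ and the $n$-exponential variables are treated as independent indeterminates, which is exactly the framework of Definition~\ref{def: poly-geo}.
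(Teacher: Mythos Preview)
Your proof is correct and follows essentially the same approach as the paper's: induction on the number of blocks, with the base case and the inductive step each handled by an application of Proposition~\ref{prop: multi-Jordan} after substituting the previously obtained closed forms into the forcing term of the current block. Your treatment is in fact more detailed than the paper's own proof, which does not spell out the ``moreover'' part or the degree bookkeeping for the substituted forcing.
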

\begin{proof}
We proceed by induction on the number of blocks, that is, $k$.
The case $k=1$ follows immediately from Proposition~\ref{prop: multi-Jordan}.
Assume from now on that the conclusion holds for a value 
$k=\ell$, with ${\ell} \geq 1$
and let us prove that it also holds for $k = \ell +1$.
We apply the induction hypothesis to solve the first $\ell$ blocks of variables,
and suppose that $\mathbf{y}_{\ell}$ is a solution satisfying 
the properties in the conclusion.
For solving the variables in the $(\ell + 1)$-th block, we substitute 
$\mathbf{y}_{\ell}$ to $f_{\ell + 1}$ and obtain a tuple 
of poly-geometrical expressions in $n$
w.r.t the eigenvalues of the first $\ell$ blocks and
 with total degree bounded by  $d_{\ell} \, D_{\ell}$.
Therefore, applying again Proposition~\ref{prop: multi-Jordan}, we can find solutions 
for the variables in the  $(\ell + 1)$-th block 
satisfying the properties required in the 
conclusion. This completes the proof.
\end{proof}

Note that the degree estimate in Theorem~\ref{thm: shape-degree} 
depends on how the block structure of the recurrence is exploited,
for example, a $2 \times 2$ diagonal matrix can be viewed as 
a matrix with a single block or a matrix with two $1 \times 1$ diagonal blocks.
\ifCoRR
\else
Please refer to~\cite{DBLP:journals/corr/abs-1201-5086},
for finer results on
degree estimates of the poly-geometric expression solutions, 
 with $0$ and $1$ being eigenvalues.  
\fi

\ifCoRR
In practice, one might want to decouple the recurrence first, and then
study the recurrence variable one by one (after a linear coordinate change) 
to get better degree estimates for the
poly-geometrical expression solutions,
regarded as polynomials of $n$-exponential terms as the eigenvalues of 
the coefficient matrix.
We will just use a simple example to illustrate this idea.
\begin{Example}
Consider the recurrence:
$$
\left(
\begin{array}{c}
x(n+1)\\
y(n+1)\\
z(n+1)
\end{array}
\right)
  \; := \; 
  \left(
\begin{array}{ccc}
2 & 0 & 0\\
0 & 3 & 0\\
0 & 0 & 3
\end{array}
\right)
\,
 \times
\,
    \left(
\begin{array}{c}
x(n)\\
y(n)\\
z(n)
\end{array}
\right)
\;
+ 
\;
  \left(
\begin{array}{c}
0\\
x(n)^2\\
x(n)^3
\end{array}
\right)
$$

Viewing the recurrence as two blocks $(x)$ and $(y,z)$,
the degree estimate according to Theorem~\ref{thm: shape-degree} 
would be bounded by $5$  ($3 \times 1 + 2$).

If we decouple the $(y,z)$ block to the following two recurrences
$$ y(n+1) = 3\, y(n) + x(n)^2 \mbox{ and } z(n+1) = 3\, z(n) + x(n)^3 ,$$

the we can easily deduce that the degree of the poly-geometrical expression 
for $y$ and $z$ are upper bounded by $2$ and $3$ respectively, again according 
to  Theorem~\ref{thm: shape-degree}.
%
\end{Example}
\fi

It is easy to generalize the previous results to the case
of a matrix $M$ which is not in Jordan form.
Let $Q$ be a non-singular matrix such that  
$J :=  Q \, M \, Q^{-1}$ is a Jordan form of $M$.
Let the original recurrence $R$ be
$$X(n+1) = M \, X(n) + F.$$
Consider the following recurrence $R_{Q}$  
           $$Y(n+1) = J \,  Y(n) + Q F.$$
It is easy to check that
if $$\left(y_1(n), y_2(n), \ldots, y_s(n)\right)$$
solves  $R_{Q}$, 
then $$Q^{-1} \, \left(y_1(n), y_2(n), \ldots, y_s(n)\right)$$
solves $R$.
Note that 
an invertible matrix over $\overline{\mathbb{Q}}$  maps a
tuple of poly-geometrical expressions to another tuple of 
poly-geometrical expressions; moreover it preserves 
the highest degree among the expressions in the tuple.

We turn now our attention to the question of estimating 
the degree of the invariant ideal of a $P$-solvable recurrence relation.

\ifCoRR
\begin{Proposition}
\label{prop: iv-R}
Let $R$ be an $s$-variable $P$-solvable recurrence relation,
with recurrence variables $(x_1,x_2, \ldots,x_s)$. 
Let $\mathcal{I} \subset {\mathbb{Q}}[x_1,x_2, \ldots, x_s]$ 
be the invariant ideal of $R$.
Denote by $\mathcal{I}^e$ the extension of $\mathcal{I}$ in 
$\overline{\mathbb{Q}}[x_1,x_2, \ldots, x_s]$.
Let $A = \alpha_1, \alpha_2, \ldots, \alpha_s$ be
the eigenvalues (counted with multiplicities) of the coefficient matrix of $R$.
Let  $\mathcal{M}$ be the multiplicative relation ideal of
 $A$  associated with variables $y_1,\ldots, y_s$. 
Then, there exists a sequence of $s$ poly-geometrical expressions 
in $n$ w.r.t. 
$\alpha_1, \alpha_2, \ldots, \alpha_s$, say 
$$f_1(n, \alpha_1^n, \ldots, \alpha_k^n), \ldots, f_s(n, \alpha_1^n, \ldots, \alpha_k^n),$$ which solves $R$.
Moreover, 
 we have
$$ \mathcal{I}^e \, = \, \left( \mathcal{S} +\mathcal{M} \right) 
\;  \cap \; \overline{\mathbb{Q}}[x_1,x_2, \ldots, x_s],$$
where $\mathcal{S}$ is the ideal generated by
$\langle x_1 - f_1(n, y_1, \ldots, y_s), \ldots, x_s-f_s(n, y_1, \ldots, y_s)$ in $\overline{\mathbb{Q}}[x_1,x_2, \ldots, x_s,n,y_1,\ldots,y_s]$.
\end{Proposition}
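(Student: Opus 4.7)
First, I invoke the results developed earlier in the section---the multi-variable extension of Lemma~\ref{lem: solvable-base} given by Proposition~\ref{prop: multi-Jordan}, combined with the change of basis to Jordan form described after Theorem~\ref{thm: shape-degree}---to produce explicit poly-geometrical closed-form solutions $(f_1, \ldots, f_s)$ of $R$ consistent with the prescribed initial conditions. Each $f_i$ is a polynomial in $n$ and the $n$-exponential variables $\alpha_1^n, \ldots, \alpha_s^n$ over $\overline{\mathbb{Q}}$. Reading the $y_j$ as indeterminates standing in for the $\alpha_j^n$, I form $f_i(n, y_1, \ldots, y_s) \in \overline{\mathbb{Q}}[n, y_1, \ldots, y_s]$ together with the ideal $\mathcal{S}$ as in the statement.

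For the inclusion from right to left, let $p \in (\mathcal{S} + \mathcal{M}) \cap \overline{\mathbb{Q}}[x_1, \ldots, x_s]$. For every $n \in \mathbb{N}$, consider the evaluation homomorphism $\phi_n$ sending $x_i$ to $f_i(n, \alpha_1^n, \ldots, \alpha_s^n)$ and $y_j$ to $\alpha_j^n$. Every generator of $\mathcal{S}$ vanishes under $\phi_n$ by construction, and every generator of $\mathcal{M}$ vanishes by the very definition of the multiplicative relation ideal. Hence $\phi_n(p) = p(x_1(n), \ldots, x_s(n)) = 0$ for all $n$, so $p$ is an inductive invariant and lies in $\mathcal{I}^e$.

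For the converse, let $p \in \mathcal{I}^e$. Eliminating $x_1, \ldots, x_s$ via the substitution $x_i \mapsto f_i(n, y_1, \ldots, y_s)$ produces a remainder $q(n, y_1, \ldots, y_s) \in \overline{\mathbb{Q}}[n, y_1, \ldots, y_s]$ with $p - q \in \mathcal{S}$, and the assumption $p \in \mathcal{I}^e$ forces $q(n, \alpha_1^n, \ldots, \alpha_s^n) = 0$ for every $n \in \mathbb{N}$. It remains to show $q \in \mathcal{M}$. I reduce $q$ modulo a Gr\"obner basis of the binomial ideal $\mathcal{M}$, so that the monomials of the normal form $\bar{q}$ use only a multiplicatively independent sub-tuple $(\alpha_{i_1}, \ldots, \alpha_{i_r})$ of the non-zero eigenvalues, with the variables $y_{\ell+1}, \ldots, y_s$ attached to zero eigenvalues having been killed by the corresponding generators of $\mathcal{M}$. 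Since $\bar{q}(n, \alpha_{i_1}^n, \ldots, \alpha_{i_r}^n)$ still vanishes for every $n$, Lemma~\ref{lem: transcent} forces $\bar{q} = 0$, giving $q \in \mathcal{M}$ and hence $p \in \mathcal{S} + \mathcal{M}$. The main obstacle is this final reduction step: one has to argue carefully that Gr\"obner reduction against $\mathcal{M}$ indeed leaves only multiplicatively independent exponents, so that Lemma~\ref{lem: transcent} actually applies; once this bookkeeping is in place, both inclusions close quickly.
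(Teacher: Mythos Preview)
Your expansion is faithful to the paper's own proof, which is a two-line sketch invoking Theorem~\ref{thm: shape-degree} for the existence of the closed forms and then asserting that ``the conclusion follows from Lemma~\ref{lem: transcent}.'' You have correctly unpacked both inclusions, and the right-to-left direction is exactly as it should be.

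The one place where your write-up is imprecise is the claim that Gr\"obner reduction modulo $\mathcal{M}$ leaves $\bar q$ supported only on a multiplicatively independent sub-tuple of the eigenvalues. That is not true for an arbitrary term order: with $\alpha_1=2$, $\alpha_2=4$ and $y_1>y_2$, reduction by $y_1^2-y_2$ still leaves both variables present. What \emph{is} true, and is all you need, is that any two distinct standard monomials $m_1,m_2$ modulo $\mathcal{M}$ satisfy $m_1(\alpha)\neq m_2(\alpha)$: otherwise $m_1-m_2\in\mathcal{M}$ would be a nonzero element already in normal form, which is impossible. Consequently
\[
\bar q(n,\alpha_1^n,\ldots,\alpha_s^n)\;=\;\sum_m c_m(n)\,m(\alpha)^n
\]
is, after expanding each $c_m(n)$ into powers of $n$, a poly-geometrical expression in canonical form in the sense of Definition~\ref{def: poly-geo} (the only standard monomial with $m(\alpha)=1$ is $m=1$, giving the exponential-free part). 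The uniqueness lemma immediately following that definition then forces every coefficient to vanish, so $\bar q=0$ and $q\in\mathcal{M}$. This is precisely the content the paper is gesturing at with its bare citation of Lemma~\ref{lem: transcent}; once you phrase the last step this way, the ``main obstacle'' you flag disappears and no further bookkeeping is needed.
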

\begin{proof}
The existence of $f_1, f_2, \ldots, f_s$ follows by Theorem~\ref{thm: shape-degree} and the fact that linear combination of poly-geometrical expressions 
w.r.t. $n$ are still poly-geometrical expressions.
The conclusion follows from
Lemma~\ref{lem: transcent}.
\end{proof}
\fi

\ifCoRR{
The following lemma is not hard to prove and one can find 
a proof in~\cite{KZ08}.
\begin{Lemma}
\label{lem: iv-ext}
Let $R$ be a $P$-solvable recurrence relation
defining $s$ sequences in $\mathbb{Q}^s$,
with recurrence variables $(x_1,x_2, \ldots,x_s)$. 
Let $\mathcal{I}$ be the invariant ideal of $R$
in  ${\mathbb{Q}}[x_1,x_2, \ldots, x_s]$;
and let $\overline{\mathcal{I}}$  be  the invariant ideal of $R$
in  $\overline{\mathbb{Q}}[x_1,x_2, \ldots, x_s]$.
Then  $\overline{\mathcal{I}}$ equals to $\mathcal{I}^e$, the extension of $\mathcal{I}$ in 
$\overline{\mathbb{Q}}[x_1,x_2, \ldots, x_s]$.
\end{Lemma}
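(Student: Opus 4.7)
The plan is to prove the two inclusions $\mathcal{I}^e \subseteq \overline{\mathcal{I}}$ and $\overline{\mathcal{I}} \subseteq \mathcal{I}^e$ separately, the first being essentially tautological and the second following from a standard $\mathbb{Q}$-linear independence argument. The key observation to exploit throughout is that, as noted right before the statement of the lemma, the invariant ideal coincides with the vanishing ideal of the (inductive) reachable set, and that this reachable set lies in $\mathbb{Q}^s$ because the initial values are rational and the assignments are polynomial maps with rational coefficients.

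For the first inclusion, I would start from an arbitrary generator $\sum_j c_j\,g_j$ of $\mathcal{I}^e$ where $c_j \in \overline{\mathbb{Q}}$ and $g_j \in \mathcal{I}$. Since each $g_j$ vanishes on the entire reachable set $S \subseteq \mathbb{Q}^s$, the linear combination vanishes pointwise on $S$ as well, so it is an invariant in $\overline{\mathbb{Q}}[x_1,\ldots,x_s]$ and hence belongs to $\overline{\mathcal{I}}$.

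For the reverse inclusion, I would take any $f \in \overline{\mathcal{I}}$. Its finitely many coefficients generate a finite extension $\mathbb{L}/\mathbb{Q}$; pick a $\mathbb{Q}$-basis $\{\omega_1,\ldots,\omega_d\}$ of $\mathbb{L}$. Collecting coefficients according to this basis, I would write $f = \sum_{j=1}^d \omega_j\, f_j$, with each $f_j \in \mathbb{Q}[x_1,\ldots,x_s]$. For any reachable point $p \in S \subseteq \mathbb{Q}^s$ we have $f_j(p) \in \mathbb{Q}$ for all $j$, and $\sum_j \omega_j\, f_j(p) = f(p) = 0$. Since the $\omega_j$ are $\mathbb{Q}$-linearly independent and each $f_j(p)$ lies in $\mathbb{Q}$, it follows that $f_j(p)=0$ for every $j$ and every $p \in S$. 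Therefore $f_j \in \mathcal{I}$ for all $j$, and consequently $f = \sum_j \omega_j f_j \in \mathcal{I}^e$.

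There is really no hard step: the whole argument rests on the reachable set sitting inside $\mathbb{Q}^s$, which is immediate from the hypotheses on $R$, and on the elementary linear-algebra fact that a $\mathbb{Q}$-basis of a finite extension remains linearly independent over $\mathbb{Q}$. The closest thing to a subtle point is to make sure the decomposition $f = \sum_j \omega_j\, f_j$ is well-defined with $f_j \in \mathbb{Q}[x_1,\ldots,x_s]$; this is just the observation that if $c = \sum_j a_j \omega_j$ with $a_j \in \mathbb{Q}$ then the $a_j$'s are unique, applied coefficient-by-coefficient on each monomial of $f$. No properties of $P$-solvability are actually used in the proof; the result holds for any invariant ideal arising from a reachable set contained in $\mathbb{Q}^s$.
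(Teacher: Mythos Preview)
Your argument is correct and is exactly the standard way this fact is proved: use that the reachable set lies in $\mathbb{Q}^s$, and split an element of $\overline{\mathcal{I}}$ along a $\mathbb{Q}$-basis of the field generated by its coefficients to see that each rational component already lies in $\mathcal{I}$. You are also right that $P$-solvability plays no role here.

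The paper itself does not supply a proof of this lemma; it simply remarks that it ``is not hard to prove'' and refers the reader to~\cite{KZ08}. Your write-up is precisely the elementary argument one would expect to find there, so there is nothing to compare against.
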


With Proposition~\ref{prop: iv-R}
and Proposition~\ref{prop: bi-rational-degree}, we are able 
to estimate the degree of polynomials in a generating system 
of the invariant ideals. 
Now we are able to estimate the total degree of closed form 
solutions of a $P$-solvable recurrence without solving the 
recurrence explicitly.
\fi

\begin{Theorem}
\label{thm: main}
Let $R$ be a $P$-solvable recurrence relation
defining $s$ sequences in $\mathbb{Q}^s$,
with recurrence variables $(x_1,x_2, \ldots,x_s)$. 
Let $\mathcal{I} \subset {\mathbb{Q}}[x_1,x_2, \ldots, x_s]$ 
be the invariant ideal of $R$. 
Let $A = \alpha_1, \alpha_2, \ldots, \alpha_s$ be 
the eigenvalues (counted with multiplicities) of the coefficient matrix of $R$.
Let  $\mathcal{M}$ be the multiplicative relation ideal of
 $A$  associated with variables $y_1,\ldots, y_k$. 
 Let $r$ be the dimension of  $\mathcal{M}$.
Let $f_1(n, \alpha_1^n, \ldots, \alpha_k^n), \ldots, f_s(n, \alpha_1^n, 
\ldots, \alpha_k^n)$ be a sequence of $s$ poly-geometrical expressions 
in $n$ w.r.t. $\alpha_1, \alpha_2, \ldots, \alpha_s$  that solves $R$.
Suppose $R$ has a $k$ block configuration as 
$(n_1,1), (n_2, d_2), \ldots, (n_k, d_k)$. 
Let $D_1 := n_1$; and for all $j \in \{ 2, \ldots,  k\}$,
let $D_{j} : = d_{j} \, D_{j-1}+n_j$.  
Then 
we have 
$$\deg(\mathcal{I}) \leq \deg(\mathcal{M})\, D_k^{r+1}.$$ 
Moreover, if 
the degrees of $n$ in $f_i$ ($i=1\cdots s$) are $0$, then we have $$\deg(\mathcal{I}) \leq \deg(\mathcal{M})\, D_k^{r}.$$

\end{Theorem}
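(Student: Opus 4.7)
The plan is to combine three ingredients that are already available in the paper: the parametric description of $\mathcal{I}^e$ coming from Proposition~\ref{prop: iv-R}, the fact that extension to $\overline{\mathbb{Q}}$ preserves degree (Lemma~\ref{lem: iv-ext}), and the degree estimate for ideals of the special bi-rational shape $\langle x_i-f_i\rangle+M$ provided by Proposition~\ref{prop: bi-rational-degree}. The degree bound $D_k$ on the $f_i$'s comes straight from Theorem~\ref{thm: shape-degree}, after the formal substitution $\alpha_i^n\mapsto y_i$.

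First, by Lemma~\ref{lem: iv-ext} we have $\deg(\mathcal{I})=\deg(\mathcal{I}^e)$, so it suffices to bound $\deg(\mathcal{I}^e)$. By Proposition~\ref{prop: iv-R}, $\mathcal{I}^e=(\mathcal{S}+\mathcal{M})\cap\overline{\mathbb{Q}}[x_1,\ldots,x_s]$, where
$$\mathcal{S}=\langle x_1-f_1(n,y_1,\ldots,y_s),\ldots,x_s-f_s(n,y_1,\ldots,y_s)\rangle$$
in $\overline{\mathbb{Q}}[x_1,\ldots,x_s,n,y_1,\ldots,y_s]$, and by Theorem~\ref{thm: shape-degree} each $f_i$ (regarded as a polynomial in the $s+1$ variables $n,y_1,\ldots,y_s$) has total degree at most $D_k$.

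Next I would view $\mathcal{M}$ inside the larger polynomial ring $\overline{\mathbb{Q}}[n,y_1,\ldots,y_s]$. Since $\mathcal{M}$ involves no relation on $n$, its dimension in this larger ring is $r+1$ while its degree is unchanged and equal to $\deg(\mathcal{M})$. Applying Proposition~\ref{prop: bi-rational-degree} with the role of $Y$ played by $(n,y_1,\ldots,y_s)$, of $M$ by $\mathcal{M}$, and with the $f_i$'s as the bi-rational data, yields
$$\deg(\mathcal{S}+\mathcal{M})\;\leq\;\deg(\mathcal{M})\cdot D_k^{\,r+1}.$$
Finally, the intersection with $\overline{\mathbb{Q}}[x_1,\ldots,x_s]$ corresponds to a canonical linear projection onto the $x$-coordinates, so Lemma~\ref{lem: degree-lm} gives $\deg(\mathcal{I}^e)\leq\deg(\mathcal{S}+\mathcal{M})$, and the first inequality follows.

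For the strengthened statement, if $\deg(f_i,n)=0$ for every $i$, the $f_i$'s are polynomials in $y_1,\ldots,y_s$ alone of total degree at most $D_k$; the variable $n$ plays no role in the defining generators. I would then run the same argument but view $\mathcal{M}$ directly in $\overline{\mathbb{Q}}[y_1,\ldots,y_s]$, where its dimension is $r$. Proposition~\ref{prop: bi-rational-degree} then delivers $\deg(\mathcal{S}+\mathcal{M})\leq\deg(\mathcal{M})\cdot D_k^{r}$, and one more application of Lemma~\ref{lem: degree-lm} finishes the bound for $\deg(\mathcal{I})$. The only delicate point, and the step where I would be most careful, is the bookkeeping of dimensions when passing $\mathcal{M}$ between the rings $\overline{\mathbb{Q}}[y_1,\ldots,y_s]$ and $\overline{\mathbb{Q}}[n,y_1,\ldots,y_s]$: the exponent in $D_k^{r+1}$ versus $D_k^{r}$ is precisely controlled by whether or not $n$ must be retained as an independent variable in the elimination.
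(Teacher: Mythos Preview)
Your proposal is correct and follows essentially the same approach as the paper: both proofs combine Proposition~\ref{prop: iv-R} (to express $\mathcal{I}^e$ as an elimination of $\mathcal{S}+\mathcal{M}$), Theorem~\ref{thm: shape-degree} (for the bound $D_k$ on the $f_i$), Proposition~\ref{prop: bi-rational-degree} (for the degree bound on $\mathcal{S}+\mathcal{M}$), and Lemma~\ref{lem: degree-lm} (for the projection), with the distinction between $D_k^{r+1}$ and $D_k^{r}$ arising exactly from whether $\mathcal{M}$ is viewed in $\overline{\mathbb{Q}}[n,y_1,\ldots,y_s]$ or in $\overline{\mathbb{Q}}[y_1,\ldots,y_s]$. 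Your explicit invocation of Lemma~\ref{lem: iv-ext} to reduce to $\mathcal{I}^e$ is a slight refinement the paper leaves implicit.
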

\begin{proof}

Denoting by ${\Pi}$ the standard projection from
$ { \overline{\mathbb{Q}} }^{s+1+s} $ to $ { \overline{\mathbb{Q}} }^{s}$:
$$(x_1,x_2, \ldots, x_s,n,y_1,\ldots,y_s) \mapsto (x_1,x_2, \ldots, x_s),$$
we deduce 
\ifCoRR
 by Proposition~\ref{prop: iv-R} 
\fi
that
\begin{equation}
V({\mathcal I}) = \overline{ {\Pi}( V(\mathcal{S} +\mathcal{M} ))},
\end{equation}
where $\mathcal{S}$ is the ideal generated by
$\langle x_1 - f_1(n, y_1, \ldots, y_s), \ldots, x_s-f_s(n, y_1, \ldots, y_s)$ in $\overline{\mathbb{Q}}[x_1,x_2, \ldots, x_s,n,y_1,\ldots,y_s]$.

Thus, 
\ifCoRR
 by Lemma~\ref{lem: degree-lm}, 
\fi
we have 
$$\deg(\mathcal{I}) \leq \deg(\mathcal{S} + \mathcal{M}).$$ 

It follows from
\ifCoRR
 Proposition~\ref{prop: bi-rational-degree}
\else
the special shape of $S$ and 
Bezout bound inequality (see details in~\cite{DBLP:journals/corr/abs-1201-5086})
\fi
that
 $$\deg(\mathcal{S} + \mathcal{M}) \leq \deg(\mathcal{M})\, D_k^{r+1},$$
since the total degree of $f_i$ of $R$ is bounded by $D_k$ according to
Theorem~\ref{thm: shape-degree} and the dimension of $\mathcal{M}$ is $r+1$
is in $\mathbb{Q}[n,y_1,\ldots,y_s]$.

With similar arguments,
the second part of the conclusion follows from the fact that $S+M$ can be viewed as an ideal in
in $\overline{\mathbb{Q}}[x_1,x_2, \ldots, x_s,n,y_1,\ldots,y_s]$,
where $M$ has dimension $r$.
\end{proof}

Indeed, the degree bound in Theorem~\ref{thm: main} is ``sharp'' in the sense that 
it is reached by many of the examples we have considered. 
\ifCoRR
Let show two of such examples below.

\begin{Example}[Example~\ref{ex: intro2} Cont.]
\label{ex: Fibonacci}
The corresponding recurrence  only $1$ block. 
Denote by $A := \frac{-\sqrt{5}+1}{2}, \frac{\sqrt{5}+1}{2}$.
One can easily check that $A$ is $\wm$ independent.
Note the multiplicative relation ideal of $A$ 
associated with variables $u,v$ is generated by $u^2 v^2-1$ and thus has degree $4$ and dimension $1$ in $\mathbb{Q}[u,v]$. 
Therefore, by Theorem~\ref{thm: main}, the degree of invariant ideal
bounded by $4 \times 1^1$. 
This implies that the degree bound given  
by Theorem~\ref{thm: main} is sharp.
\end{Example}
\else
Please refer to~\cite{DBLP:journals/corr/abs-1201-5086}
for such examples.
\fi

In the rest of this section,
we are going to investigate the dimension of the invariant 
ideal of $P$-solvable recurrences.
This can help to answer the following 
natural question: whether or not 
the invariant ideal 
of a $P$-solvable recurrence over $\mathbb{Q}$
is the trivial ideal of  $\mathbb{Q}[x_1,\ldots,x_s]$?
Note that it is obvious that the invariant ideal is not 
the whole polynomial ring. 

\begin{Theorem}
    \label{thm: dimension}
Using the same notations 
as in  Definition~\ref{defn: p-solvable}.
Let $\lambda_1,\lambda_2,\ldots, \lambda_s$
be the eigenvalues of  $M$ 
counted with multiplicities.
Let $\mathcal{M}$ be the multiplicative relation ideal 
of $\lambda_1,\lambda_2, \ldots, \lambda_s$. 
Let $r$ be the dimension of $\mathcal{M}$. 
Let $\mathcal{I}$ be 
the invariant ideal of $R$.
Then $\mathcal{I}$ is of dimension at most $r+1$.
Moreover, for generic initial values, 
\begin{enumerate}
	\item the dimension of $\mathcal{I}$ is at least $r$;
	\item if $0$ is not an eigenvalue of $M$ and
	$\lambda_1,\lambda_2, \ldots, \lambda_s$ is {\wm} independent, then
	$\mathcal{I}$
has dimension $r$.
\end{enumerate}
\end{Theorem}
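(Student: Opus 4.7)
The plan is to realize $V(\mathcal{I})$ as the Zariski closure of the image of a polynomial parametrization whose source has known dimension, and then to carry out a dimension count. I first invoke Proposition~\ref{prop: iv-R} to write $\mathcal{I}^e = (\mathcal{S} + \mathcal{M}) \cap \overline{\mathbb{Q}}[x_1, \ldots, x_s]$, where $\mathcal{S} = \langle x_1 - f_1, \ldots, x_s - f_s \rangle$ and the $f_i = f_i(n, y_1, \ldots, y_s)$ are closed-form poly-geometrical solutions provided by Theorem~\ref{thm: shape-degree} (each $f_i$ carrying constants $c_i$ determined by the initial values). Geometrically, $V(\mathcal{I}) = \overline{\Pi(V(\mathcal{S} + \mathcal{M}))}$, where $\Pi$ projects onto the $x$-coordinates, and Lemma~\ref{lem: iv-ext} guarantees that passing from $\mathcal{I}$ to $\mathcal{I}^e$ preserves dimension.

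For the upper bound $\dim(\mathcal{I}) \leq r+1$, note that $\mathcal{M} \subset \overline{\mathbb{Q}}[y_1, \ldots, y_s]$ has dimension $r$, and $n$ is free in $\mathcal{M}$, so viewed inside $\overline{\mathbb{Q}}[n, y_1, \ldots, y_s]$ the ideal has dimension $r+1$. Adjoining the $s$ generators of $\mathcal{S}$ inside $\overline{\mathbb{Q}}[x_1, \ldots, x_s, n, y_1, \ldots, y_s]$ solves each $x_i$ as a polynomial in $(n, y_j)$, so $V(\mathcal{S} + \mathcal{M})$ remains $(r+1)$-dimensional, and $\Pi$ can only shrink dimension, yielding the bound.

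For the generic lower bound $\dim(\mathcal{I}) \geq r$, I consider the polynomial map $\phi : V(\mathcal{M}) \times \mathbb{A}^1_n \to \mathbb{A}^s_x$ sending $(y_1, \ldots, y_s, n) \mapsto (f_1, \ldots, f_s)$. Its image closure is $V(\mathcal{I})$, and the source has dimension $r+1$, so it suffices to show the generic fiber has dimension at most $1$. A fiber of dimension $\geq 2$ would contain a two-dimensional irreducible subvariety of $V(\mathcal{M}) \times \mathbb{A}^1_n$ along which every $f_i$ is constant; restricting to a suitable curve and applying Lemma~\ref{lem: transcent} to the resulting poly-geometrical identities forces algebraic relations among the constants $c_i$ coming from the initial values, and those relations carve out a proper Zariski-closed locus that I take as the complement of the generic set.

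For the equality case, I apply the second half of Theorem~\ref{thm: shape-degree}: under the hypotheses that $0$ is not an eigenvalue and that the $\lambda_i$ are \wm{} independent, the theorem supplies closed forms $f_i = c_i \lambda_i^n + g_i$ with $\deg(g_i, n) = 0$, so each $f_i$ becomes a polynomial in $y_1, \ldots, y_s$ alone after identifying $y_j$ with $\lambda_j^n$. Hence $\phi$ factors through the projection $V(\mathcal{M}) \times \mathbb{A}^1_n \to V(\mathcal{M})$, so $V(\mathcal{I}) = \overline{\phi(V(\mathcal{M}))}$ has dimension at most $\dim V(\mathcal{M}) = r$; combined with the generic lower bound, $\dim(\mathcal{I}) = r$. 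I expect the main obstacle to be the third paragraph: pinning down \emph{generic initial values} as a Zariski-open subset of the initial data space, tracking how the constants $c_i$ depend algebraically on the initial values, and extracting from Lemma~\ref{lem: transcent} the precise non-degeneracy statement that caps the fiber dimension at~$1$ outside a proper algebraic subset.
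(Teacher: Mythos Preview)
Your upper bound argument and your factoring argument in the equality case are essentially those of the paper. The gap is in your lower-bound paragraph, and you have correctly flagged it as the obstacle: the fiber-dimension analysis you sketch (two-dimensional fibers, restriction to curves, Lemma~\ref{lem: transcent}) is vague and, as written, does not obviously terminate in a proper Zariski-closed condition on the initial values.

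The paper avoids this entirely with a much sharper observation. After passing to Jordan form, Theorem~\ref{thm: shape-degree} gives the closed forms the \emph{triangular} shape
\[
f_i \;=\; c_i\,\lambda_i^n \;+\; h_i(n,\lambda_1^n,\ldots,\lambda_{i-1}^n),
\]
so that, writing $u_0$ for $n$ and $u_i$ for $\lambda_i^n$, the induced polynomial map
\[
F_1:\ (u_0,u_1,\ldots,u_s)\ \longmapsto\ (u_0,\ c_1 u_1 + h_1,\ \ldots,\ c_s u_s + h_s)
\]
is a triangular (Jonqui\`eres) automorphism of $\mathbb{A}^{s+1}$ as soon as every $c_i\neq 0$. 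The genericity condition on the initial values is therefore simply ``$c_i\neq 0$ for all $i$'', which is manifestly Zariski-open since each $c_i$ depends polynomially on the initial data. Because $F_1$ is an isomorphism, $\overline{F_1(W_2)}$ has dimension exactly $r+1$, and the projection forgetting $u_0$ can drop the dimension by at most $1$, giving $\dim(\mathcal{I})\geq r$ immediately. No fiber analysis is needed.

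The equality case is handled in the same spirit: under the {\wm} independence hypothesis the $h_i$ are free of $u_0$, and the resulting map $F_2:(u_1,\ldots,u_s)\mapsto(c_1u_1+h_1,\ldots,c_su_s+h_s)$ is again a triangular automorphism of $\mathbb{A}^s$, so $\dim V(\mathcal{I})=\dim\overline{F_2(W_1)}=\dim W_1=r$ \emph{directly}, without appealing to the separate lower bound. Your version gets only $\leq r$ from the factoring and then leans on the unfinished lower bound; the paper's argument is self-contained.
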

\begin{proof}
Assume without loss of genericity that $M$ is in Jordan form.
By Theorem~\ref{thm: shape-degree},
we deduce that $R$ has a solution $(f_1, f_2, \ldots, f_s)$ 
as follows
$$\left(c_1 \, \lambda_1^n+ h_1(n), c_2 \, \lambda_2^n+ h_2(n), \ldots, 
c_s \, \lambda_s^n+ h_s(n)\right),$$
where for each $i \in 1 \cdots s$, $c_i$ is a constant in $\overline{\mathbb{Q}}$ depending only on the initial value
of $R$, and $h_i$ is a poly-geometrical expression in $n$
w.r.t. $\lambda_1, \ldots, \lambda_{i-1}$. 
Moreover, we have
\begin{enumerate}
	\item for generic initial values, none 
          of $c_1, c_2, \ldots, c_s$ is $0$; 
	\item if the eigenvalues of $M$ can be ordered in
	$\lambda_1,\lambda_2, \ldots, \lambda_s$ s.t.
	$\lambda_1 \neq 1$ and for 
	each $i \in 2 \cdots s$,  
	$\lambda_i$ is 
	{\wm} independent w.r.t. $\lambda_1,\lambda_2, \ldots, \lambda_{i-1}$, 
	then we can require that, for all $i \in 1 \cdots s$, we have $\deg(f_i,n)=0$.
\end{enumerate}

Viewing $n$, $\lambda_i^n$ (for $i=1,\ldots,s$) as indeterminates, 
let us associate coordinate variable $u_0$ to $n$, $u_i$ to  $\lambda_i^n$  
(for $i=1,\ldots,s$).
Denote by $V$ the variety of 
$\mathcal{I}$ in $\overline{\mathbb{Q}}^{s}$ 
(with coordinates $x_1, x_2, \ldots, x_s$).
Note that we have 
$$\dim(V) = \dim(\mathcal{I}).$$

Denote by $W_1, W_2$ respectively the variety of 
$\mathcal{M}$ in $\overline{\mathbb{Q}}^{s}$ 
(with coordinates $u_1, u_2, \ldots, u_s$) and 
in  $\overline{\mathbb{Q}}^{s+1}$  (with coordinates $u_0, u_1, u_2, \ldots, u_s$).
Note that we have 
$$\dim(W_1) = r \mbox{ and } \dim(W_2) = r+1.$$

Consider first
    the map $F_0$ defined below:
    $$
    \begin{array}{c}
        F_0: \overline{\mathbb{Q}}^{s+1} \mapsto \overline{\mathbb{Q}}^{s+1} \\
           (u_0, u_1, \ldots, u_s) \to (c_1\, u_1 + f_1, \ldots, c_s\, u_s + f_s).
    \end{array}
    $$
By Theorem~\ref{thm: main}, we have
$V = \overline{F_0(W_2)}$.
Therefore, we have we have $\dim(\mathcal{I}) = \dim(V) \leq  \dim(W_2) = r+1$.

Now assume the initial value
of $R$ is generic, thus we have $c_i \neq 0$, for all $i \in 1 \cdots s$.
Let us consider
    the map $F_1$ defined below:
    $$
    \begin{array}{c}
        F_1: \overline{\mathbb{Q}}^{s+1} \mapsto \overline{\mathbb{Q}}^{s+1} \\
           (u_0, u_1, \ldots, u_s) \to (u_0, c_1\, u_1 + f_1, \ldots, c_s\, u_s + f_s).
    \end{array}
    $$
Let us denote by $V_2$ the variety $\overline{F_1(W_2)}$.
By virtue of Theorem~\ref{thm: main}, we  have $\dim(V_2) = \dim(W_2) = r+1$.
Denote by $\Pi$ the standard projection map that
forgets the first coordinate, that is, $u_0$.
We observe that $V = \overline{\Pi(V_2)}$.
Therefore, we have $\dim(V) \geq \dim(\overline{\Pi(V_2)}) -1 =r$.

Now we further assume 
	$\lambda_1 \neq 1$ and for 
	each $i \in 2 \cdots s$,  
	$\lambda_i$ is 
	{\wm} independent w.r.t. $\lambda_1,\lambda_2, \ldots, \lambda_{i-1}$
	the invariant ideal of $R$. 
	In this case, we have that for all $i \in 1 \cdots s$,  $\deg(f_i,n)=0$.
Let us consider
    the map $F_2$ defined below:
    $$
    \begin{array}{c}
        F_2: \overline{\mathbb{Q}}^{s} \mapsto \overline{\mathbb{Q}}^{s} \\
           (u_1, \ldots, u_s) \to (c_1\, u_1 + f_1, c_2\, u_2 + f_2, \ldots, c_s\, u_s + f_s).
    \end{array}
    $$
By Theorem~\ref{thm: main}, we have 
$V = \overline{F_2(W_1)}$. 
%
Therefore, we have $\dim(\mathcal{I}) = \dim(V) = \dim(W_1) = r$. This completes the proof.
\end{proof}

The following result, which is a direct consequence of 
Theorem~\ref{thm: dimension}, can serve as a sufficient
condition for the invariant ideal to be non-trivial.
This condition is often satisfied when there are 
eigenvalues with multiplicities or when $0$ and $1$ 
are among the eigenvalues.
\begin{Corollary}
\label{coro:criterion}
Using the same notations as in Theorem~\ref{thm: dimension}.
If $r+1<s$ holds, then $\mathcal{I}$ is not the zero ideal
in $\mathbb{Q}[x_1, x_2,\ldots, x_s]$.
\end{Corollary}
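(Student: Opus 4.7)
The plan is to argue by contradiction using the dimension bound already established in Theorem~\ref{thm: dimension}. Recall that the zero ideal of $\mathbb{Q}[x_1,\ldots,x_s]$ has as its variety the whole affine space $\overline{\mathbb{Q}}^s$, which has dimension $s$. Therefore the zero ideal, viewed through the dimension of its vanishing set, is of dimension $s$.

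First I would invoke Theorem~\ref{thm: dimension} directly: the invariant ideal $\mathcal{I}$ satisfies $\dim(\mathcal{I}) \leq r+1$. Combining this with the hypothesis $r+1 < s$, we obtain $\dim(\mathcal{I}) \leq r+1 < s$. Since the zero ideal has dimension $s$, the ideal $\mathcal{I}$ cannot coincide with it. This yields the conclusion that $\mathcal{I}$ is not the zero ideal of $\mathbb{Q}[x_1,\ldots,x_s]$, i.e.\ there exists at least one non-trivial polynomial equation invariant of the recurrence.

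There is essentially no obstacle here since the statement is a direct numerical consequence of the dimension upper bound; the only subtlety worth mentioning is the clarification that the dimension of $\mathcal{I}$ in Theorem~\ref{thm: dimension} is computed over $\overline{\mathbb{Q}}$ (as is standard for Krull dimension of ideals under field extension), so one should briefly note that the non-triviality of the extension $\mathcal{I}^e$ in $\overline{\mathbb{Q}}[x_1,\ldots,x_s]$ implies the non-triviality of $\mathcal{I}$ itself in $\mathbb{Q}[x_1,\ldots,x_s]$, which follows from faithful flatness of the field extension $\mathbb{Q} \hookrightarrow \overline{\mathbb{Q}}$.
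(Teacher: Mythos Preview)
Your argument is correct and is precisely the ``direct consequence of Theorem~\ref{thm: dimension}'' that the paper invokes without further elaboration: the dimension bound $\dim(\mathcal{I})\leq r+1<s$ immediately rules out $\mathcal{I}=\langle 0\rangle$. The remark on faithful flatness is a reasonable clarification, though the paper does not spell it out.
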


The following corollary indicates that, 
the fact that the inductive loop invariant is trivia
could be determined by just investigating the multiplicative
relation among the eigenvalues of the underlying recurrence. 
\begin{Corollary}
\label{cor: not-exist}
Using the same notations as in Theorem~\ref{thm: dimension},
consider the corresponding loop $\mathcal{L}$ with $x_1(0):=a_1, \ldots, x_s(0):=a_s$,
where $a_1,\ldots, a_s$ are indeterminates. 
If the eigenvalues of $R$ are multiplicatively independent,
then the inductive invariant ideal of $\mathcal{L}$ is
the zero ideal
in $\mathbb{Q}[a_1,\ldots,a_s, x_1, x_2,\ldots, x_s]$.
\end{Corollary}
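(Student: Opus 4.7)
The plan is to combine Theorem~\ref{thm: dimension} case 2 with a Zariski-density argument: for every specialization of $(a_1,\ldots,a_s)$ to a generic rational point, the resulting invariant ideal is zero, and this forces any symbolic invariant to vanish identically.

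First, I would extract from the hypothesis everything needed to invoke Theorem~\ref{thm: dimension} case 2. Multiplicative independence of $\lambda_1,\ldots,\lambda_s$ (in the sense of Definition~\ref{defi:multiplicationRelation}) ensures simultaneously that each $\lambda_i$ is non-zero, that the multiplicative relation ideal $\mathcal{M}$ is the zero ideal of $\mathbb{Q}[y_1,\ldots,y_s]$ (so $r := \dim(\mathcal{M}) = s$), and that $\lambda_1,\ldots,\lambda_s$ is {\wm} independent (any failure of {\wm} independence would immediately produce a non-trivial multiplicative relation, contradicting the hypothesis). These are exactly the hypotheses required by Theorem~\ref{thm: dimension} case 2.

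Applying that theorem pointwise, I obtain a non-empty Zariski-open subset $U \subseteq \mathbb{Q}^s$ such that, for every $\mathbf{a}_0 \in U$, the invariant ideal $\mathcal{I}_{\mathbf{a}_0} \subseteq \mathbb{Q}[x_1,\ldots,x_s]$ of $R$ started at $x(0)=\mathbf{a}_0$ has dimension $r = s$; since its variety already fills the ambient $s$-dimensional space, the ideal $\mathcal{I}_{\mathbf{a}_0}$ must be $\{0\}$. Now let $p \in \mathbb{Q}[a_1,\ldots,a_s,x_1,\ldots,x_s]$ be any inductive invariant of $\mathcal{L}$ and write $p = \sum_\mu p_\mu(a)\,x^\mu$ with $p_\mu \in \mathbb{Q}[a]$. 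For every $\mathbf{a}_0 \in U$, the specialization $p(\mathbf{a}_0,x)$ lies in $\mathcal{I}_{\mathbf{a}_0}=\{0\}$, so each coefficient $p_\mu$ vanishes on the Zariski-dense subset $U$ of affine $s$-space and is therefore identically zero. This forces $p = 0$, proving the corollary.

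The main obstacle is confirming that the ``generic initial values'' provision in Theorem~\ref{thm: dimension} really defines a Zariski-open condition on $\mathbf{a}_0$, which is what makes the density step legitimate. Tracing its proof, one sees that ``generic'' means exactly the non-vanishing of the constants $c_i$ attached to the dominant exponential terms in the closed form solutions; after the change of basis putting $M$ into Jordan form, each $c_i$ is a fixed non-zero linear form in $\mathbf{a}_0$, so their simultaneous non-vanishing indeed cuts out a non-empty Zariski-open subset $U$, as needed.
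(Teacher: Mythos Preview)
Your argument is correct and follows the same overall line as the paper: deduce that the invariant ideal is zero for generic initial values via Theorem~\ref{thm: dimension}, then conclude that any putative polynomial invariant in $\mathbb{Q}[a_1,\ldots,a_s,x_1,\ldots,x_s]$ must vanish identically.

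Two small differences are worth noting. First, the paper invokes only case~1 of Theorem~\ref{thm: dimension} (the lower bound $\dim\mathcal{I}\geq r$), which already forces $\dim\mathcal{I}=s$ and hence $\mathcal{I}=\{0\}$; you instead invoke case~2, which obliges you to verify {\wm} independence and the non-vanishing of eigenvalues. Your verification is correct, but unnecessary: case~1 gives the same conclusion with less to check. Second, where you specialize $(a_1,\ldots,a_s)$ to points of a Zariski-open $U\subseteq\mathbb{Q}^s$ and then use density, the paper takes the dual viewpoint of working directly over the function field $\mathbb{Q}(a_1,\ldots,a_s)$, identifying ``generic initial values'' with initial values transcendental over $\mathbb{Q}$. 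The two viewpoints are equivalent; yours has the advantage of making the meaning of ``generic'' in Theorem~\ref{thm: dimension} fully explicit, as you do in your final paragraph.
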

\begin{proof}
Since
there is only trivial multiplicative relation,
 the multiplicative relation ideal of the eigenvalues is $0$, which is of dimension $s$. 
By Theorem~\ref{thm: dimension}, the invariant of $R$ must be zero ideal in 
$\mathbb{Q}(a_1,\ldots,a_s)[x_1,\ldots,x_s]$, since its dimension
must be at least $s$.

Assume there exists a non-zero invariant polynomial $p$ of $\mathcal{L}$,
then $p$ must be an invariant polynomial of $R$ 
since the loop variables $a_1,\ldots,a_s$ are free to take
any value. This is a contradiction to the fact that the invariant ideal of $R$ is 
trivial. Therefore,  the inductive invariant ideal of $\mathcal{L}$ is
the zero ideal
in $\mathbb{Q}[a_1,\ldots,a_s, x_1, x_2,\ldots, x_s]$.
\end{proof}

\ifCoRR
\begin{Example}
Consider the recurrence:
$$(x(n+1),y(n+1)) := (3 \, x(n) +y(n), 2\,y(n)) \mbox{ with } x(0)=a, y(0)=b.$$
On one hand, the two eigenvalues  are $2$ and $3$ which are multiplicatively
independent, therefore, by Corollary~\ref{cor: not-exist},
the invariant ideal of the corresponding loop is trivial.

On the other hand, 
for loop variables  $(a,b,x,y)$, 
the reachable set of the loop is
$$\mathfrak{R}:=\{(a,\, b,\, (a+b)\,3^i - b\,2^i, \, b\,2^i) \mid  (a,b) \in \mathbb{Q}^2, \; i \mbox{ is a non-negative integer} \}.$$
Therefore,  according to Lemma~\ref{lem: transcent}, any polynomial vanishes on all points  of
$\mathfrak{R}$ must be $0$.
\end{Example}
\fi
Note in Theorem~\ref{thm: dimension}, if we 
drop the ``generic'' assumption on
the initial values, then the conclusion
might not hold. 
\ifCoRR
\else
Please refer to~\cite{DBLP:journals/corr/abs-1201-5086}
for an example illustrating when all the eigenvalues
are different and multiplicatively independent but 
the invariant ideal is not trivial.
\fi
\ifCoRR
The following example illustrate 
this for the case when all the eigenvalues
are different and multiplicatively independent, but 
the invariant ideal is not trivial.
\begin{Example}
Consider the linear recurrence 
$x(n+1) = 3 \, x(n) - y(n), y(n+1)=2 \, y(n)$
with $(x(0),y(0))=(a,b)$.
The eigenvalues of the coefficient
matrix are $2,3$, which are multiplicatively independent.
One can check that,
when $a=b$, the invariant ideal is generated 
by $x-y$. However, generically,
that is when $a \neq b$ holds, 
the invariant ideal is the zero ideal.
\end{Example}
\fi


\section{Algorithm and experimental results}
\label{sec:construction}
In this section, we shall discuss how to 
compute invariant ideals of $P$-solvable recurrences as 
well as polynomial loop invariants.
Our approach is based on polynomial interpolation
and consists essentially of three main steps.
\begin{enumerate}
\item Sample a list of points $S$
      from the trajectory of the recurrence or loop.
\item Compute all the polynomials vanishing on $S$ 
      up to a certain degree,
      which can be either a known degree bound or a ``guessed'' bound.
\item Check whether or not the interpolated
      polynomials are invariants of the loop. 
\end{enumerate}

As one can see from our algorithm sketch, we need to
check whether or not a given condition (say a polynomial equation
or a polynomial inequality) is an invariant.
In general, roughly speaking, when a branch condition 
contains constraints given by inequalities, the problem of checking whether or not 
a linear equation is a loop invariant is undecidable, see~\cite{MS04b}
for a more detailed discussion.  Nevertheless, criteria showing that
a given condition is indeed an invariant are useful in practice.
For this reason, we are interesting necessary or sufficient conditions
for a conjunction of polynomial equations to be an invariant of a loop.

\subsection{Checking invariants}

Proposition~\ref{prop: iv-prop} states a necessary
condition for a set of polynomials 
to be the invariant ideal of a given loop.

\begin{Proposition}
\label{prop: iv-prop}
Given a loop $\mathcal{L}$ with only one branch
and let $A$ be the assignment function.
Let $I$ be the inductive invariant ideal of $\mathcal{L}$.
Then for any point $\alpha \in V(I)$, we have 
$A(\alpha) \in V(I)$.
\end{Proposition}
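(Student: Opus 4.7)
The plan is to reduce the statement to an algebraic fact about the ideal $I$ itself, namely that $I$ is stable under pullback by $A$, and then to apply this to an arbitrary point $\alpha \in V(I)$.

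First, I recall from the preliminaries (cf.\ the discussion in Section~\ref{sec:preliminaries}) that the inductive invariant ideal $I$ coincides with the vanishing ideal of the inductive reachable set $\mathfrak{R}$ of $\mathcal{L}$: that is,
\[
I \; = \; \{ p \in \mathbb{Q}[x_1,\ldots,x_s] \mid p(\beta) = 0 \text{ for all } \beta \in \mathfrak{R}\}.
\]
This characterization is the only nontrivial input to the argument.

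Next, I would establish the key stability property: for every $p \in I$, the composed polynomial $p \circ A$ again belongs to $I$. This is immediate from the definition of $\mathfrak{R}$: if $\beta \in \mathfrak{R}$ (so $\beta$ is reached at some iteration of the loop with the guard ignored), then $A(\beta) \in \mathfrak{R}$ as well, hence $p(A(\beta)) = 0$. Since this holds for every $\beta \in \mathfrak{R}$, the polynomial $p \circ A$ vanishes on $\mathfrak{R}$, so $p \circ A \in I$ by the above characterization. Note that this step uses the assumption that $\mathcal{L}$ has only one branch, so that $A$ is a single, globally defined polynomial map rather than a conditional combination of several assignments.

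Finally, I take an arbitrary $\alpha \in V(I)$ and an arbitrary $p \in I$. By the stability property, $p \circ A \in I$, and since $\alpha \in V(I)$ we have $(p \circ A)(\alpha) = 0$, i.e.\ $p(A(\alpha)) = 0$. Because $p$ was arbitrary in $I$, this shows $A(\alpha) \in V(I)$, which is the desired conclusion. There is no real obstacle here; the whole content of the proposition sits in the observation that $I$ is closed under the pullback $p \mapsto p \circ A$, which in turn rests only on the definition of the inductive reachable set.
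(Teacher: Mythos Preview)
Your proof is correct, and it takes a genuinely different (and cleaner) route than the paper's.

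The paper argues geometrically, by contradiction: it sets $W=\overline{A(V(I))}$ and $W_1=\overline{W\setminus V(I)}$, assumes $W_1\neq\emptyset$, and then uses that the inductive trajectory $\mathcal{T}$ satisfies $A(\mathcal{T})\subseteq\mathcal{T}\subseteq V(I)$ to force $\mathcal{T}$ into the closed set $A^{-1}(V(I)\cap W)$; since $V(I)$ is the Zariski closure of $\mathcal{T}$, this yields $A(V(I))\subseteq V(I)$, contradicting $W_1\neq\emptyset$. Your argument isolates the same core fact---that the inductive reachable set is stable under $A$---but exploits it algebraically rather than topologically: from $A(\mathfrak{R})\subseteq\mathfrak{R}$ you immediately get that $I$ is closed under the pullback $p\mapsto p\circ A$, and the conclusion $A(V(I))\subseteq V(I)$ follows by evaluating. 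This avoids the Zariski-closure bookkeeping entirely and makes transparent that the result is a purely formal consequence of $I$ being the vanishing ideal of an $A$-stable set; the paper's version, by contrast, emphasizes the minimality of $V(I)$ as a closed set containing the trajectory.
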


\ifCoRR
\begin{proof}	
	Denote by $\mathcal{T}$ the
 inductive trajectory of $\mathcal{L}$.
	Let $W$ be the Zariski closure of $A(V(I))$.
	Let $W_{1}$ be the Zariski closure of $W \setminus V(I)$.
	We proceed by contradiction, thus we assume
	$W_{1} \neq \emptyset$.
	Then we have $V(I) = A^{-1}(W_{1}) \cup A^{-1}(V(I) \cap W)$ and $\mathcal{T} \subseteq A^{-1}(V(I) \cap W)\;
	\mbox{ and } \; \mathcal{T} \not \subseteq  
	A^{-1}(W_{1}),$
	contradicting the fact 
	that $V(I)$ is the Zariski closure of $\mathcal{T}$. 
\end{proof}

\else
The proof of Proposition~\ref{prop: iv-prop}
can be found in~\cite{DBLP:journals/corr/abs-1201-5086}.
\fi
The following Proposition, which follows directly
from the definition of an inductive invariant, 
can serve as a sufficient
condition for a set of polynomials 
to be inductive invariants of a given loop.

\begin{Proposition}
\label{prop: inv-semi}
Let $\mathcal{L}$ be a loop  with variables $X$ and $m$
branches $(C_i, A_i)$ $i=1,\ldots,m$.
Let $P \subset \mathbb{Q}[X]$.
If $V(P)$ contains the initial values of $\mathcal{L}$,
and if for each  $\alpha \in V(P) \cap Z(C_i)$, we have $A_i(\alpha) \in V(P)$, 
then all the polynomials in $P$ are inductive polynomial equation 
invariants of $\mathcal{L}$.
\end{Proposition}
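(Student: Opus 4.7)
The approach is a direct induction on the number of loop iterations, which is the natural way to prove any ``inductiveness'' statement of this kind. I would first recall from the preliminaries that the inductive reachable set is constructed by ignoring the guard $C_0$ and iteratively applying whichever assignment $A_i$ is selected by the branch conditions $C_i$, starting from any initial value of $\mathcal{L}$. The goal is then to show that every tuple in this set lies in $V(P)$, since this immediately implies that every $p \in P$ vanishes on the inductive reachable set and is therefore, by definition, an inductive polynomial equation invariant of $\mathcal{L}$.

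The induction is on the iteration number $n \geq 0$. For the base case $n=0$, the tuple is an initial value of $\mathcal{L}$ and therefore lies in $V(P)$ by the first hypothesis. For the inductive step, I would assume that the tuple $\alpha$ reached after $n$ iterations lies in $V(P)$. Because the conditions $C_1, \ldots, C_m$ are pairwise exclusive, at most one of them can hold at $\alpha$; if none of them holds there is no $(n+1)$-st iterate to worry about, and otherwise there is a unique index $i$ with $\alpha \in V(P) \cap Z(C_i)$. The second hypothesis then gives $A_i(\alpha) \in V(P)$, which is exactly the tuple produced at iteration $n+1$.

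Once this induction is in place, the conclusion follows immediately: every tuple of the inductive reachable set lies in $V(P)$, so each $p \in P$ vanishes on every such tuple, which is the definition of $p$ being an inductive invariant of $\mathcal{L}$. There is no real obstacle in this proof; the only point to spell out carefully is the bookkeeping in the inductive step, namely that pairwise exclusivity of the $C_i$'s makes the choice of branch well-defined and that the second hypothesis is applied at the unique index $i$ with $C_i(\alpha)$ true. This is presumably why the authors describe the statement as following directly from the definition of an inductive invariant.
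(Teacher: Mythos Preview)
Your proposal is correct and matches the paper's approach: the paper gives no explicit proof, stating only that the proposition ``follows directly from the definition of an inductive invariant,'' and your induction on the iteration number is precisely the unpacking of that remark. The bookkeeping you spell out (base case from the first hypothesis, inductive step from the second via the unique active branch) is exactly what is needed.
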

Note Proposition~\ref{prop: inv-semi} states a sufficient condition
for a set of polynomials to be invariant, not to generate the invariant
ideal. 

We shall use Proposition~\ref{prop: inv-semi} as a criterion 
to certify  given polynomials
are indeed inductive invariants.
Actually, most loop invariant checking criteria
work in a similar spirit. The proposed criterion 
is more general than the various ``consecutions''
conditions in ~\cite{SSM04}, in the sense
that all invariants certifiable by those
``consecutions'' conditions is certifiable by the proposed criterion,
but there are invariants certifiable
by Proposition~\ref{prop: inv-semi}, which can not
be certified by any of the ``consecutions'' conditions.

\subsection{Implementation of the method}

We use polynomial interpolation to construct candidate 
invariants from a given template (which is 
either all possible dense polynomials 
up to a certain degree or a specific
form guessed by an oracle).
To do so, we need to take sufficiently
many points from the trajectory of the program execution.
This is done by emulating the program and recording the 
relevant values. 
To apply the criterion of Proposition~\ref{prop: inv-semi},
we need to compute the image of a
variety under a polynomial map. This is 
where we use state-of-art computer algebra 
software tools.

In this section, we describe two algorithms  for generating 
polynomial loop invariants that we have implemented.
We refer the first one as our {\em direct} method.

\begin{Notation}
    \label{nota: input}
Notations in the input of our algorithms:
    \begin{enumerate}[$(i)$]
\item $M:=m_1,m_2,\ldots,m_c$ is a sequence of monomials in the loop variables $X$
\item $S := s_1, s_2, \ldots, s_r$ is a set 
of $r$ points on the inductive trajectory of the loop
\item $E$ is a polynomial system defining the loop initial values
\item $B$ is the transitions $(C_1, A_1), \ldots,  (C_m, A_m)$ of the loop
    \end{enumerate}
\end{Notation}

The subroutines in Algorithm~\ref{alg: pii} are explained as follows:
{\tt BuildLinSys}($M$, $S$) returns an $r \times c$ matrix $L$, 
such that $L_{i,j}$ is the evaluation 
of the $i$-th monomial in $M$ at the $j$-th 
point in $S$.
{\tt LinSolve}($L$) returns a matrix $N$ in row echelon form with full row rank, whose rows 
generate the null space of $L$ in $\mathbb{Q}^c$.
%
{\tt GenPoly}($M$, $\mathbf{v}$) returns the polynomial $\sum_{i=1}^c \; v_i\,m_i$, where
$\mathbf{v}=(v_1,v_2,\ldots,v_c)$ is a vector in $\mathbb{Q}^c$.

Note that  we can find effective tools for all the operations in Algorithm~\ref{alg: pii},
for instance,  we can find tools in~\cite{CLMLPX09} for computing the intersection of 
two constructible sets, or the image of a constructible set under a polynomial map as well as testing the inclusion relation.

However, there is a notable challenge with 
our direct algorithm (Algorithm~\ref{alg: pii}):
the coordinates of the points sampled on the trajectory
often grow dramatically in size.
This has clearly a negative impact on the solving of the linear system $L$.
All this leads to a severe memory
consumption issue, so we decided to consider 
an algorithm based on modular techniques.
We opted for a ``small prime'' approach, see Algorithm~\ref{alg: mii},
as we observed that many invariants of practical program loops have often
small coefficients.

Some additional subroutines, used in Algorithm~\ref{alg: mii}, 
are specified hereafter:
{\tt MaxMachinePrime}() returns the maximum machine-word prime;
{\tt PrevPrime}($p$) returns the largest prime less than $p$;
{\tt BuildLinSysModp}($M$, $S$, $p$) returns an $r \times c$ matrix $L$, 
such that $L_{i,j}$ is the evaluation of the $i$-th monomial in $M$ at the $j$-th 
point in $S$ modulo $p$.
{\tt LinSolveModp}($L$, $p$) returns a matrix $N$ 
in row echelon form with full row rank, whose rows 
generate the null space of $L$ in $\mathbb{Z}_p^c$.
{\tt RatRecon}($\mathbf{N}$, $\mathbf{P}$) returns a 
matrix $N$ with rational coefficients, such that
for each $i = 1 \ldots k$, the $i$-th matrix in $\mathbf{N}$
equal to the image of $N$ modulo the $i$-th prime in $\mathbf{P}$ if possible; otherwise returns {\rm FAIL}.

%

\begin{Proposition}
\label{prop: interp-inv-alg}
Both Algorithms~\ref{alg: pii} and~\ref{alg: mii} 
terminate for all inputs.
Moreover, when the output is not {\rm FAIL}, 
it is a list of polynomial equation invariants
for the target loop.
\end{Proposition}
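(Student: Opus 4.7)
The plan is to split the proof into two independent verifications for each algorithm: termination and soundness (i.e.\ correctness of any non-FAIL output).

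For termination of Algorithm~\ref{alg: pii}, I would first observe that every subroutine it invokes operates on finite data: the monomial list $M$ is finite, the sample $S$ is finite, so {\tt BuildLinSys}($M$,$S$) produces a finite matrix over $\mathbb{Q}$. The null-space computation {\tt LinSolve} is a standard linear-algebra procedure over $\mathbb{Q}$ that terminates, and {\tt GenPoly} is purely syntactic. The remaining step, checking the criterion of Proposition~\ref{prop: inv-semi}, reduces to finitely many inclusion tests between constructible sets of the form $A_i(V(P) \cap Z(C_i)) \subseteq V(P)$; these tests can be performed by the tools of~\cite{CLMLPX09} cited in the paper and terminate on any input. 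Hence the algorithm terminates.

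For termination of Algorithm~\ref{alg: mii}, the outer loop iterates over primes produced by repeated application of {\tt PrevPrime} starting from {\tt MaxMachinePrime}(); since there are only finitely many primes below a machine word bound, the loop cannot run indefinitely, and when no further prime is available the algorithm exits with {\rm FAIL}. All inner operations ({\tt BuildLinSysModp}, {\tt LinSolveModp}, {\tt RatRecon}) are finite linear algebra or extended-Euclidean computations, each terminating in a bounded number of steps. The invariant-verification step at the end is the same one used in Algorithm~\ref{alg: pii} and terminates as argued above.

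For soundness, the key observation is that both algorithms only return a polynomial list $P$ after passing the check derived from Proposition~\ref{prop: inv-semi}. In Algorithm~\ref{alg: pii}, the nullspace construction ensures each polynomial produced by {\tt GenPoly} vanishes on all sampled points of the trajectory, in particular on the initial values given by $E$; the explicit verification then tests that for every branch $(C_i,A_i)$ one has $A_i(V(P)\cap Z(C_i)) \subseteq V(P)$, which are precisely the hypotheses of Proposition~\ref{prop: inv-semi}. Thus the returned polynomials are inductive polynomial invariants of the loop. For Algorithm~\ref{alg: mii}, rational reconstruction either yields a rational matrix $N$ over $\mathbb{Q}$ whose rows, via {\tt GenPoly}, define polynomials over $\mathbb{Q}$, or returns FAIL; in the non-FAIL case, the reconstructed polynomials are passed through the same certification step, so Proposition~\ref{prop: inv-semi} again guarantees they are genuine invariants regardless of how many primes were used in the reconstruction.

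The main obstacle I anticipate is the correctness step for the modular algorithm: one must be careful that rational reconstruction, even when it does not fail, may produce polynomials that do \emph{not} vanish on the trajectory. The cleanest way to handle this is to emphasize that Proposition~\ref{prop: inv-semi} is a \emph{sufficient} criterion independent of how the candidate polynomials were produced, so once the check passes, the reconstructed polynomials are certified invariants; conversely, if reconstruction produces spurious candidates, the check simply fails and the algorithm returns FAIL, preserving soundness. Termination of Algorithm~\ref{alg: mii} is ensured either by exhausting machine-word primes or by an a priori bound on reconstruction attempts.
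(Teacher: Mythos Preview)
Your proposal is correct and follows the same approach as the paper's (much terser) proof: termination holds because every loop ranges over a finite set and every subroutine terminates, and soundness holds because a non-{\rm FAIL} output has passed exactly the hypotheses of Proposition~\ref{prop: inv-semi}. One small slip: for Algorithm~\ref{alg: pii} you assert that the initial-value hypothesis of Proposition~\ref{prop: inv-semi} is guaranteed because the interpolated polynomials vanish on the sample $S$, ``in particular on the initial values given by $E$''; but $Z(E)$ may be an infinite set not contained in $S$, and the algorithm handles this by the explicit test $Z(E)\subseteq V(F)$ (the line returning {\rm FAIL} before the branch loop), so you should invoke that check rather than the sampling---your final paragraph, where you note that Proposition~\ref{prop: inv-semi} applies regardless of how the candidates were produced once the check passes, already reflects the correct viewpoint.
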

\ifCoRR
\begin{proof}
The termination is easy to check 
since all loops iterate on finitely many terms and
each operation (sub-algorithm) does terminate. 
When the output is not {\rm FAIL}, 
that means the output satisfies the sufficient conditions
for polynomial invariants stated in 
Proposition~\ref{prop: inv-semi} 
and thus the conclusion follows from  Proposition~\ref{prop: inv-semi}.
\end{proof}
\else
The proof of Proposition~\ref{prop: interp-inv-alg}
appears in~\cite{DBLP:journals/corr/abs-1201-5086}.
\fi

\begin{Remark}
	We handle ``unlucky primes'' by checking the dimension of 
	the solution space (lines $15-18$ in Algorithm~\ref{alg: mii}): 
if the dimension of an image increases, then we drop this image;
		if a new image has a lower dimension, 
then we drop all previous images.
\ifCoRR
        Several points of Algorithm~\ref{alg: mii}) returns the same 
        ``{\rm FAIL}'' message, for sake a simplicity.
	However, we could customize the {\rm FAIL} message 
in each return point, for examples:
\begin{itemize}
\item the {\rm FAIL} at line $9$ implies that 
either the invariant ideal is the zero ideal 
	or the total degree of interpolated polynomials is too low; 
or the modulus is 
	too small; 
\item the {\rm FAIL} at line $23$ means that 
the product of the chosen moduli is still too 
	small and more images are needed.
\end{itemize}
	%
\fi
\end{Remark}




\begin{algorithm}
\label{alg: pii}
\caption{\pii($M$, $S$, $B$, $E$)}
\KwIn{
See Notation~\ref{nota: input} for $M$,  $S$, $B$, $E$
 }
\KwOut{A set of polynomial inductive invariants of the target loop}  

$L := {\tt BuildLinSys}(M, S)$\; 

$N := {\tt LinSolve(L)}$ \;

$F := \emptyset$\;
\ForEach{row vector $\mathbf{v} \in N$}{
$F := F \, \cup \, \{ {\tt GenPoly}(T, \mathbf{v}) \} $\;
}

\lIf{$Z(E) \not \subseteq V(F)$}{
	\Return {\rm FAIL}\;
}

\ForEach{$(C_i, A_i) \in B$}{
	   \lIf{$A_i(V(F) \cap Z(C_i)) \; \not \subseteq \; V(F)$}{
	    \Return {\rm FAIL}\;
    }
}
\Return $F$\;
\end{algorithm}

Note that Algorithm~\ref{alg: pii} 
and Algorithm~\ref{alg: mii} 
will sometimes return {\rm FAIL} even if the bounds  for the 
polynomial degrees and coefficient sizes are known.
When these algorithms return a list of non-trivial polynomials,
we are not sure whether those polynomial can generate the whole
loop invariant ideal or not. 
However, in practice, these algorithms often find meaningful results
quickly.  Indeed, both algorithms run 
in singly exponential time w.r.t. number of variables 
for a fix total degree bound, which is stated formally as below.

\begin{Proposition}
Algorithm~\ref{alg: mii} runs in singly exponential time
w.r.t. number of loop variables.
\end{Proposition}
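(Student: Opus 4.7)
My plan is to walk through Algorithm~\ref{alg: mii} block by block, bound each block's cost in terms of the number $n$ of loop variables (with the total degree bound $d$ treated as fixed, as announced in the preceding paragraph), and assemble the bounds. The interpolation and linear algebra parts will turn out to be polynomial in $n$; the semantic invariance check is the only step that is not \emph{a priori} polynomial, and this is what forces a singly exponential bound overall.

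First I would estimate the size of the interpolation data. The set $M$ of monomials of total degree at most $d$ in $n$ variables has cardinality $c = \binom{n+d}{d} = O(n^d)$, polynomial in $n$. It suffices to take $r = O(c)$ trajectory samples, each produced by a bounded number of loop body evaluations on rational tuples, so generating $S$ and reducing it modulo any machine-word prime $p$ costs polynomial time in $n$. Per prime, \mii{} calls \texttt{BuildLinSysModp}$(M,S,p)$ at cost $O(rc)$ operations in $\mathbb{Z}_p$, and solving the $r \times c$ system by Gaussian elimination costs $O((rc)^{\omega})$ further operations; both are polynomial in $n$. The outer loop iterates over machine-word primes, and the number of primes needed before \texttt{RatRecon} succeeds is polynomial in the bit-length of the true rational solution, which by standard coefficient-size estimates (see for instance the height bound on the characteristic polynomial of a rational matrix invoked earlier) is itself polynomial in $n$ for fixed $d$. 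Hence the entire modular interpolation stage has cost polynomial in $n$.

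The only step whose cost is not obviously polynomial is the invariance verification, where one must test $Z(E) \subseteq V(F)$ together with $A_i(V(F) \cap Z(C_i)) \subseteq V(F)$ for each branch, as in the tail of Algorithm~\ref{alg: pii}. Each such inclusion of constructible sets reduces to a finite collection of ideal-theoretic tests on polynomials whose degrees are bounded by a constant depending on $d$ and on the polynomial data of $E$, the $C_i$'s and the $A_i$'s. Using the constructible set tools of~\cite{CLMLPX09}, or equivalently classical singly exponential bounds for polynomial system solving (effective Nullstellensatz and bounded-degree elimination), each such test can be performed in time $2^{p(n)}$ for some polynomial $p$.

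The main obstacle in the analysis is therefore not the interpolation itself but the semantic check of Proposition~\ref{prop: inv-semi}: the symbolic computation over constructible sets drives the worst-case cost. Summing the contributions of the four blocks and using that polynomial is dominated by singly exponential, the overall running time of Algorithm~\ref{alg: mii} lies in $2^{q(n)}$ for some polynomial $q$, which is the claimed singly exponential bound in the number of loop variables.
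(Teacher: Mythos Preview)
Your argument is correct and follows essentially the same two-part decomposition as the paper's own proof: the modular interpolation phase (Lines 1--26) is bounded via the number of monomials in the support, and the invariance verification (Lines 27--30) is handled as an ideal-membership/constructible-set problem that is singly exponential in the number of variables. You are in fact more careful than the paper in observing that for fixed $d$ the monomial count $\binom{n+d}{d}=O(n^d)$ makes the interpolation phase genuinely polynomial in $n$, whereas the paper is content to call that count ``singly exponential''; but the structure and conclusion are identical.
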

\begin{proof}
The complexity of Algorithm~\ref{alg: mii} between Lines 1 and 26
is polynomial in the number of monomials in the support.

The number of those monomials is singly exponential t
w.r.t. number of loop variables.
In addition, applying our criterion to 
certify the result (Lines 27 to 30)
can be reduced  to an ideal membership problem, which is singly exponential 
w.r.t. number of loop variables.
\end{proof}

\ifCoRR
In particular, if the total degree bound 
supplied is greater of equal than the degree of 
invariant ideal and the sample points
are sufficiently many, 
then with a high possibility 
(depending on the selection of sample points and also 
on the choice of the moduli for Algorithm~\ref{alg: mii}), 
a list of polynomials generating the invariant ideal will be computed by
out method.
\fi

\subsection{Experimental results}

We have applied Algorithm~\ref{alg: mii} to 
the example programs used in the paper~\cite{RK07a}, 
and we are able to find the loop invariants by trying 
total degree up to $4$ for most loops within $60$ seconds. 
See the Table~\ref{tab: loopinv} for details.



\begin{algorithm}
\label{alg: mii}
\caption{\mii($M$, $S$, $B$, $E$, $n$)}
\KwIn{See Notation~\ref{nota: input} for $M$, $S$,  $B$, $E$; $n$ is the maximal number of modular images to use}
\KwOut{A set of polynomial inductive invariants of the target loop}
$p:= {\tt MaxMachinePrime}()$\; 
$L := {\tt BuildLinSysModp}(M, S, p)$\; 
$N := {\tt LinSolveModp(L, p)}$ \;
$d := \dim(N)$ \;
$\mathbf{N} :=(N)$\;
$\mathbf{P} :=(p)$\;
$i := 1$\;

\While{$i \leq n$ and  $p>2$}{

	\lIf{$d=0$}{
	\Return {\rm FAIL}\;
	}

    $N := {\tt RatRecon}(\mathbf{N},  \mathbf{P})$\;
	\lIf{$N \neq {\rm FAIL}$}{
		{\bf break}\;	
	}	
    
    $p := {\tt PrevPrime}(p)$\;
	
    $L := {\tt BuildLinSysModp}(M, S, p)$\; 

    $N := {\tt LinSolveModp(L, p)}$ \;
	\If{$d > \dim(N)$}
	{
		$d := \dim(N)$;
		$\mathbf{N} :=(N)$\;
        $\mathbf{P} :=(p)$\;
        $i := 1$\;
	}
	\ElseIf{$ d = \dim(B)$}{
         $\mathbf{N}:={\tt Append}(\mathbf{N},N)$\;
		$\mathbf{P}  := {\tt Append}(\mathbf{P},p)$\;
		$i := i+1$\;
	}
}

\lIf{$i>n$ or $p = 2$}{
	\Return {\rm FAIL}\;
}

$F := \emptyset$\;
\ForEach{row vector $\mathbf{v} \in N$}{
$F := F \, \cup \, \{ {\tt GenPoly}(T, \mathbf{v}) \} $\;
}

\lIf{$Z(E) \not \subseteq V(F)$}{
	\Return {\rm FAIL}\;
}

\ForEach{$(C_i, A_i) \in B$}{
	   \lIf{$A_i(V(F) \cap Z(C_i)) \; \not \subseteq \; V(F)$}{
	    \Return {\rm FAIL}\;
    }
}
\Return $F$\;
\end{algorithm}

%
%
%

%
%
%

In the following table, we supply experimental results
for computing absolute inductive invariants for some 
well-known programs from literature as well as some 
homemade examples  marked with a star $*$.
The first column labeled by ``\# vars'' is the number of loop variables;
the second column  labeled by ``deg'' is the total degree tried for the methods 
which use a degree bound; the third column labeled by ``PI'' is the timing 
of the our method; the fourth column labeled by ``AI'' is the timing of the
method described in~\cite{RK07b}; 
the fifth column labeled by ``IF'' is the timing of the
method described in~\cite{RK07a}; 
the sixth column labeled by ``ALIGATOR'' is the timing of the
method described in~\cite{Kov08}. 
The time unit is second;
the ``NA'' symbol in a time field means that the related method does
support the input program;  
the ``FAIL'' symbol in a time field means that the output is not ``correct''. 
All the tests were done 
using an Intel Core 2 Quad {\small CPU} 2.40{\small GHz} 
with 8.0{\small GB} memory.


\begin{table}
    \caption{Experiments on selected programs}
    \vspace{1ex}
    \begin{center}
    \begin{tabular}{||l |  c | c | c | c|c |c||}
        \hline
		prog.\footnotemark[1]	&  \# vars	& deg & PI & AI & FP &SE \\
		\hline
		cohencu	&	4		& 	3	 & 0.6  & 0.93 &  0.28 & 0.13 \\
		cohencu	&	4		& 	2	 & 0.06 & 0.76 & 0.28 & 0.13  \\    
		fermat	&	5		& 	4	 & 3.74 & 0.79 & 0.37 & 0.1  \\  
		prodbin	&	5		& 	3	 & 1.4  & 0.74 & 0.36 & 0.13 \\  
		rk07	&	6		& 	3	 & 3.1  & 2.23 & NA & 0.35  \\  
		kov08	&	3		& 	3	 & 0.2  & 0.57 & 0.22 & 0.01 \\  
		sum5	&	4		& 	5	 & 12 & 1.60 & 2.25 & 0.16\footnotemark[2]  \\  
		wensley2	&	3		& 	3	 & 0.4 & 0.84 & 0.39 & 0.21  \\  
		int-factor 	&	6		& 	3	 & 60.9 & 1.28 & 160.7 & 0.9\\
		fib(coupled) 	&	4		& 	4	 & 2.4 & 0.71 & NA & NA\\
		fib(decoupled) 	&	6		& 	4	 & 4.3 & 1.28 & 160.7 & FAIL\\
		non-inv2* &	4		& 	3	 & 1.2 & 3.83 & NA & FAIL\\
		coupled-5-1* &	4		& 	4	  & 1.1 & 9.58 & NA & NA\\
		coupled-5-2* &	5		& 	4	  & 5.38 & 15.8 & NA & NA\\
		mannadiv 	&	3		& 	3	 & 0.1 & 0.83 & NA & 0.04\\
        \hline
    \end{tabular}
    \end{center}
    \label{tab: loopinv}
\end{table}

\footnotetext[1]{For more details, see \url{http://www.csd.uwo.ca/~rong/loop_inv.tgz} for the source of all the programs.}


\footnotetext[2]{There might be a bug in the version of Aligator we are using, because the computation can not finished in 1hr
		in this test; the timing was reported by Laura Kovacs in a demo of Aligator. }

\ifCoRR
The following example shows how we can use the degree and dimension information:w to assure that we are computing
the whole invariant ideal.
\begin{Example}
\label{ex: tricky}
Consider the following recurrence relation
on $(x,y,z)$:
\begin{equation*}
\left(
\begin{array}{c}
x(n+1) \\
y(n+1) \\
z(n+1) 
\end{array}
\right)
\;
=
\;
\left(
\begin{array}{c c r}
0 & 0 & 1 \\
1 & 0 & -3 \\
0 & 1 & 3
\end{array}
\right)
{}
\,
{}
\left(
\begin{array}{c}
x(n) \\
y(n) \\
z(n) 
\end{array}
\right)
{}
\end{equation*}
with initial value $\left( x(0),y(0),z(0) \right)= (1,2,3)$. 
Denote by $M$ the coefficient matrix.
Note that the characteristic polynomial of $M$ has $1$ as a triple root
and the multiplicative relation ideal of the eigenvalues is zero-dimensional. 
So the invariant ideal of this recurrence has dimension either $0$ or $1$.
On the other hand, we can show that for all $k \in \mathbb{N}$, 
we have $M^k \neq M$;
so there are infinitely many points in the set 
$\{ (x(k), y(k), z(k)) \mid k \in \mathbb{N}\}$,\
whenever $\left( x(0),y(0),z(0) \right) \neq (0,0,0)$.

With our method, we are able to  compute the following invariant polynomials
  $$x + y + z - 6, y^2  + 4 y z + 4 z^2  - 6 y - 24 z + 20,$$
  which generate a prime ideal of dimension $1$ 
(thus the invariant ideal of this recurrence), in less than $0.25$s.
%
%
%
%
%
%
%
\end{Example}
\fi

%


\section{Concluding remarks}
%
%
In this article, we propose
a loop invariant computing method based on polynomial
interpolation. We supply a sharp
total degree bound for polynomials generating the loop invariant of 
 $P$-solvable recurrences.
We supply also sufficient conditions for 
inductive loop invariant to be trivial or non trivial.

The current implementation is for dense interpolation.
However, we observe that for loops with sparse polynomials in
the assignments, the computed invariants are often sparse too.
As future work, we will investigate suitable sparse 
interpolation techniques
for interpolating polynomial loop invariant.  

{\bf Acknowledgement.} We thank Laura Kov\'avs and Enric Rodr\'iguez-Carbonell for providing the implementations of their
invariant generation methods for our test.

\end{document}